\definecolor{linkColor}{RGB}{0, 128, 128}
\definecolor{citeColor}{RGB}{0, 112, 64}
\definecolor{urlColor}{RGB}{120, 0, 120}
\definecolor{cmdColor}{RGB}{0, 0, 0}
\theoremstyle{plain}
\newtheorem{thm}{Theorem}
\newtheorem{lem}[thm]{Lemma}
\newtheorem{cor}[thm]{Corollary}
\newtheorem{clm}[thm]{Claim}
\newtheorem{fact}[thm]{Fact}
\newtheorem{defn}[thm]{Definition}
\newcommand{\cA}{\mathcal{A}}
\newcommand{\cB}{\mathcal{B}}
\newcommand{\cF}{\mathcal{F}}
\newcommand{\cH}{\mathcal{H}}
\newcommand{\cR}{\mathcal{R}}
\newcommand{\cS}{\mathcal{S}}
\newcommand{\cV}{\mathcal{V}}
\renewcommand{\>}{\rangle}
\newcommand{\<}{\langle}
\newcommand{\Sym}{\mathbb{S}}
\renewcommand{\S}{\mathbb{S}}
\newcommand{\spA}[1]{\mathcal{A}_{#1}}
\newcommand{\spB}[1]{\mathcal{B}_{#1}}
\DeclareMathOperator{\tr}{\mathrm{Tr}}
\DeclareMathOperator{\Tr}{\mathrm{Tr}}
\DeclareMathOperator{\spn}{\mathrm{span}}
\DeclareMathOperator{\sgn}{\mathrm{sgn}}
\newcommand{\Piooo}[3]{\overset{#1}{\Pi}{}^{#2}_{#3}}
\newcommand{\Xiooo}[3]{\overset{#1}{\Xi}{}^{#2}_{#3}}
\newcommand{\Hooo}[3]{\overset{#1}{\mathcal{H}}{}^{#2}_{#3}}
\newcommand{\MOp}{\Phi}
\newcommand{\Func}{{\textcolor{cmdColor}{F}}} 
\newcommand{\Dom}{{\textcolor{cmdColor}{D}}} 
\newcommand{\Ran}{{\textcolor{cmdColor}{R}}} 
\newcommand{\vn}{{\textcolor{cmdColor}{v}}} 
\newcommand{\regA}{\mathsf{A}}
\newcommand{\regF}{\mathsf{F}}
\newcommand{\regI}{\mathsf{I}}
\newcommand{\regO}{\mathsf{O}}
\newcommand{\regR}{\mathsf{R}}
\newcommand{\regW}{\mathsf{W}}
\newcommand{\regf}{\mathsf{f}}
\newcommand{\Fzero}{{\widehat{0}}} 
\newcommand{\FE}{{\widehat{E}}} 
\newcommand{\Erest}{{E_{1\hspace{-.5pt}\hat1}}} 
\newcommand{\vecr}{{r_0}} 
\newcommand{\ketPerp}{|\!\!\perp\>} 
\newcommand{\OO}{\mathrm{O}}
\newcommand{\labG}{{\mathrm{high}}}
\newcommand{\labB}{{\mathrm{low}}}
\newcommand{\hi}{{\mathrm{high}}}
\newcommand{\lo}{{\mathrm{low}}}
\newcommand{\yover}{{\textcolor{cmdColor}{\succ}}}
\newcommand{\yunder}{{\textcolor{cmdColor}{\prec}}}
\newcommand{\lightPart}[1]{{\textcolor{cmdColor}{Y_{#1}}}}
\newcommand{\zero}{{\textcolor{cmdColor}{0}}} 
\newcommand{\tabl}{{\mathfrak{T}}} 
\newcommand{\fxd}{{\textcolor{cmdColor}{\mathfrak{F}}}} 
\newcommand{\dm}[1]{{\textcolor{cmdColor}{d_{#1}}}} 
\newcommand{\Spe}[1]{{\textcolor{cmdColor}{\cS}^{#1}}} 
\newcommand\parA{\lambda}
\newcommand\parB{\mu}
\newcommand\parC{\nu}
\newcommand\parD{\zeta}
\newcommand\parE{\eta}
\newcommand\parF{\theta}
\newcommand\parEb{{\bar\parE}}
\newcommand{\Res}{\mathop{\mathrm{Res}}}
\title{Tight Bounds for Inverting Permutations\\ via Compressed Oracle Arguments}
\author{
Ansis Rosmanis\thanks{E-mail: \texttt{rosmanis@math.nagoya-u.ac.jp}}\\ [.5ex]
\normalsize  Graduate School of Mathematics \\ 
\normalsize Nagoya University
}
\date{January 20, 2022}
\begin{document}

\maketitle

\begin{abstract}
In his seminal work on recording quantum queries [Crypto 2019], Zhandry studied interactions between quantum query algorithms and the quantum oracle corresponding to random functions.
Zhandry presented a framework for interpreting various states in the quantum space of the oracle as databases of the knowledge acquired by the algorithm and used that interpretation to provide security proofs in post-quantum cryptography.

In this paper, we introduce a similar interpretation for the case when the oracle corresponds to random permutations instead of random functions.
 Because both random functions and random permutations are highly significant in security proofs, we hope that the present framework will find applications in quantum cryptography.
Additionally, we show how this framework can be used to prove that the success probability for a $\kappa$-query quantum algorithm that attempts to invert a random $N$-element permutation is at most $\OO(\kappa^2/N)$. 
\end{abstract}

\section{Introduction}

The computational model of quantum query algorithms plays an important role in security analysis of various cryptographic protocols.
To prove security of such protocols, one often has to show that a given quantum query problem is difficult to solve.
Quantitatively, in cryptography, one is typically interested in providing an upper bound on the success probability of any algorithm solving the problem as a function of the number of queries made by the algorithm.
This is closely related to the quantum query complexity of the problem, where one is interested in the required number of queries to solve the problem with probability close to $1$.

In this paper, our main problem of interest is the problem of inverting a random $N$-element permutation.
For this problem, suppose there is a known set of $N$ elements with a special element $\zero$ among them%
\footnote{For example, if $N$ is a power of two, one may consider the set of all $\log_2\!N$-bit strings with $\zero$ being the all-zeros string.}
and there is a permutation $f$ of these elements that is initially completely unknown to the algorithm,
yet completely known to an oracle. 
The algorithm can query the oracle and ask, for any element $x$, what is $f(x)$.
The task of the algorithm is to evaluate the inverse of $f$ on the special element $\zero$, namely, to return the unique preimage of $\zero$ under $f$.

The complexity of inverting permutations is well understood for about two decades. On one hand, $\kappa$ iterations of Grover's algorithm~\cite{grover:search}, each iteration requiring two calls to the oracle, find $f^{-1}(\zero)$ with success probability $\Omega(\kappa^2/N)$.
On the other hand, in the paper that pioneered the quantum adversary bound~\cite{ambainis:adversary}, Ambainis showed that inverting an $N$-element permutation with success probability $\Omega(1)$ requires $\Omega(\sqrt{N})$ calls to the oracle.
Consequently, because of the amplitude amplification~\cite{AmplHB,Grover1998}, if an algorithm makes $\kappa$ oracle calls, then its success probability is at most $\OO(\kappa^2/N)$.

In this paper, we reprove the tight $\OO(\kappa^2/N)$ upper bound on the success probability using an approach that closely relates to Zhandry's compressed oracle technique~\cite{zhandry:record}.
Zhandry analyzed the complexity of various query problems with a random oracle by, first, observing that one can assume that the oracle, instead of being uniformly random, is initialized in the uniform superposition.
He then showed that the algorithm cannot succeed unless its execution establishes a certain type of entanglement between quantum registers registers of the algorithm and those of the oracle.
In other words, by interacting with the algorithm, the reduced quantum state of the oracle has to become mixed and to have a large overlap on a certain subspace.

For proving lower bounds on quantum query complexity, this approach is not new.
The observation that a successful algorithm must establish entanglement of certain form and amount with the oracle registers is, explicitly or implicitly, the basis of the adversary method, its original form~\cite{ambainis:adversary}, its various generalizations (e.g., \cite{spalek:mult,amrr:index,lee:stateConversion,belovs2015variations}),  and its adaptations for  specific problems (e.g., see \cite{ambainis:newLowerBoundMethod}).
What is new to Zhandry's work and many of its follow-up works (see~\cite{4RoundLuby19,tcc-2019-29987,Czajkowski2019}) is that they successfully use this approach to provide security proofs in post-quantum cryptography.

Zhandry's original work addressed random oracles that for each $x$ assign the value $f(x)$ uniformly and independently at random.
Hamoudi and Magniez generalized this to the case where each $f(x)$ is still assigned independently, but not necessarily uniformly, at random~\cite{Hamoudi2020}.
However, if the oracle implements a random permutation, there cannot be such an independence, as one must have $f(x)\ne f(x')$ whenever $x\ne x'$. We overcome this obstacle by first analyzing the unstructured search problem along the same lines as by Zhandry, then introducing a fine-grain version of that analysis, and finally adapting it for the problem of inverting permutations.

Regarding the problem of inverting permutations, note that the memory of a quantum oracle that has full knowledge of an $N$-element permutation naturally corresponds to $N!$-dimensional Euclidean space. 
We decompose this space as a direct sum of about $2N$ mutually orthogonal subspaces, where each subspace essentially describes how many values $f(x)$ the algorithm has learnt and whether $\zero$ is one of them.
Our proof proceeds by showing that, in order for the algorithm to succeed, the reduced state of the oracle must have a large overlap with spaces where $\zero$ is one of the learnt values, yet a single query cannot increase this overlap by much.
While our bound $\OO(\kappa^2/N)$ on the success probability is not new, we hope that it may lead to quantum security proofs for various cryptographic constructions concerning random permutations. 

One potential application is providing a proof for the post-quantum security of the sponge construction~\cite{sponge07}.
In particular, a recent draft \cite{unruh:perm} by Unruh introduced some ideas that might lead to a proof of the collision-resistance of the sponge construction and that requires analysis of the query complexity of inverting permutations along the lines of Zhandry's compressed oracle technique.
While results required by Unruh's proposal are stronger than what we prove here (for example, we do not address ``two-sided zero search'' considered by Unruh), we hope that the present analysis will contribute towards establishing such a security proof.
We also note that a recent work~\cite{cryptoeprint:2021:192} by Czajkowski claims to prove quantum indifferentiability of the sponge construction.

It is known that the random oracle model implies the ideal cipher model in the classical setting~\cite{cryptoeprint:2008:246}. 
Another potential application would be to prove that the same holds in the quantum setting.

\paragraph{Organization of the paper.}

The paper is organized as follows.
In Section~\ref{sec:queryModel} we describe the model of quantum query algorithms and describe how to conveniently model their interaction with a random oracle. In Section~\ref{sec:dbSubspaces}, we describe how certain states of the quantum system of the oracle can be interpreted as the algorithm having learnt certain information about the permutations. We provide a formal backing for this interpretation by using it to state the main technical theorem of the paper, Theorem~\ref{thm:mainThm}, whose direct corollary are the tight bounds on the success probability for the problems of Inverting Permutations and Unstructured Search. We provide the framework for the proof of Theorem~\ref{thm:mainThm} in Section~\ref{sec:MainTmFrame}.

In Section~\ref{sec:ZhandryApproach}, we further elaborate on the connection between the present work and techniques used in \cite{zhandry:record}. We also conclude the proof of Theorem~\ref{thm:mainThm} for Unstructured Search.
The proof of Theorem~\ref{thm:mainThm} for Inverting Permutations uses the representation theory of the symmetric group.
Section~\ref{sec:RepPrelim} presents preliminaries of the representation theory used in subsequent sections.
In Section~\ref{sec:ABviaRepTh}, we show how to describe various subspaces of the database system via representation theory.
Finally, in Section~\ref{sec:MainTmProof}, we conclude the proof of Theorem~\ref{thm:mainThm}.

\section{Quantum Query Model}
\label{sec:queryModel}

Let $f\colon\Dom\rightarrow\Ran$ be a function of domain $\Dom$ and range $\Ran$, and suppose we have designated a special element $\zero\in\Ran$. Let $M:=|\Dom|$ be the size of the domain and $N:=|\Ran|$ be the size of the range. We assume that $N$ is a power of $2$ and $\Ran$ is the set of all $\log_2\!N$-bit strings.

We will consider $f$ being either a random function or a random permutation. In the permutation case,  we have $M=N$. While for permutations we also have $\Dom=\Ran$, it will often be convenient to differentiate whether we are considering the domain or the range of the permutation.
Hence, in the permutation case, we only need to assume that $\Dom$ and $\Ran$ are of the same size $N$ and that $f$ is a bijection. 

We generally use $x,x',x_i$ et cetera to refer to elements of $\Dom$ and $y,y',y_i$ et cetera to refer to elements of $\Ran$.
When we are summing over variables denoted by $x$, these variables take all values in $\Dom$. For example, we use notation $\sum_x$ instead of $\sum_{x\in\Dom}$. When such a variable is not fixed in an expression (for example, as $x$ in the rightmost norm in (\ref{eq:NormSimpl})), any value in $\Dom$ can be considered.
We do the same for variables $y$ in $\Ran$.

\subsection{Generic Quantum Query Algorithm}

Suppose a quantum algorithm is given an oracle access to $f$.
 More precisely, the algorithm has an access to a unitary transformation $O_f$ that on a given input $x$ evaluates the function $f$, namely,
\[
O_f:=\sum_{x,y'}|x\>\<x|_\regI\otimes |y'\oplus f(x)\>\<y'|_\regO,
\]
where $\oplus$ denotes the bitwise addition modulo $2$; see Figure~\ref{fig:oracle}.
The subscripts $\regI$ and $\regO$ stand for the \emph{input register} and the \emph{output register}, respectively.
We may omit these and similar subscripts when the relevant registers are clear from the context.

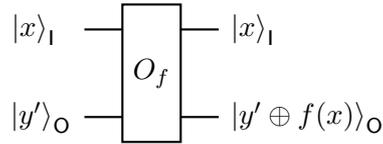
\begin{figure}[ht!]
	\centering
\begin{tikzcd}
\lstick{$\ket{x}_{\regI}$\hspace{6.2pt}} & \gate[wires=2]{O_f} &  \qw\rstick{$\ket{x}_{\regI}$} \\
\lstick{$\ket{y'}_{\regO}$\hspace{0.4pt}} & & \qw\rstick{$\ket{y'\oplus f(x)}_{\regO}$} \\
\end{tikzcd}
\caption{\small The circuit diagram of the oracle.}
\label{fig:oracle}
\end{figure}

A query algorithm, in addition to the input and output registers, has two additional registers, the \emph{workspace register} and the \emph{result register}, denoted $\regW$ and $\regR$, respectively. 
When we analyze the action of the oracle, we group $\regW$ and $\regR$ together in a larger workspace register $\regW'$; 
when we analyze the state before the final measurement, we group $\regI$, $\regO$, and $\regW$ together in a larger workspace register $\regW''$; see Figure~\ref{fig:algoCircuit}.
We may also group together all registers of the algorithm, $\regI$, $\regO$, $\regW$, and $\regR$, in a single register $\regA$.

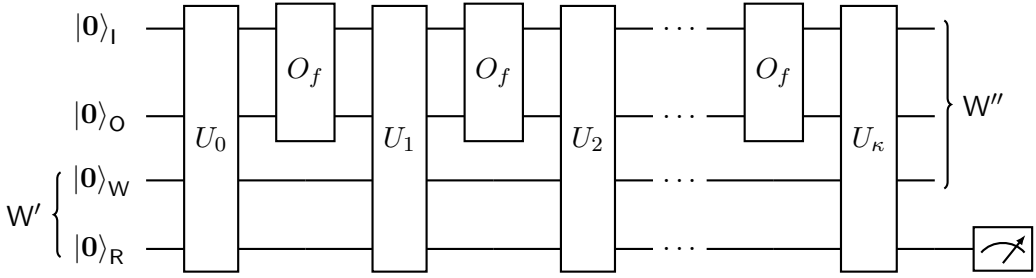
\begin{figure}[ht!]
	\centering
\begin{tikzcd}
&&
\lstick{$\ket{\mathbf{0}}_{\regI}$\hspace{5.9pt}}
& \gate[wires=4]{U_0} & \gate[wires=2]{O_f} & \gate[wires=4]{U_1} & \gate[wires=2]{O_f}& \gate[wires=4]{U_2} &\ \ldots\ \qw & \gate[wires=2]{O_f} & \gate[wires=4]{U_\kappa} & \qw \rstick[wires=3]{$\regW''$}
\\ 
&&\lstick{$\ket{\mathbf{0}}_{\regO}\hspace{2.5pt}$} 
& & & & & &\ \ldots\ \qw & & & \qw 
\\ 
\lstick[wires=2]{$\regW'$}&&
 \lstick{$\ket{\mathbf{0}}_{\regW}$\hspace{0.9pt}} 
& & \qw & & \qw & &\ \ldots\ \qw & \qw & & \qw
\\  
&&
\lstick{$\ket{\mathbf{0}}_{\regR}$\hspace{3.2pt}} 
&& \qw & & \qw & &\ \ldots\ \qw & \qw &  & \qw & \meter{}
\end{tikzcd}
\caption{\small The circuit diagram of a generic quantum query algorithm.}
\label{fig:algoCircuit}
\end{figure}

The algorithm starts the computation in a state that is independent from the function, and, without loss of generality, let all its qubits be initialized to $0$.
Let $|\mathbf{0}\>$ denote the state of any number of qubits that are all set to zero.
Given this, the algorithm is completely characterized by function-independent unitaries $U_0,U_1,\ldots,U_\kappa$ acting on all registers.%
\footnote{
One may also assume that all $U_0,U_1,\ldots,U_\kappa$ are equal because the workspace register can keep a clock subregister that controls which action to perform.
}
Then the algorithm alternates between applying unitary $U_k$ and query $O_f\otimes I_{\regW'}$,
where $k$ takes values $0,1,2,\ldots,\kappa-1$ in that order and where $I_{\mathsf{reg}}$ denotes the identity operator acting on a register $\mathsf{reg}$.
Finally, the algorithm applies the unitary $U_\kappa$, measures the result register (w.l.o.g., in the standard basis), and returns the measurement result.
We may also sometimes ignore the final measurement, still referring to such quantum process as an algorithm.
The number $\kappa$ is the total number of oracle calls, and it is called the query complexity of the algorithm.

We say that the algorithm succeeds if the result it returns is a correct solution for the problem of our interest on the function $f$.
Both for the Inverting Permutations and Unstructured Search problems, the correct result is $x\in\Dom$ such that $f(x)=\zero$.
The probability with which the algorithm returns the correct answer is called the \emph{success probability},  $p_{succ}(f)$, and it may depend on $f$.

We will consider the case when $f$ is chosen according to some probability distribution $(\delta_f)_{f\in F}$ from some set $F$.
In that case, we may call $p_{succ}=\sum_{f\in F}\delta_f p_{succ}(f)$ the \emph{overall} success probability.
The overall success probability depends on the probabilistic nature of quantum computing plus the random choice of $f\in F$.

\subsection{Introducing Function Register}
\label{sec:FuncReg}

For purposes of proving hardness, we may think of the algorithm as a player that, without initially knowing anything about $f$ aside from the fact that it belongs to $F$, interacts with another player that completely knows the function $f$.
We simply refer to this other player as the \emph{oracle}.%
\footnote{
In the adversary method, this player is commonly called the adversary, because its task is to thwart the progress of the algorithm.
Here, however, we avoid using the name ``adversary'', as in the cryptographic settings, the adversary refers to the other player, that is, the algorithm itself.
}
Thus we use the word oracle to mean both the oracle call, namely, the unitary transformation $O_f$, as well as the player in the interaction; the meaning should be clear from the context.

In the complete quantum system of the two players, let us denote the register held by the oracle as $\regF$, and we may refer to it both as the \emph{function register} and as the \emph{oracle register}.
This register corresponds to the complex Euclidean space $\mathcal{F}:=\mathbb{C}^{F}$, with an orthonormal basis $\{|f\>_\regF\colon f\in F\}$.

In this algorithm-oracle interaction, we would like that, if $\regF$ is initialized to $|f\>$, the algorithm performs exactly as if it had access to the oracle call $O_f$.
Furthermore, if $\regF$ is initialized to a superposition $\sum_{f\in F}\sqrt{\delta_f}\ket{f}_{\regF}$,
we would like that the algorithm acts as if it was given a random $f\in F$ with probability $\delta_f$. 
For that reason, let us first define the \emph{extended} oracle acting on registers $\regF,\regI,\regO$ as
\[
O:=\sum_{f\in F}|f\>\<f|_\regF\otimes(O_f)_{\regI\regO}.
\]
We may also still refer to the extended oracle simply as the oracle. 
Similarly to the original scenario, where the computation alternated between applications of $U_k$ and $O_f\otimes I_{\regW'}$, now the computation alternates between applications of $I_\regF\otimes U_k$ and $O\otimes I_{\regW'}$; see Figure~\ref{fig:interaction}.
Finally, both registers $\regF$ and $\regR$ are measured, obtaining some function $f$ and some result $r$, and we say that the algorithm-oracle interaction succeeded if $r$ is a correct solution on the function $f$. 
Note that $\regF$ measures to $|f\>$ with probability $\delta_f$. 
We sometimes refer to the combined algorithm-oracle system as the \emph{overall} system.

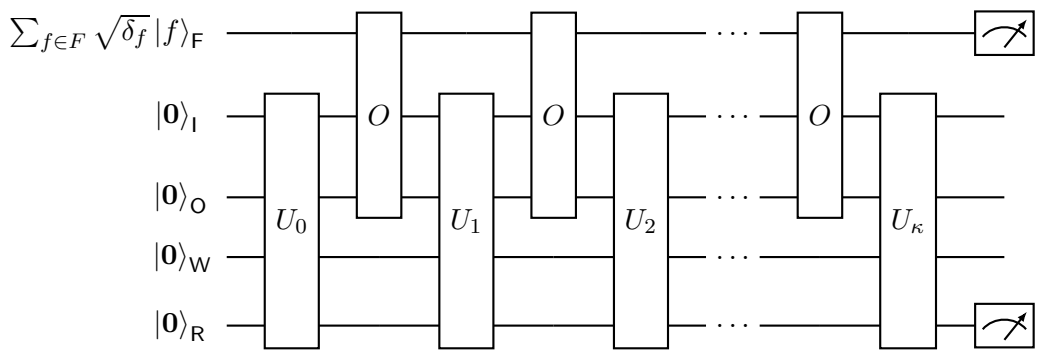
\begin{figure}[ht!]
	\centering
\begin{tikzcd}
 \lstick{$\sum_{f\in F}\sqrt{\delta_f}\ket{f}_{\regF}\hspace{3.5pt}$}
  & \qw & \gate[wires=3]{O} &\qw & \gate[wires=3]{O} & \qw &\ \ldots\ \qw & \gate[wires=3]{O} & \qw
  & \meter{}
\\ 
 \lstick{$\ket{\mathbf{0}}_{\regI}$\hspace{5.9pt}}
& \gate[wires=4]{U_0} & & \gate[wires=4]{U_1} & & \gate[wires=4]{U_2} &\ \ldots\ \qw & & \gate[wires=4]{U_\kappa} & \qw 
\\ 
\lstick{$\ket{\mathbf{0}}_{\regO}\hspace{2.5pt}$} 
& & & & & &\ \ldots\ \qw & & & \qw 
\\ 
 \lstick{$\ket{\mathbf{0}}_{\regW}$\hspace{0.9pt}} 
& & \qw & & \qw & &\ \ldots\ \qw & \qw & & \qw
\\  
 \lstick{$\ket{\mathbf{0}}_{\regR}$\hspace{3.2pt}} 
& & \qw & & \qw & &\ \ldots\ \qw & \qw &  
 & \meter{}
\end{tikzcd}
\caption{\small The circuit diagram after the function register has been introduced.}
\label{fig:interaction}
\end{figure}

\subsection{Unstructured Search and Inverting Permutations}

In the Unstructured Search problem, $\Func$ is the set of all $M^N$ functions $f\colon\Dom\rightarrow\Ran$. In the Inverting Permutations problem, $M=N$, $\Dom=\Ran$, and $\Func$ is the set of all $N!$ permutations $f\colon\Dom\rightarrow\Ran$. For both problems, we are given an access to one $f\in\Func$ chosen uniformly at random, and the task of the algorithm is to find $x$ such that $f(x)=\zero$. Note that, while for Inverting Permutations there is exactly one correct solution, for Unstructured Search, depending on $f$, there can be any number of solutions, including none.

 Many statements that we make apply for both problems. When we want to address a particular problem, for sake of brevity, we may call the Inverting Permutations problem the \emph{permutation case} and the Unstructured Search problem the \emph{function case}. Or, even more briefly, we may say that some statement holds \emph{for permutations} or \emph{for functions} (e.g., see Theorem~\ref{thm:mainThm}).
 
Since we are given each function $f$ with equal probability $\delta_f=1/|\Func|$,
the function register is initialized to
\[
|\vn_\emptyset\>:=\frac{1}{|\Func|}\sum_{f\in\Func}|f\>.
\]
The choice for this notation will become clear in Section~\ref{sec:dbSubspaces}. For now, let us mention that $\emptyset$ in $|\vn_\emptyset\>$ indicates that initially the algorithm has not yet learnt anything about the function it is interacting with.

\begin{defn}
\label{def:PhiPsi}
Define $|\phi_0\>:=\ket{\vn_\emptyset}\otimes|\mathbf{0}\>_\regA$ and, recursively,  $|\psi_{k}\>:=(I_\regF\otimes U_{k})|\phi_{k}\>$ for $k=0,1,\ldots,\kappa$ and $|\phi_{k}\>:=(O\otimes I_{\regW'})|\psi_{k-1}\>$ for $k=1,\ldots,\kappa$.
\end{defn}

Hence, $|\phi_0\>$ is the initial state of the overall system, $|\psi_\kappa\>$ is the final state of the system, that is, the state before the measurement, and, for the other values of $k$, $|\psi_{k-1}\>$ and $|\phi_{k}\>$ are, respectively, the states right before and right after $k$-th oracle call.

For every pair $(x,y)\in\Dom\times\Ran$, let 
\[
\Xi_x^y := \sum_{f\colon f(x)=y}|f\>\<f|
\]
be the orthogonal projector acting on the space $\mathcal{F}$ that projects on the subspace corresponding to all $f\in\Func$ such that $f(x)=y$. Note that, for every $x$, we have $\sum_{y\in\Ran} \Xi_x^y=I_\regF$.%
\footnote{For permutations, we also have $\sum_{x\in\Dom}\Xi_x^y=I_\regF$ for every $y$. This, however, does not hold for functions.}

It is convenient to rewrite the extended oracle call as
\begin{equation}
\label{eq:ODecomp}
O = \sum_{x,y,y'}\Xi^y_x \otimes |x\>\<x|\otimes |y'\oplus y\>\<y'|.
\end{equation}
For both problems, the result register is $N$-dimensional. Thus, the orthogonal projector that projects on all correct function-result pairs can be written as 
\begin{equation}
\label{eq:PDecomp}
P:=\sum_{x\in\Dom}\Xi^\zero_x\otimes I_{\regW''}\otimes|x\>\<x|_\regR.
\end{equation}
In this notation, the overall success probability is $\|P|\psi_\kappa\>\|^2$.

\subsection{Additional preliminaries}

The \emph{support} of an Hermitian operator is the space spanned by its eigenvectors corresponding to non-zero eigenvalues. We say that a Hermitian operator is \emph{supported} on a space $\cH$ if its support is a subspace of $\cH$ (or the space $\cH$ itself).

Throughout the paper, we follow the convention that, if a subspace (mostly of the function space $\cF$) is denoted by $\cH$  with some subscripts, superscripts, and overscripts, then $\Pi$ with these subscripts, superscripts, and overscripts denote the orthogonal projector on that subspace.
For example, $\Pi^\hi$ and $\overline\Pi^\lo_k$ denote orthogonal projectors on $\mathcal{H}^\hi$ and $\overline{\mathcal{H}}^\lo_k$, respectively (we introduce these subspaces of $\cF$ in Section~\ref{sec:dbSubspaces}).

Suppose $\mathcal{H}$ is a subspace of $\mathcal{F}$ and $\Pi$ is the orthogonal projector on $\cH$.
Given a state $|\zeta\>$ of the overall system, we say that $|\zeta\>$ has \emph{amplitude} $\|(\Pi\otimes I_\regA)|\zeta\>\|$ on $\mathcal{H}$ and \emph{weight}
\begin{equation}
\label{eqn:weightOnH}
\big\|(\Pi\otimes I_\regA)|\zeta\>\big\|^2
=
\tr\Big[\Pi\tr_\regA\big[|\zeta\>\<\zeta|\big]\Big]
\end{equation}
 on $\mathcal{H}$.
 Similarly, given a state $|u\>\in\cF$, we say that $|u\>$ has amplitude $\|\Pi|u\>\|$ and weight $\|\Pi|u\>\|^2$ on $\cH$.
 Although these two quantities, amplitude and weight, are quadratically related, on one hand side, it may be more natural to think in terms of weight when we want to think of probability,
 while, on the other hand, thinking of amplitudes may be more convenient when we want to see how these quantities change upon calls to the extended oracle.
 We may also informally refer to both the weight and the amplitude as the \emph{overlap}.

Given an operator $\Phi$ that acts on some subset of the registers of the overall system, let $\widetilde\Phi$ denote its extension to the whole overall system where it acts as the identity on the other registers. For example,
$\widetilde\Pi^\hi := \Pi^\hi \otimes I_\regA$ and $\widetilde{O}:=O\otimes I_{\regW'}$.

In the paper, we use several properties of the spectral norm, mostly without explicitly referring to them.
Let us list some of them here. Note that Property~5 is a direct consequence of Property~4.
\begin{fact}
\label{fac:norms}
For matrices $\Phi$ and $\Psi$ of appropriate dimensions,
\begin{enumerate}
\item $\|\Phi\|=\|\Phi\Phi^*\|^{1/2}$;
\item $\|\Phi\Psi\|\le\|\Phi\|\|\Psi\|$;
\item if $\Phi$ and $\Psi$ have orthogonal images, then $\|\Phi\|\le \|\Phi+\Psi\|$;
\item if $\Phi$ and $\Psi$ have orthogonal images and coimages, then $\|\Phi+\Psi\|=\max\{\|\Phi\|,\|\Psi\|\}$;
\item if $\Phi$ is square and $\{\Pi_i\}_i$ is a collection of orthogonal projectors such that $\Phi$  and $\Pi_i$ commute for all $i$ and $\sum_i\Pi_i=I$, then $\|\Phi\|=\max_i\|\Phi\Pi_i\|$.
\end{enumerate}
\end{fact}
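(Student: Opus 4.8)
The plan is to establish the five items in order. The first three are elementary, Property~4 carries the real content, and Property~5 is then a direct consequence of Property~4, as the statement advertises.

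For Property~1 I would invoke the singular value decomposition $\Phi=U\Sigma V^{*}$, so that $\Phi\Phi^{*}=U\Sigma^{2}U^{*}$ is positive semidefinite with eigenvalues equal to the squared singular values of $\Phi$; since $\|\Phi\|$ is the largest singular value of $\Phi$ and the spectral norm of a positive semidefinite matrix equals its largest eigenvalue, taking square roots gives $\|\Phi\|=\|\Phi\Phi^{*}\|^{1/2}$. Property~2 is the usual submultiplicativity: for a unit vector $v$ one has $\|\Phi\Psi v\|\le\|\Phi\|\,\|\Psi v\|\le\|\Phi\|\,\|\Psi\|$, and taking the supremum over unit $v$ gives the claim. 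For Property~3, ``orthogonal images'' means $\Phi v\perp\Psi v$ for every $v$, hence $\|(\Phi+\Psi)v\|^{2}=\|\Phi v\|^{2}+\|\Psi v\|^{2}\ge\|\Phi v\|^{2}$, and taking suprema over unit $v$ yields $\|\Phi+\Psi\|\ge\|\Phi\|$.

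For Property~4, the inequality $\|\Phi+\Psi\|\ge\max\{\|\Phi\|,\|\Psi\|\}$ is just Property~3 applied once to $\Phi$ and, since $\Phi+\Psi=\Psi+\Phi$, once to $\Psi$. For the reverse inequality, let $P_\Phi$ and $P_\Psi$ be the orthogonal projectors onto the coimages $(\ker\Phi)^{\perp}$ and $(\ker\Psi)^{\perp}$ of $\Phi$ and $\Psi$; the hypothesis says these subspaces are orthogonal, so $P_\Phi+P_\Psi$ is again an orthogonal projector, and $\Phi=\Phi P_\Phi$, $\Psi=\Psi P_\Psi$ because each map annihilates its kernel. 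Then for a unit vector $v$, writing $M:=\max\{\|\Phi\|^{2},\|\Psi\|^{2}\}$ and using orthogonality of the images,
\[
\|(\Phi+\Psi)v\|^{2}=\|\Phi P_\Phi v\|^{2}+\|\Psi P_\Psi v\|^{2}\le M\big(\|P_\Phi v\|^{2}+\|P_\Psi v\|^{2}\big)\le M,
\]
the last step because $\|P_\Phi v\|^{2}+\|P_\Psi v\|^{2}=\|(P_\Phi+P_\Psi)v\|^{2}\le\|v\|^{2}=1$. A supremum over unit $v$ completes Property~4.

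Finally, Property~5 follows by applying Property~4 repeatedly (a straightforward induction; the collection $\{\Pi_i\}$ is necessarily finite in a finite-dimensional space): since $\Phi$ commutes with $\Pi_i$ we have $\Phi\Pi_i=\Pi_i\Phi\Pi_i$, so $\img(\Phi\Pi_i)\subseteq\img\Pi_i$, while $(\ker\Phi\Pi_i)^{\perp}\subseteq(\ker\Pi_i)^{\perp}=\img\Pi_i$ because $\ker\Pi_i\subseteq\ker\Phi\Pi_i$; since the subspaces $\img\Pi_i$ are mutually orthogonal, the summands in $\Phi=\Phi\sum_i\Pi_i=\sum_i\Phi\Pi_i$ have pairwise orthogonal images and coimages, and Property~4 gives $\|\Phi\|=\max_i\|\Phi\Pi_i\|$. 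I do not anticipate a genuine obstacle anywhere; the step deserving a moment's care is the upper bound in Property~4, namely realizing that compressing $\Phi$ and $\Psi$ onto their coimage projectors, together with the fact that an orthogonal sum of orthogonal projectors is again a projector, is exactly what lets one exploit the coimage-orthogonality hypothesis.
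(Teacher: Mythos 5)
The paper does not prove Fact~\ref{fac:norms}; it lists these spectral-norm properties as standard background, with the single remark that Property~5 is a consequence of Property~4. So there is no proof in the paper to compare against, and your write-up is being judged on its own merits.

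Your proof is correct. Properties~1--3 are handled exactly as one would expect (SVD, submultiplicativity, Pythagoras), and the argument for Property~4 is the right one: the lower bound comes from Property~3 twice, and for the upper bound you compress onto the coimage projectors $P_\Phi, P_\Psi$, use orthogonality of images for the Pythagorean splitting and orthogonality of coimages for the key bound $\|P_\Phi v\|^2+\|P_\Psi v\|^2=\|(P_\Phi+P_\Psi)v\|^2\le 1$. Your derivation of Property~5 from Property~4 matches the paper's stated intent, and the observations that $\Phi\Pi_i=\Pi_i\Phi\Pi_i$ controls the image while $\ker\Pi_i\subseteq\ker(\Phi\Pi_i)$ controls the coimage are exactly the right ingredients. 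One minor point you glide over: you invoke that the subspaces $\img\Pi_i$ are mutually orthogonal, which is true but is a (small) lemma in its own right --- if $\{\Pi_i\}$ are orthogonal projectors with $\sum_i\Pi_i=I$, then for a unit $v\in\img\Pi_j$ one has $1=\sum_i\langle v,\Pi_i v\rangle=\sum_i\|\Pi_i v\|^2\ge\|\Pi_j v\|^2=1$, forcing $\Pi_i v=0$ for $i\ne j$. Worth a sentence, but the omission does not affect correctness.
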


\section{Oracle Subspaces: Interpretation and Application}
\label{sec:dbSubspaces}

Recall that the function register $\regF$ corresponds to $|\Func|$-dimensional complex Euclidean space $\mathcal{F}$, with $\{|f\>\colon f\in\Func\}$ as the standard orthonormal basis. In this section, we consider various subspaces of $\cF$.

Throughout the execution of the algorithm, the reduced density operator of the oracle evolves, and its overlap on certain subspaces allows us to measure the progress of the algorithm. More precisely, we will decompose $\cF$ as a direct sum of subspaces $\cH^\hi\subset \cF$ and $\cH^\lo\subset \cF$ that essentially correspond to the situation, respectively, where the algorithm has or has not learnt a preimage of $\zero$. We further decompose $\cH^\hi$ and $\cH^\lo$ as direct sums of subspaces $\overline\cH_k^\hi\subset\cH^\hi$ and $\overline\cH_k^\lo\subset \cH^\lo$ that essentially  correspond to the situation when the algorithm has learnt the output of the function on $k$ inputs. The superscirpts ``high'' and ``low'' hint to the success probability of the algorithm. 

We start in Section~\ref{sec:hlssDef} with formal definitions of all the relevant spaces. Then, in Section~\ref{ssec:mainThm}, we give a brief intuition on how to interpret the situation where one has large weigh to those spaces and we state our main technical theorem, which aligns with that interpretation. The corollary of the theorem is the tight upper bound on the success probability. Then, once we know that our interpretation of spaces is useful, in Section~\ref{ssec:solFound} we provide further backing to our interpretation by discussing certain simple algorithms, in particular focusing on the permutation case.

\subsection{Definitions of High and Low Success Subspaces}
\label{sec:hlssDef}

Given $k$ distinct elements $x_1,\ldots,x_k\in\Dom$, we call a function
\[
\alpha\colon\{x_1,\ldots,x_k\}\rightarrow\Ran\colon x_i\mapsto y_i
\]
an \emph{assignment} of weight $|\alpha|:=k$. In the permutation case, we only consider injective assignments. We denote the ``empty assignment'' of weight $0$ by $\emptyset$. We use both $|v_\alpha\>$ and $|\vn_{x_1,\ldots,x_k}^{y_1,\ldots,y_k}\>$ to denote the uniform superposition over all functions $f$ that agree with $\alpha$ on all $x\in\mathrm{dom}\,\alpha$, that is, $f(x_i)=y_i=\alpha(x_i)$ for all $i\in\{1,\ldots,k\}$.%
\footnote{This vector in its essence is similar to Zhandry's database state $|\{(x_1,y_1),\ldots,(x_n,y_n)\}\>$ in \cite{zhandry:record} and it equals Unruh's $|\alpha\>^{\mathfrak{f}}$ and $|\alpha\>^{\mathfrak{p}}$ in \cite{unruh:perm}.}
 Note that the unit vector $|v_\alpha\>$ is a superposition is over $N^{M-k}$ functions in the function case and over $(N-k)!$ permutations in the permutation case.
 
We define
\[
\spA{k}:=\spn\big\{|\vn_\alpha\>\colon |\alpha|=k \big\}
\qquad\text{and}\qquad
\spB{k}:=\spn\{|\vn_\alpha\>\colon |\alpha|=k \And \zero \in \mathrm{im}\,\alpha \big\}.
\]
We clearly have $\spB{k}\subseteq\spA{k}$.
Also note that
\[
 |\vn_{x_1,\ldots,x_{k-1}}^{y_1,\ldots,y_{k-1}}\>=\sqrt{N}
 \sum_{y'\in\Ran}
 |\vn_{x_1,\ldots,x_{k-1},x'}^{y_1,\ldots,y_{k-1},y'}\>
\]
for functions and
\[
 |\vn_{x_1,\ldots,x_{k-1}}^{y_1,\ldots,y_{k-1}}\>=\sqrt{N-k}
 \hspace{-15pt}
 \sum_{y'\in\Ran\setminus \{y_1,\ldots,y_{k-1}\}}
 \hspace{-20pt}
 |\vn_{x_1,\ldots,x_{k-1},x'}^{y_1,\ldots,y_{k-1},y'}\>
 =\sqrt{N-k}
 \hspace{-15pt}
 \sum_{x''\in\Dom\setminus \{x_1,\ldots,x_{k-1}\}}
 \hspace{-20pt}
 |\vn_{x_1,\ldots,x_{k-1},x''}^{y_1,\ldots,y_{k-1},y''}\>
\]
for permutations, where $x'$ and $y''$ are arbitrary elements of $\Dom\setminus \{x_1,\ldots,x_{k-1}\}$ and $\Ran\setminus \{y_1,\ldots,y_{k-1}\}$, respectively. Therefore we have $\spA{k-1}\subseteq\spA{k}$ and, in the permutation case, by taking $y''=\zero$, we also have $\spA{k-1}\subseteq\spB{k}$.
Define 
\[
\overline\cH_k^\hi := \spB{k}\cap(\spA{k-1})^\perp
\qquad\text{and}\qquad
\overline\cH_k^\lo := \spA{k}\cap(\spB{k}+\spA{k-1})^\perp,
\]
where $\cH^\perp$ is the orthogonal complement of $\cH$ in $\cF$ and the $+\spA{k-1}$ term in the definition of $\overline{\cH}_k^\lo$ is redundant in the permutation case.
Note that the spaces 
\[
\overline\cH_0^\lo, \overline\cH_1^\hi, \overline\cH_1^\lo, \overline\cH_2^\hi, \overline\cH_2^\lo, \ldots
\]
are mutually orthogonal, and, as we will implicitly show, they are also all non-trivial up to $\overline\cH_{M}^\lo$ for functions and $\overline\cH_{N-1}^\hi$ for permutations.%
\footnote{See Claim~\ref{clm:spABviaReps} in Section~\ref{sec:spAspBviaReps} for the permutation case.}
Let us define
\begin{equation}
\label{eqn:cHaccum}
\cH^\lo_k := 
\bigoplus_{k'=0}^k
\overline\cH^\lo_{k'}
\qquad\text{and}\qquad
\cH^\hi_k := 
\bigoplus_{k'=1}^k
\overline\cH^\hi_{k'},
\end{equation}
and note that $\cH^\lo_k\oplus\cH^\hi_k=\spA{k}$.
In the function case, when considering connections to Zhandry's approach, we will also consider $\cH^\lo:=\cH^\lo_M$ and $\cH^\hi:=\cH^\hi_M$, for which we have $\cH^\lo\oplus\cH^\hi=\cF$.

\subsection{Main Theorem}
\label{ssec:mainThm}

Recall (\ref{eqn:weightOnH}), the definition of the weight of the overall state of the system on a subspace of $\cF$. 
As before, superscripts ``$\lo$'' and ``$\hi$'' refer to success probability, and we call $\cH^\lo_k$ and $\cH^\hi_k$ \emph{low} and \emph{high} subspaces, respectively.
Intuitively, the weight being on the space $\cH^\lo_k\oplus\cH^\hi_k$ means that the algorithm has learnt the outputs of the function on \emph{at most} $k$ inputs, and the weight being on $\cH^\hi_k$ means that one of those outputs is $\zero$,
while the weight being on the space $\overline\cH^\lo_k\oplus\overline\cH^\hi_k$ means that the algorithm has learnt the outputs of the function on \emph{exactly} $k$ inputs.
More formally, Part 1 of Theorem~\ref{thm:mainThm} states that, if an algorithm has performed $k$ queries, the state of the function register is supported on $\cH^\lo_k\oplus \cH^\hi_{k-1}$.
In Part 2 of Theorem~\ref{thm:mainThm} we formally show that the weight on $\cH^\lo_k$ contribute little to the success probability.

For spaces $\cH_k^\labB$ and $\cH_k^\labG$, Theorem~\ref{thm:mainThm} states that, on the one hand side, initially all the weight is on the low subspace $\cH_k^\labB$. On the other hand, using calls to the oracle, one has to try to transfer this weight to the high subspace $\cH_k^\labG$ because, in the end, 
the overall success probability is approximately no more than $1/N$ plus the weight of the final state $|\psi_\kappa\>$ on $\cH_\kappa^\labG$.
 However, Theorem~\ref{thm:mainThm} also limits how much weight can be transferred per a single oracle call: the amplitude on the high subspace $\cH_k^\labG$ cannot increase by more than $\OO(1/\sqrt{N})$.

Recall that $\Pi_k^\lo$ and $\Pi_k^\hi$ are projectors on $\cH_k^\lo$ and $\cH_k^\hi$ respectively, and 
\[
\widetilde\Pi_k^\lo = \Pi_k^\lo \otimes I_\regA
\qquad\text{and}\qquad
\widetilde\Pi_k^\hi = \Pi_k^\hi \otimes I_\regA.
\]

\begin{thm}
\label{thm:mainThm}
Let $|\phi_{k}\>$ and $|\psi_{k}\>$ be as in Definition~\ref{def:PhiPsi}.
Let $\alpha_0:=0$ and 
\[
\alpha_k
:=\big\|\widetilde\Pi^\hi_k|\phi_k\>\big\|
=\big\|\widetilde\Pi^\hi_k|\psi_k\>\big\|
\]
for $k>0$,
namely, the amplitude of the state of the system on $\cH^\hi_k$ after $k$ oracle calls.
We have
\begin{enumerate}
\item $(\widetilde\Pi^\lo_k+\widetilde\Pi^\hi_k)|\phi_k\>=|\phi_k\>$ and
$(\widetilde\Pi^\lo_k+\widetilde\Pi^\hi_k)|\psi_k\>=|\psi_k\>$;
\item $\|P|\psi_k\>\|\le \alpha_k+
\begin{cases}
\frac{1}{\sqrt{N}} & \text{for functions}, \\
\frac{1}{\sqrt{N-2k}} & \text{for permutations};
\end{cases}$
\item $\alpha_k\le \alpha_{k-1}+
\begin{cases}
\frac{\sqrt{2}}{\sqrt{N-1}} & \text{for functions}, \\
\frac{2\sqrt{2}}{\sqrt{N-4k}} & \text{for permutations}.
\end{cases}$
\end{enumerate}
\end{thm}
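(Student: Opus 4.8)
The plan is to prove the three parts in order, as Parts~2 and~3 are essentially independent given Part~1, which controls where the state lives.

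\textbf{Part 1.} The identity $(\widetilde\Pi^\lo_k+\widetilde\Pi^\hi_k)|\psi_k\> = |\psi_k\>$ follows from the $|\phi_k\>$ version because $U_k$ acts only on $\regA$ and commutes with $\Pi^\lo_k\otimes I_\regA$ and $\Pi^\hi_k\otimes I_\regA$; and the equality of the two expressions for $\alpha_k$ follows similarly. So the real content is that $|\phi_k\>$ is supported on $\cH^\lo_k\oplus\cH^\hi_{k-1}\subseteq\cH^\lo_k\oplus\cH^\hi_k=\spA{k}$. I would prove this by induction on $k$. For $k=0$, $|\vn_\emptyset\>\in\spA{0}=\cH^\lo_0$. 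For the inductive step, suppose $|\psi_{k-1}\>$ is supported on $\spA{k-1}$; I need to show that $(O\otimes I_{\regW'})|\psi_{k-1}\>$ is supported on $\spA{k}$. Using the decomposition (\ref{eq:ODecomp}) of $O$, it suffices to show that for every $|\vn_\alpha\>$ with $|\alpha|=k-1$ and every $x,y$, the vector $\Xi^y_x|\vn_\alpha\>$ lies in $\spA{k}$. If $x\in\mathrm{dom}\,\alpha$ then $\Xi^y_x|\vn_\alpha\>$ is either $|\vn_\alpha\>\in\spA{k-1}\subseteq\spA{k}$ (if $\alpha(x)=y$) or $0$; if $x\notin\mathrm{dom}\,\alpha$ then $\Xi^y_x|\vn_\alpha\>$ is a (rescaled) $|\vn_{\alpha\cup\{x\mapsto y\}}\>\in\spA{k}$ in the function case, and the same holds for permutations provided $y\notin\mathrm{im}\,\alpha$ (otherwise the vector is $0$). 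Hence $|\phi_k\>\in\spA{k}$. The sharper claim $|\phi_k\>\in\cH^\lo_k\oplus\cH^\hi_{k-1}$ needs a little more: one argues that the $\overline\cH^\hi_k$-component vanishes because, before the $k$-th query, nothing can be ``new and contain $\zero$''; concretely, $|\phi_k\>$ is obtained from $|\psi_{k-1}\>\in\spA{k-1}$ by applying $O$, and $O|u\>$ for $|u\>\in\spA{k-1}$ lies in $\spA{k-1}+(\text{span of }|\vn_\beta\>,|\beta|=k)$ where the new $\beta$'s arise from extending some $\alpha$ with $|\alpha|=k-1$ by a single new pair; but the overlap of such $|\vn_\beta\>$ with $\spA{k-1}^\perp\cap\spB{k}=\overline\cH^\hi_k$ can be absorbed using $\spA{k-1}\subseteq\spB{k}$ (permutation case) or handled by a direct orthogonality check, since any vector in $\spA{k-1}$ already has the right structure and the new pair added by a single query, when projected onto $\overline\cH^\hi_k$, must come from an $\alpha$ not involving $\zero$.

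\textbf{Part 2.} I would bound $\|P|\psi_k\>\|$ by splitting $|\psi_k\>=\widetilde\Pi^\lo_k|\psi_k\>+\widetilde\Pi^\hi_k|\psi_k\>$ via Part~1 and using the triangle inequality: $\|P|\psi_k\>\|\le \|P\widetilde\Pi^\lo_k|\psi_k\>\|+\|P\widetilde\Pi^\hi_k|\psi_k\>\|\le \|P\widetilde\Pi^\lo_k\|+\alpha_k$, where $P$ is the projector (\ref{eq:PDecomp}) and I used $\|\widetilde\Pi^\hi_k|\psi_k\>\|=\alpha_k$. So it remains to show $\|P\widetilde\Pi^\lo_k\|=\|P\Pi^\lo_k\|\le 1/\sqrt{N}$ (functions) or $\le 1/\sqrt{N-2k}$ (permutations). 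Equivalently, using Fact~\ref{fac:norms}(1), I bound $\|P\Pi^\lo_k P\|=\|\Pi^\lo_k P\|^2$; but it is cleaner to bound, for an arbitrary unit vector $|\zeta\>$ supported on $\cH^\lo_k\otimes\cR_\regA$, the quantity $\|P|\zeta\>\|^2=\sum_x\|\Xi^\zero_x\otimes|x\>\<x|_\regR\,|\zeta\>\|^2$. The point is that $\cH^\lo_k$ is spanned by vectors $|\vn_\alpha\>$ with $|\alpha|\le k$ and $\zero\notin\mathrm{im}\,\alpha$ — more precisely $\cH^\lo_k\perp\spB{k}$, so any $|u\>\in\cH^\lo_k$ satisfies $\<\vn^\zero_x|u\>$ small for each $x$ — and one computes that for $|u\>\in\spA{k}\ominus\spB{k}$ the overlap $\|\Xi^\zero_x|u\>\|$ is controlled: writing $|u\>$ in the orthonormal basis adapted to $\spA{k}$, the component of $\Xi^\zero_x|u\>$ is $\frac1N$ (functions) resp.\ $\frac{1}{N-2k}$-type (permutations) times $\||u\>\|$, because conditioning a uniform superposition over $\spA{k}$-type functions on $f(x)=\zero$, for $x\notin\mathrm{dom}$, picks out a $1/N$ (resp.\ $\sim 1/(N-k)$) fraction, and the $\perp\spB{k}$ constraint kills the diagonal contribution. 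Summing over the $\le M$ (resp.\ $N$) values of $x$ but noting the register $\regR$ makes the $x$-terms orthogonal gives the stated bound. I expect the permutation bookkeeping here (tracking how many of the $k$ already-queried inputs/outputs collide with $x$ and $\zero$) to be the fiddly part, yielding the $N-2k$ in the denominator.

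\textbf{Part 3.} This is the heart of the argument. I would write $\alpha_k=\|\widetilde\Pi^\hi_k|\phi_k\>\|$ and $|\phi_k\>=(O\otimes I_{\regW'})|\psi_{k-1}\>$, and decompose $|\psi_{k-1}\>=\widetilde\Pi^\hi_{k-1}|\psi_{k-1}\>+\widetilde\Pi^\lo_{k-1}|\psi_{k-1}\>$ using Part~1. For the high part, $\|\widetilde\Pi^\hi_k\widetilde O\,\widetilde\Pi^\hi_{k-1}|\psi_{k-1}\>\|\le \|\widetilde\Pi^\hi_{k-1}|\psi_{k-1}\>\| = \alpha_{k-1}$ because $O$ is unitary and $\widetilde\Pi^\hi_k$ is a projector; so this contributes at most $\alpha_{k-1}$. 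For the low part I need $\|\widetilde\Pi^\hi_k\widetilde O\,\widetilde\Pi^\lo_{k-1}\|\le \frac{\sqrt2}{\sqrt{N-1}}$ (functions) or $\frac{2\sqrt2}{\sqrt{N-4k}}$ (permutations); then the triangle inequality gives $\alpha_k\le\alpha_{k-1}+(\text{that bound})$. By Fact~\ref{fac:norms}(1) this reduces to bounding $\|\Pi^\hi_k O\,\Pi^\lo_{k-1}\|$ as an operator on $\cF$ (tensored with identity on the algorithm registers, which only helps). Because $\Pi^\lo_{k-1}$ projects onto $\spA{k-1}\ominus\spB{k-1}$ and $O$ adds at most one new input–output pair, $O\,\Pi^\lo_{k-1}$ lands in $\spA{k}$; projecting with $\Pi^\hi_k$ keeps only the genuinely new component that now contains $\zero$, i.e.\ the part in $\overline\cH^\hi_k=\spB{k}\cap(\spA{k-1})^\perp$. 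So I must show: applying one query to a vector in $\spA{k-1}$ with no $\zero$ in the image can create only $\OO(1/\sqrt N)$ amplitude of ``new assignment whose image contains $\zero$''. The new pair $x\mapsto\zero$ can only be created by the $\Xi^\zero_x$ term of (\ref{eq:ODecomp}) with the output-register branch $y'\oplus y=\zero$; for fixed input-register value $x$ this is a single branch among $N$, contributing amplitude $\sim1/\sqrt N$, and the $\sqrt2$ (resp.\ constant) accounts for summing the two relevant sub-branches / the permutation collision corrections. Making this precise will require computing $\Xi^\zero_x|\vn_\alpha\>$ for $|\alpha|=k-1$, $\zero\notin\mathrm{im}\,\alpha$, $x\notin\mathrm{dom}\,\alpha$, expressing $\overline\Pi^\hi_k|\vn_{\alpha\cup\{x\mapsto\zero\}}\>$ via the orthonormalization in (\ref{eqn:cHaccum}), and bounding the resulting inner products; in the permutation case one must also rule out interference between different $x$'s using the structure of the output register, which is where the factor in front and the $N-4k$ come from.

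\textbf{Main obstacle.} I expect Part~3's low-to-high transition bound — precisely, establishing $\|\Pi^\hi_k\,O\,\Pi^\lo_{k-1}\|=\OO(1/\sqrt N)$ — to be the crux, especially in the permutation case where the $|\vn_\alpha\>$ are not mutually ``independent'' across coordinates and one cannot simply read off overlaps coordinatewise; the excerpt signals that the clean treatment there will go through the representation theory of $\Sym_N$ developed in the later sections (Sections~\ref{sec:RepPrelim}–\ref{sec:MainTmProof}) rather than through elementary combinatorics. Parts~1 and~2, by contrast, I expect to be comparatively routine once the $\Xi^y_x$ action on the $|\vn_\alpha\>$ is written down carefully.
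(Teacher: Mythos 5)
Your high-level skeleton is exactly the paper's framework from Section~\ref{sec:MainTmFrame}: Part~1 by induction on the action of $\Xi^y_x$ on the $|\vn_\alpha\>$; Part~2 via $\|P|\psi_k\>\|\le\|P\widetilde\Pi^\lo_k\|+\alpha_k$ reducing to a bound on $\|\Xi^\zero_x\Pi^\lo_k\|$; Part~3 via $\alpha_k\le\alpha_{k-1}+\|\widetilde\Pi^\hi_k\widetilde O\widetilde\Pi^\lo_{k-1}\|$ and then bounding that operator norm. You also correctly flag that the permutation case will need the representation theory of $\Sym_N$ rather than coordinatewise combinatorics. However, there are genuine gaps in the crux steps.

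The most serious gap is in your Part~3 heuristic for the low-to-high transition. You argue that ``the new pair $x\mapsto\zero$ can only be created by the $\Xi^\zero_x$ term'' with a matching output-register branch, contributing $\sim1/\sqrt N$, so only the $y=\zero$ term of (\ref{eq:ODecomp}) matters. This is precisely the misconception the paper explicitly warns against in Section~\ref{ssec:solFound} (the observation attributed to Minki Hhan): in the permutation case, $\spA{k}^{\mathrm{alt}}=\spA{k}$, i.e., linear combinations of vectors $|\vn_\alpha\>$ with $\zero\notin\mathrm{im}\,\alpha$ can lie entirely inside $\spB{k}$. Consequently a single $\Xi^y_x$ with $y\ne\zero$ does transfer amplitude from $\cH^\lo_{k-1}$ into $\cH^\hi_k$, and the paper has to bound this separately (Claim~\ref{clm:Ynot0transition} gives $\|\Pi^\lo_k\Xi^y_x\Pi^\hi_k\|\le1/(N-2k)$ for $y\ne\zero$, sitting alongside Claim~\ref{clm:Yis0transition} for $y=\zero$). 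Your sketch silently drops all $N-1$ of those contributions and would not survive the step where one sums over $y$. The paper also inserts an extra structural step you do not anticipate: rather than attacking $\|\Pi^\hi_k O\Pi^\lo_{k-1}\|$ directly, it decomposes the oracle call into a computing and an uncomputing call to obtain (\ref{eqn:GBTriangle}), reducing to $\big(\sum_y\|\Pi^\lo_k\Xi^y_x\Pi^\hi_k\|^2\big)^{1/2}$; this is what makes the permutation bound tractable via Lemma~\ref{lem:isotypicOverlap} and the dimension estimates.

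Two smaller issues. In Part~2 your heuristic says the component of $\Xi^\zero_x|u\>$ is ``$\frac1N$ ... times $\||u\>\|$''; the correct scale for the norm is $1/\sqrt{N}$ (functions) resp.\ $1/\sqrt{N-2k}$ (permutations), and even in the function case one cannot reason coordinatewise on the overcomplete $|\vn_\alpha\>$; the paper instead evaluates $\|\Xi^\zero_x\Pi^\lo_k\|^2=\dim\bar\theta/(N\dim\bar\theta_*)$ via Claim~\ref{clm:dimClose}. In Part~1, the detour to the ``sharper claim'' $|\phi_k\>\in\cH^\lo_k\oplus\cH^\hi_{k-1}$ is both unnecessary for the stated Part~1 (which only asserts support on $\spA{k}=\cH^\lo_k\oplus\cH^\hi_k$) and argued only by hand-waving (``can be absorbed''); the paper proves only the $\spA{k}$ statement, and that suffices for Parts~2 and~3.
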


As a consequence, $\alpha_0=0$ and Point 3 of the theorem imply that $\alpha_\kappa\le \sqrt{2}\kappa/\sqrt{N}$ for functions and $\alpha_\kappa\le {2\sqrt{2}\kappa}/\sqrt{N-4\kappa}$ for permutations. That, combined with Point 2 gives us the asymptotically optimal bound on the success probability.

\begin{cor}
Suppose a quantum algorithm is given an oracle access to a uniformly random function $f\colon\Dom\rightarrow\Ran$, where $|\Ran|=N$. If the algorithm makes $\kappa$ oracle calls, the probability of it solving the Unstructured Search problem, namely, finding a preimage of $\zero$, is no more than
$(1+\sqrt{2}\kappa)^2/(N-1)$.
\end{cor}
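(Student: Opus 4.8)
The corollary follows almost immediately from Theorem~\ref{thm:mainThm}, so the plan is simply to chain together the three parts of the theorem in the function case. First I would unpack the quantities: set $\kappa$ to be the number of oracle calls, so that $|\psi_\kappa\>$ is the final state before measurement, and recall from the discussion preceding Definition~\ref{def:PhiPsi} and equation~(\ref{eq:PDecomp}) that the overall success probability for Unstructured Search is exactly $\|P|\psi_\kappa\>\|^2$. Thus it suffices to bound $\|P|\psi_\kappa\>\|$ from above by $(1+\sqrt{2}\kappa)/\sqrt{N-1}$ and then square.

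Next I would control $\alpha_\kappa$. By definition $\alpha_0=0$, and Part~3 of Theorem~\ref{thm:mainThm} in the function case gives $\alpha_k\le\alpha_{k-1}+\sqrt{2}/\sqrt{N-1}$ for each $k\in\{1,\dots,\kappa\}$. Summing this telescoping inequality over $k$ yields $\alpha_\kappa\le \sqrt{2}\,\kappa/\sqrt{N-1}$ (note that $N-1$, not $N$, appears because the function-case bound in Part~3 has denominator $\sqrt{N-1}$; this is the source of the $N-1$ in the corollary). Then I would apply Part~2 of Theorem~\ref{thm:mainThm} in the function case, which gives $\|P|\psi_\kappa\>\|\le \alpha_\kappa+1/\sqrt{N}$. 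Since $1/\sqrt{N}\le 1/\sqrt{N-1}$, we get
\[
\|P|\psi_\kappa\>\|\ \le\ \frac{\sqrt{2}\,\kappa}{\sqrt{N-1}}+\frac{1}{\sqrt{N-1}}\ =\ \frac{1+\sqrt{2}\,\kappa}{\sqrt{N-1}}.
\]
Squaring gives success probability at most $(1+\sqrt{2}\,\kappa)^2/(N-1)$, as claimed. (Part~1 of the theorem is not needed directly here, though it underlies the proof of Parts~2 and~3.)

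There is essentially no obstacle: the entire content is in Theorem~\ref{thm:mainThm} itself, and the corollary is a two-line deduction. The only minor points of care are (i) replacing the $1/\sqrt{N}$ term by the slightly larger $1/\sqrt{N-1}$ so that the two contributions share a common denominator and combine cleanly, and (ii) observing that the telescoping sum in Part~3 has exactly $\kappa$ terms. I would state these explicitly but not belabor them.
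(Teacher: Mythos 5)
Your proof is correct and matches the paper's own (very short) derivation: telescope Part~3 from $\alpha_0=0$ to get $\alpha_\kappa\le\sqrt2\,\kappa/\sqrt{N-1}$, plug into Part~2, weaken $1/\sqrt N$ to $1/\sqrt{N-1}$, and square. You even handle a small typo in the paper's intervening sentence (which writes $\sqrt N$ where $\sqrt{N-1}$ is meant) by working directly from the stated Part~3 bound.
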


\begin{cor}
Suppose a quantum algorithm is given an oracle access to a uniformly random permutation $f\colon\Dom\rightarrow\Dom$, where $|\Dom|=N$. If the algorithm makes $\kappa$ oracle calls, the probability of it solving the Inverting Permutation problem, namely, evaluating $f^{-1}$ on $\zero$, is no more than
$(1+2\sqrt{2}\kappa)^2/(N-4\kappa)$.
\end{cor}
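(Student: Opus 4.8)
The plan is to derive the bound directly from Theorem~\ref{thm:mainThm}, following the remark that immediately precedes the corollaries; the permutation case of that theorem does all the work. First I would dispose of the degenerate regime: if $4\kappa \ge N$ then the asserted bound $(1+2\sqrt{2}\kappa)^2/(N-4\kappa)$ is either undefined or at least $1$, so it holds trivially, and I may assume $4\kappa < N$ from here on, which in particular makes $N - 4k > 0$ for every $k \in \{1,\dots,\kappa\}$.

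Next I would telescope Part~3. Starting from $\alpha_0 = 0$ and applying Part~3 (permutation case) for $k = 1,\dots,\kappa$ gives $\alpha_\kappa \le \sum_{k=1}^{\kappa} \frac{2\sqrt{2}}{\sqrt{N-4k}}$. Since the map $k \mapsto 1/\sqrt{N-4k}$ is increasing on $\{1,\dots,\kappa\}$, each term is at most the $k=\kappa$ term, so $\alpha_\kappa \le \frac{2\sqrt{2}\,\kappa}{\sqrt{N-4\kappa}}$. Now I would invoke Part~2: recall from Section~\ref{sec:queryModel} that the overall success probability equals $\|P\,|\psi_\kappa\>\|^2$ (with $P$ as in~(\ref{eq:PDecomp})), and Part~2 for permutations gives $\|P\,|\psi_\kappa\>\| \le \alpha_\kappa + 1/\sqrt{N-2\kappa}$. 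Because $N - 2\kappa \ge N - 4\kappa > 0$, I can replace the second denominator by the smaller $\sqrt{N-4\kappa}$, obtaining
\[
\big\|P\,|\psi_\kappa\>\big\| \;\le\; \frac{2\sqrt{2}\,\kappa}{\sqrt{N-4\kappa}} + \frac{1}{\sqrt{N-4\kappa}} \;=\; \frac{1 + 2\sqrt{2}\,\kappa}{\sqrt{N-4\kappa}},
\]
and squaring finishes the proof.

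There is no real obstacle here: everything substantive is inside Theorem~\ref{thm:mainThm} (whose proof occupies the bulk of the paper, culminating in Section~\ref{sec:MainTmProof}), and the present argument is pure bookkeeping. The only two places that need a moment's attention are the monotonicity of $k \mapsto 1/\sqrt{N-4k}$, used to bound the telescoped sum by $\kappa$ times its last term rather than by an integral, and the elementary inequality $N - 2\kappa \ge N - 4\kappa$, used to express both error terms over a common denominator. Part~1 of the theorem plays no role in this corollary; it is consumed internally in the proofs of Parts~2 and~3.
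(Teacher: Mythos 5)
Your proposal is correct and follows exactly the paper's own derivation: telescope Point~3 of Theorem~\ref{thm:mainThm} (bounding each increment by the $k=\kappa$ term) to get $\alpha_\kappa\le 2\sqrt{2}\kappa/\sqrt{N-4\kappa}$, then combine with Point~2 using $N-2\kappa\ge N-4\kappa$ and square. The handling of the degenerate regime $4\kappa\ge N$ and the explicit monotonicity remark are minor additions of rigor over the paper's one-line treatment, but the route is the same.
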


Notice that Point 1 of Theorem~\ref{thm:mainThm} is not explicitly used to prove the above bounds. In the function case, Point 1 is not required at all, as we will see in Section~\ref{sec:ZhandryApproach} when discussing connection to Zhandry's approach. In the permutation case, however, Point 1 used to prove Points 2 and 3 of the theorem. Because of Point 1, instead of defining $\alpha_k$ as $\big\|\widetilde\Pi^\hi_\ell|\phi_k\>\big\|$ for some large $\ell$ to encapsulate the meaning that $\alpha_k$ depicts the amplitude of $|\phi_k\>$ on high success subspaces, it suffices to define $\alpha_k$ as $\big\|\widetilde\Pi^\hi_k|\phi_k\>\big\|$.
Intuitively, unlike for functions, for permutations, when we make a large number of queries, even if we have not found a preimage of $\zero$, our chances of randomly guessing that preimage as well as our chances of discovering it by the next query increase.
Hence, it is important that $|\phi_k\>$ has no overlap on $\overline\cH_\ell^\hi$ with $\ell >k$, especially, $\ell \gg k$.

Consider the Inverting Permutation problem, where every $f$ has exactly one valid solution. A single query to Grover's search can be implemented using $2$ oracle calls.
Hence, using an even number $\kappa$ of queries, Grover's algorithm finds $f^{-1}(\zero)$ with probability
\[
\sin^2\big((1+\kappa)\arcsin(1/\sqrt{N})\big),
\]
which is approximately $(1+\kappa)^2/N$ when $0<\kappa\ll N$.
Thus our bound is tight up to the factor of about $8$.

We provide a framework for proving Thorem~\ref{thm:mainThm} in Section~\ref{sec:MainTmFrame}. We prove the function case in Section~\ref{sec:ZhandryApproach}. The proof for the permutation case requires the representation theory of the symmetric group.
In Section~\ref{sec:RepPrelim} we introduce various tools from the theory required by our proofs.
In Section~\ref{sec:ABviaRepTh} we express $\overline\cH^\hi_k$ and $\overline\cH^\lo_k$ as direct sums of certain irreducible representations.
In Section~\ref{sec:MainTmProof} we prove Theorem~\ref{thm:mainThm}.

\subsection{Additional observations on $\spA{k}$ and $\spB{k}$}
\label{ssec:solFound}

To better understand the space $\spA{k}$, consider a simple algorithm that initially creates a uniform superpositions over all $k$-tuples $(x_1,\ldots,x_k)$ of distinct elements in $\Dom$ and then computes $|f(x_1),\ldots,f(x_k)\>$ in a separate register using $k$ oracle calls. It is not hard to see that the resulting reduced state of the oracle is a uniform mixture over all $|\vn_\alpha\>\<\vn_\alpha|$ such that $|\alpha|=k$, and $\spA{k}$ is the support of this state. Further on, Point 1 of Theorem~\ref{thm:mainThm} shows that, for any algorithm that makes $k$ oracle calls, the reduced state of the oracle is supported on $\spA{k}$. Hence, $\spA{k}$ is exactly the subspace of $\cF$ that can be ``reached'' by interacting with quantum algorithms that make at most $k$ oracle calls.

As mentioned before, we interpret the weight being on the space $\spA{k}=\cH^\lo_k\oplus\cH^\hi_k$ as the algorithm having learnt the outputs of the function on \emph{at most} $k$ inputs.
The phrase ``at most'' is important here, because $\spA{k}$ contains $\spA{k-1}$. Therefore, intuitively, to say that the algorithm has learnt the value of the function on \emph{exactly} $k$ inputs, the weight must be on $\spA{k}\cap(\spA{k-1})^\perp=\overline\cH^\lo_k\oplus\overline\cH^\hi_k$. 
While this intuition is not completely accurate, because a part of the state $  |\vn_{x_1,\ldots,x_k}^{y_1,\ldots,y_k}\>$ overlaps $\spA{k-1}$, we can back it up by the following claim, which we prove in Section~\ref{sec:ZhandryApproach} for functions and in Appendix~\ref{app:LargeOlaps} for permutations.

\begin{clm}
\label{clm:LargeOlap0}
The norm of the projection of $  |\vn_{x_1,\ldots,x_k}^{y_1,\ldots,y_k}\>$ on the space $\spA{k}\cap(\spA{k-1})^\perp$ is
\begin{itemize}
\item $\sqrt{(1-1/N)^k}$ for functions,
\item between $\sqrt{1-k/(N-k+1)}$ and $\sqrt{1-k/N}$ for permutations.
\end{itemize}
\end{clm}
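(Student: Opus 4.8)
The plan is to compute directly the norm of the projection of $|\vn_{x_1,\ldots,x_k}^{y_1,\ldots,y_k}\>$ onto $\spA{k}\cap(\spA{k-1})^\perp$ by relating it to the norm of the \emph{orthogonal} projection of the same vector onto $\spA{k-1}$, since the two add up to $1$ by the Pythagorean theorem once we know $|\vn_{x_1,\ldots,x_k}^{y_1,\ldots,y_k}\>\in\spA{k}$. The key observation is that $\spA{k-1}$ is spanned by vectors $|\vn_\beta\>$ with $|\beta|=k-1$, and the relevant inner products $\<\vn_\beta|\vn_\alpha\>$ (where $|\alpha|=k$) are easy to evaluate: $|\vn_\alpha\>$ is the normalized indicator of the set of functions extending $\alpha$, so $\<\vn_\beta|\vn_\alpha\>$ is (up to normalization constants) the number of functions extending \emph{both} $\alpha$ and $\beta$, which is nonzero exactly when $\beta$ is a sub-assignment of $\alpha$, i.e., $\beta = \alpha|_{S}$ for some $(k-1)$-subset $S\subset\{x_1,\ldots,x_k\}$. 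Thus the projection of $|\vn_\alpha\>$ onto $\spA{k-1}$ lives in the $k$-dimensional span of $\{|\vn_{\alpha|_{S_i}}\>\}$, where $S_i = \{x_1,\ldots,x_k\}\setminus\{x_i\}$.

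Concretely, I would first record the normalizations: $\|\vn_\alpha\>$ is the uniform superposition over $N^{M-k}$ functions (function case) or $(N-k)!$ permutations (permutation case), so the coefficient of each basis state $|f\>$ is $N^{-(M-k)/2}$ resp. $((N-k)!)^{-1/2}$. Then I would compute $\<\vn_{\alpha|_{S_i}}|\vn_\alpha\>$: for functions this is $N^{(M-k)/2-(M-k+1)/2}\cdot\#\{f : f\text{ extends }\alpha\} = N^{-1/2}\cdot N^{M-k}\cdot$ (normalization) which works out to $1/\sqrt N$; for permutations the analogous count gives $\<\vn_{\alpha|_{S_i}}|\vn_\alpha\> = \sqrt{(N-k)!/(N-k+1)!}=1/\sqrt{N-k+1}$. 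Next I would compute the Gram matrix of the $|\vn_{\alpha|_{S_i}}\>$ among themselves: $\<\vn_{\alpha|_{S_i}}|\vn_{\alpha|_{S_j}}\>$ for $i\ne j$ equals (by the same counting, now extending an assignment of weight $k-2$) $1/N$ for functions and $1/(N-k+2)$ for permutations, while the diagonal is $1$. So the Gram matrix is $(1-c)I + c J$ for $J$ the all-ones $k\times k$ matrix, with $c=1/N$ resp. $c=1/(N-k+2)$.

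With these in hand, the projection norm is a finite linear-algebra computation: writing $|w\> := \Pi_{\spA{k-1}}|\vn_\alpha\>$, we have $|w\> = \sum_i c_i |\vn_{\alpha|_{S_i}}\>$ where the coefficients solve $G\,\vec c = \vec b$ with $G$ the Gram matrix above and $\vec b$ the vector of overlaps $\<\vn_{\alpha|_{S_i}}|\vn_\alpha\>$ (all equal, to $b:=1/\sqrt N$ resp. $1/\sqrt{N-k+1}$); then $\|w\>\|^2 = \vec b^{\,\dagger} \vec c = b^2\,\vec 1^{\,\dagger} G^{-1}\vec 1$. By symmetry $G^{-1}\vec 1 = \frac{1}{1+(k-1)c}\vec 1$, so $\|w\>\|^2 = k b^2/(1+(k-1)c)$. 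Plugging in the function-case values $b^2=1/N$, $c=1/N$ gives $\|w\>\|^2 = k/(N+k-1)$, whence the complementary norm is $\sqrt{1-k/(N+k-1)}=\sqrt{(N-1)/(N+k-1)}$ — and here I'd double-check against the claimed closed form $\sqrt{(1-1/N)^k}$; if they disagree, the resolution is that $\spA{k-1}$ is \emph{not} merely the span of the $k$ direct sub-assignments but contains all lower ones, i.e. one must project onto $\spA{k-1}$ honestly, and the recursion $\|\Pi_{\spA{k-1}^\perp \cap \spA k}|\vn_\alpha\>\|^2 = (1-1/N)\,\|\Pi_{\spA{k-2}^\perp\cap\spA{k-1}}|\vn_{\alpha'}\>\|^2$ telescopes (using the decomposition $|\vn_{\alpha|_{S}}\> = \frac{1}{\sqrt N}\sum_{y'}|\vn_{\alpha|_S, x_k\mapsto y'}\>$ from the displayed identities in the excerpt) to yield $(1-1/N)^k$. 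For permutations the same telescoping with the permutation-case identity gives a product $\prod_{j=1}^{k}(1 - 1/(N-j+1))$-type expression; extracting the clean two-sided bound $\sqrt{1-k/(N-k+1)}\le \cdot\le\sqrt{1-k/N}$ is then an elementary estimate on that product (bounding each factor's contribution, or bounding $\|\Pi_{\spA{k-1}}|\vn_\alpha\>\|^2$ between $k/(N+k-1)$-type quantities).

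The main obstacle I expect is getting the structure of $\spA{k-1}$ right: naively one treats it as the span of the $k$ ``one-step'' sub-assignments of $\alpha$, but it properly also contains all coarser assignments, so the clean route is the telescoping recursion rather than a single Gram-matrix inversion. Concretely, I would set $\beta_j := \alpha|_{\{x_1,\ldots,x_j\}}$ and use the identity relating $|\vn_{\beta_{j-1}}\>$ to a uniform sum of the $|\vn_{\beta_{j-1},x_j\mapsto y'}\>$ to show $\Pi_{(\spA{j-1})^\perp}|\vn_{\beta_j}\>$ has squared norm equal to $(1-\tfrac1N)$ (functions) resp. $(1-\tfrac1{N-j+1})$ (permutations) times the corresponding quantity one level down, bottoming out at level $1$ where $|\vn_{x_1}^{y_1}\> = \sqrt N \sum_{y'}|\vn_{x_1,x'}^{y_1,y'}\>$-type relations give the base case explicitly. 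Everything else — the inner-product counts, the all-ones-plus-scalar Gram matrices, the final estimates — is routine. For permutations one must also keep track that assignments are injective and that $\zero$ is available as a target value, but that only affects the range of $y'$ in the sums and the factorial counts, not the shape of the argument.
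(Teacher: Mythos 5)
Your reduction to the complementary projection onto $\spA{k-1}$ is sound, and your observation that the naive $k\times k$ Gram-matrix inversion over the one-step sub-assignments $\alpha|_{S_i}$ fails --- because $\spA{k-1}$ is strictly larger than $\spn\{|\vn_{\alpha|_{S_i}}\>\}$ --- is correct; the value $k/(N+k-1)$ you get from that inversion is indeed not $1-(1-1/N)^k$. The pivot to a telescoping recursion is therefore well motivated, but it does not close the gap in either case.

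For functions, the recursion $\|\Pi_{\spA{j}\cap\spA{j-1}^\perp}|\vn_{\beta_j}\>\|^2 = (1-\tfrac1N)\,\|\Pi_{\spA{j-1}\cap\spA{j-2}^\perp}|\vn_{\beta_{j-1}}\>\|^2$ happens to be numerically correct, but you have not proved it: the displayed re-expansion of $|\vn_{\beta_{j-1}}\>$ as a sum over $y'$ does not by itself say how the projector onto $\spA{j}\cap\spA{j-1}^\perp$ interacts with appending one more coordinate. The clean argument is the one in Section~\ref{sec:ZhandryApproach}: $\cF$ factors as $\bigotimes_x\mathbb{C}^\Ran$, the state $|\vn_\alpha\>$ with $|\alpha|=k$ is the product state $\bigotimes_i|y_i\>\otimes\bigotimes_{x\notin\mathrm{dom}\,\alpha}|\Fzero\>$, and Lemma~\ref{lem:HLaltGen} identifies the projector on $\spA{k}\cap\spA{k-1}^\perp$ with $\sum_{|S|=k}\FE_S$; only $S=\mathrm{dom}\,\alpha$ contributes, and each of those registers supplies a factor $\|\FE_1|y_i\>\|^2=1-1/N$. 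The $(1-1/N)^k$ comes from the tensor structure directly; the recursion is a by-product, not an independent route.

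For permutations the telescoping is not merely unproven but wrong, and this is the central gap. The answer is \emph{not} the product $\prod_{j=1}^k\bigl(1-\tfrac1{N-j+1}\bigr)=1-k/N$; that quantity is only the upper endpoint of the interval in the claim, and the true norm squared is strictly smaller in general. For example, with $N=10$ and $k=2$, the computation in Appendix~\ref{app:LargeOlaps} (via Lemma~\ref{lem:IJ_perm_evals}) gives $\tfrac12\cdot\tfrac79+\tfrac12\cdot\tfrac{8}{10}=\tfrac{71}{90}\approx 0.789$, strictly below $1-2/10=0.8$. The reason no telescoping can work is that $\spA{k}\cap\spA{k-1}^\perp$ does not factor over registers for permutations; it decomposes into irreducibles as $\bigoplus_{\theta\vdash k}\cH_{\bar\theta}$, and the projection norm onto each $\cH_{\bar\theta}$ depends nontrivially on the shape $\theta$. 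The paper's proof computes the eigenvalues of $\MOp_k=(N-k)!\sum_{|\alpha|=k}|\vn_\alpha\>\<\vn_\alpha|$ via characters of $\Sym_N$ (Lemma~\ref{lem:IJ_perm_evals}) and obtains $\|\Pi_{\bar\theta}|\vn_\alpha\>\|^2=\tfrac{d_\theta^2}{k!}\cdot\tfrac{(N-k)!}{h_1(\bar\theta)}$ for each $\theta\vdash k$; the total is a $d_\theta^2/k!$-weighted average of $(N-k)!/h_1(\bar\theta)$, and the two-sided bound comes from bounding $h_1(\bar\theta)$ between its extremes at $\theta=(1^k)$ and $\theta=(k)$. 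A recursion with a factor depending only on $j$ and $N$ cannot reproduce this shape-dependent average, so the representation-theoretic decomposition is essential to the permutation case, not an optional alternative.
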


Now let provide some further observations about the space $\spB{k}$. In the reminder of the section let us only consider the permutation case, because some of the arguments become much easier by the fact that every permutation has exactly one correct solution. We will provide some intuition why we define spaces $\overline\cH^\hi_k$ and $\overline\cH^\lo_k$ the way we do, and what other intuitively promising definitions might be misleading.

\paragraph{Subspaces corresponding to solution being found.}

Consider a scenario where the algorithm has perfectly computed $|f^{-1}(\zero)\>$ and nothing else.
It could be achieved in various ways. For example, by running the exact Grover's search or, more slowly, by computing $f(x)$ on all inputs $x$, writing $x'$ such that $f(x')=\zero$ in the result register, and then uncomputing all $f(x)$.
Either way, the reduction of the final state to the function register $\regF$ is $\sum_{x\in\Dom}|\vn_x\>\<\vn_x|/N$, 
which is the maximally mixed state on $\spB{1}$.
Similarly, we can consider a somewhat contrived scenario where, after learning $|f^{-1}(\zero)\>$, the algorithm prepares the uniform superposition over $(k-1)$-tuples of distinct elements in $\Dom\setminus\{f^{-1}(\zero)\}$ and queries $f$ on all $k-1$ inputs.
In this case, the reduced state of the oracle is the uniform mixture over states $|\vn_{x_1,x_2,\ldots,x_{k}}^{\zero,y_2,\ldots,y_{k}}\>$, and its support is 
$\spB{k}$.

Thus, it might be tempting to interpret the oracle's state overlapping $\spB{k}$ as the algorithm having learnt the value of the function on at most $k$ inputs and, in addition, for one of them this value being $\zero$. However, such interpretation is problematic because $\spA{k-1}\subseteq\spB{k}$ in the permutation case and, even in the function case, states in $\spA{k-1}$ may have a significant overlap on $\spB{k}$.
Therefore, for the purposes of such an interpretation, we choose to consider subspace $\overline\cH_k^\hi = \spB{k}\cap(\spA{k-1})^\perp$ instead of $\spB{k}$.
Note that, since we have $\spB{k}\subseteq\spA{k}$, Claim~\ref{clm:LargeOlap0} states that the overlap of $|\vn_{x_1,x_2,\ldots,x_{k}}^{\zero,y_2,\ldots,y_{k}}\>$ on $\overline\cH_k^\hi$ is $\sqrt{1-\OO(k/n)}$.

\paragraph{Subspaces corresponding to solution not being found.}

We have chosen to define the subspace that (approximately) corresponds to the algorithm having learnt the value of the function on exactly $k$ inputs none of whom is the preimage of $\zero$ as $\overline\cH_k^\lo = \spA{k}\cap(\spB{k})^\perp$.
Another natural idea for defining such a subspace would be to define it as 
\[
\spA{k}^{\mathrm{alt}}:=\spn\big\{|\vn_{x_1,\ldots,x_{k}}^{y_1,\ldots,y_{k}}\>
\colon x_1,\ldots,x_{k}\in\Dom\And y_1,\ldots,y_{k}\in\Ran\setminus\{\zero\} \big\}.
\] 
However, \cite{unruh:perm} reports the observation by Minki Hhan that this interpretation of $\spA{k}^{\mathrm{alt}}$ is misleading.
Indeed, in the permutation case, we have $\spA{k}^{\mathrm{alt}}=
\spA{k}$, as seen from
\begin{align*}
 |\vn_{x_1,\ldots,x_{k-1},x_{k}}^{y_1,\ldots,y_{k-1},\zero}\>
 & =\frac{1}{ \sqrt{N-k}}
 |\vn_{x_1,\ldots,x_{k-1}}^{y_1,\ldots,y_{k-1}}\>
 -
 \hspace{-15pt}
 \sum_{y_k\in\Ran\setminus \{\zero,y_1,\ldots,y_{k-1}\}}
 \hspace{-30pt}
 |v_{x_1,\ldots,x_{k-1},x_{k}}^{y_1,\ldots,y_{k-1},y_k}\>
 \\ & =
 \hspace{-10pt}
 \sum_{x'\in\Dom\setminus \{x_1,\ldots,x_{k-1}\}}
 \hspace{-20pt}
 |\vn_{x_1,\ldots,x_{k-1},x'}^{y_1,\ldots,y_{k-1},y'}\>
 -
 \hspace{-15pt}
 \sum_{y_k\in\Ran\setminus \{\zero,y_1,\ldots,y_{k-1}\}}
 \hspace{-30pt}
 |\vn_{x_1,\ldots,x_{k-1},x_{k}}^{y_1,\ldots,y_{k-1},y_k}\>
 \in \spA{k}^{\mathrm{alt}}, 
\end{align*}
where $y'$ is an arbitrary value in $\Ran\setminus \{\zero,y_1,\ldots,y_{k-1}\}$,
and thus $\spB{k}\subseteq\spA{k}^{\mathrm{alt}}$.
Hence, saying that the support of a state lays within $\spA{k}^{\mathrm{alt}}$ does not exclude the scenario of it laying within $\spB{k}$.

The problem is that, while each state spanning $\spA{k}^{\mathrm{alt}}$, i.e., $|\vn_{x_1,\ldots,x_{k}}^{y_1,\ldots,y_{k}}\>$ has only a minuscule overlap on $\spB{k}$, as quantified by the claim below, their linear combination may lay within $\spB{k}$.

\begin{clm}
\label{clm:LargeOlap1}
Given $y_1,\ldots,y_k\ne \zero$, the projection of $|\vn_{x_1,\ldots,x_{k}}^{y_1,\ldots,y_{k}}\>$ on $\overline\cH_k^\lo$ has the norm between $\sqrt{1-k/(N-k)}$ and $\sqrt{1-k/(N-1)}$.
\end{clm}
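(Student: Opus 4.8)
The plan is to reduce this claim to Claim~\ref{clm:LargeOlap0} applied to a permutation problem with one fewer element. We are in the permutation case, so $\overline\cH_k^\lo=\spA{k}\cap(\spB{k})^\perp$ (the $+\spA{k-1}$ term is redundant), and since $\spB{k}\subseteq\spA{k}$ this is an orthogonal decomposition $\spA{k}=\spB{k}\oplus\overline\cH_k^\lo$. As $|\vn_\alpha\>$ is a unit vector in $\spA{k}$, we get $\overline\Pi_k^\lo|\vn_\alpha\>=|\vn_\alpha\>-\Pi_{\spB{k}}|\vn_\alpha\>$, where $\Pi_{\spB{k}}$ is the orthogonal projector onto $\spB{k}$, and hence $\|\overline\Pi_k^\lo|\vn_\alpha\>\|^2=1-\|\Pi_{\spB{k}}|\vn_\alpha\>\|^2$. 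The two claimed bounds on the norm are exactly $\sqrt{1-k/(N-k)}$ and $\sqrt{1-k/(N-1)}$, so it remains to prove $k/(N-1)\le\|\Pi_{\spB{k}}|\vn_\alpha\>\|^2\le k/(N-k)$.

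First I would decompose $\spB{k}$ according to the preimage of $\zero$. Every spanning vector $|\vn_\beta\>$ of $\spB{k}$ has a well-defined $x=\beta^{-1}(\zero)$ (as $\beta$ is injective), and then $\Xi_x^\zero|\vn_\beta\>=|\vn_\beta\>$, so $|\vn_\beta\>\in\img\Xi_x^\zero$; grouping the spanning vectors by this value of $x$ and using that the images of the $\Xi_x^\zero$ are mutually orthogonal yields $\spB{k}=\bigoplus_{x\in\Dom}\spB{k}^{(x)}$, where $\spB{k}^{(x)}:=\spn\{|\vn_\beta\>\colon|\beta|=k,\ \beta(x)=\zero\}\subseteq\img\Xi_x^\zero$. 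Consequently $\|\Pi_{\spB{k}}|\vn_\alpha\>\|^2=\sum_x\|\Pi_{\spB{k}^{(x)}}\Xi_x^\zero|\vn_\alpha\>\|^2$. A short calculation gives $\Xi_x^\zero|\vn_\alpha\>=0$ when $x\in\mathrm{dom}\,\alpha$ (no permutation extending $\alpha$ sends $x$ to $\zero$, since $\zero\notin\mathrm{im}\,\alpha$) and $\Xi_x^\zero|\vn_\alpha\>=\frac{1}{\sqrt{N-k}}|\vn_{\alpha\cup\{x\mapsto\zero\}}\>$ when $x\notin\mathrm{dom}\,\alpha$. Therefore $\|\Pi_{\spB{k}}|\vn_\alpha\>\|^2=\frac{1}{N-k}\sum_{x\notin\mathrm{dom}\,\alpha}\|\Pi_{\spB{k}^{(x)}}|\vn_{\alpha\cup\{x\mapsto\zero\}}\>\|^2$, an average over the $N-k$ elements $x\notin\mathrm{dom}\,\alpha$.

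The last step is to recognize each term via Claim~\ref{clm:LargeOlap0}. For fixed $x\notin\mathrm{dom}\,\alpha$, the restriction map $f\mapsto f|_{\Dom\setminus\{x\}}$ is a bijection from $\{f\colon f(x)=\zero\}$ onto the bijections $\Dom\setminus\{x\}\to\Ran\setminus\{\zero\}$, and the unitary it induces identifies $\img\Xi_x^\zero$ with the oracle space $\cF'$ of the $(N-1)$-element permutation problem on $\Dom\setminus\{x\}$ and $\Ran\setminus\{\zero\}$. Under this identification $\spB{k}^{(x)}$ becomes the subspace $\spA{k-1}$ of $\cF'$ and $|\vn_{\alpha\cup\{x\mapsto\zero\}}\>$ becomes the weight-$k$ assignment vector $|\vn_\alpha\>$ of $\cF'$ (all normalizing constants match because $N-k=(N-1)-(k-1)$). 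Thus $\|\Pi_{\spB{k}^{(x)}}|\vn_{\alpha\cup\{x\mapsto\zero\}}\>\|^2$ equals $\|\Pi_{\spA{k-1}}|\vn_\alpha\>\|^2$ computed inside $\cF'$; since $|\vn_\alpha\>\in\spA{k}$ and $\spA{k-1}\subseteq\spA{k}$, this is $1$ minus the weight of $|\vn_\alpha\>$ on $\spA{k}\cap(\spA{k-1})^\perp$, which Claim~\ref{clm:LargeOlap0} (permutation case, with $N$ replaced by $N-1$) bounds between $1-k/(N-k)$ and $1-k/(N-1)$. Hence each of the $N-k$ terms lies in $[\,k/(N-1),\,k/(N-k)\,]$, and so does their average $\|\Pi_{\spB{k}}|\vn_\alpha\>\|^2$, which finishes the proof.

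I do not expect a serious obstacle once the decomposition $\spB{k}=\bigoplus_x\spB{k}^{(x)}$ and the reduction to size $N-1$ are in place; the part needing genuine care is checking that the restriction unitary really carries $\spB{k}^{(x)}$ onto all of $\spA{k-1}$ of $\cF'$ and $|\vn_{\alpha\cup\{x\mapsto\zero\}}\>$ onto the weight-$k$ vector $|\vn_\alpha\>$ of $\cF'$ with the correct normalizing constants $\sqrt{(N-k)!}$. As with Claim~\ref{clm:LargeOlap0} itself, the lower bound is informative only while $k$ is below roughly $N/2$ and is otherwise vacuous, so nothing extra is needed in that range.
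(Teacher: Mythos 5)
Your proof is correct, and it takes a genuinely different route from the paper's. The paper proves Claim~\ref{clm:LargeOlap1} directly via representation theory: it decomposes $\cH_{\bar\theta}$ further under $\Sym_{\Ran\setminus\{\zero\}}$, uses the fact that these pieces are eigenspaces of the operator $\MOp_k$ with the eigenvalue from Lemma~\ref{lem:IJ_perm_evals}, and extracts $\|\Pi^{\bar\theta_\zero}_{\bar\theta}|\vn_\alpha\>\|^2 = \tfrac{d_{\bar\theta_*}}{\binom{N-1}{k}}\cdot\tfrac{d_\theta}{k!}$, observing at the end that this is exactly the formula from the proof of Claim~\ref{clm:LargeOlap0} with $N$ replaced by $N-1$. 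You instead bypass the character/hook-length machinery entirely and realize the same ``$N\to N-1$'' shift at the level of the decomposition $\spB{k}=\bigoplus_x\spB{k}^{(x)}$ (which the paper also uses, but only in Section~\ref{sec:spAspBviaReps} for a different purpose): projecting onto $\spB{k}$ splits into $N-k$ orthogonal pieces via $\Xi_x^\zero|\vn_\alpha\>=\tfrac{1}{\sqrt{N-k}}|\vn_{\alpha\cup\{x\mapsto\zero\}}\>$, and each piece is, under the restriction unitary $\img\Xi_x^\zero\cong\cF'$, exactly the quantity from Claim~\ref{clm:LargeOlap0} in the $(N-1)$-element problem. What your route buys is conceptual clarity and independence from the explicit eigenvalue formula: you only need the statement of Claim~\ref{clm:LargeOlap0} as a black box (applied at size $N-1$), not its proof internals. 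What the paper's route buys is an exact closed-form expression for $\|\Pi^{\bar\theta_\zero}_{\bar\theta}|\vn_\alpha\>\|^2$ irrep-by-irrep, which is finer information than your averaging argument produces. One small remark on exposition: it is worth stating explicitly that the bijection $\{f:f(x)=\zero\}\to\{\text{bijections }\Dom\setminus\{x\}\to\Ran\setminus\{\zero\}\}$ given by restriction maps orthonormal basis to orthonormal basis, hence induces a unitary $\img\Xi_x^\zero\to\cF'$; you flag this as ``the part needing genuine care,'' and indeed it is the one place a reader might want the isometry spelled out, but once written it is routine.
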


We prove Claim~\ref{clm:LargeOlap1} along the same lines as Claim~\ref{clm:LargeOlap0} in Appendix~\ref{app:LargeOlaps}.
These two claims are not used when proving the main results of the paper, but they give valuable intuition about the relation between states $|\vn_{x_1,\ldots,x_{k}}^{y_1,\ldots,y_{k}}\>$ and $|\vn_{x_1,x_2,\ldots,x_{k}}^{\zero,y_2,\ldots,y_{k}}\>$ with  $y_1,\ldots,y_k\ne \zero$ and spaces $\overline\cH_k^\lo$ and $\overline\cH_k^\hi$.

\section{Framework for the Proof of the Main Theorem}
\label{sec:MainTmFrame}

In this section, we provide framework for proving Theorem~\ref{thm:mainThm}. We start with a full proof for Point~1 of the theorem, this being the simplest point. Then we show how the proof of Point~2 can be reduced to upper-bounding the norm of $\Xi_x^\zero\Pi^\lo_{k}$. The proof of Point~3 can also be reduced to upper-bounding norms of certain operators. In particular, we need to bound the norm given in Eqn.~(\ref{eq:NormSimpl}). While, for the function case, we do know how to bound this norm, for the permutation case, we have to go further and we can bound it by bounding the the norms of the operators given in Eqn.~(\ref{eqn:GBTriangle}). 

As for the actual bounds for these norms, for the function case, we provide them in Section~\ref{sec:ZhandryApproach}, where we discuss connections to Zhandry's compressed oracle model. For the permutation case, we bound them in Section~\ref{sec:MainTmProof}, after we have introduced the representation theory of symmetric group in Section~\ref{sec:RepPrelim} and expressed the relevant operators in its terms in Section~\ref{sec:ABviaRepTh}.

\paragraph{Spaces reachable by $k$ queries (proof of Point 1).}

We prove Point 1 by induction on $k$. Note that for all $k$ the operator $\Pi^\lo_k+\Pi^\hi_k$ is the orthogonal projector on $\cH^\lo_k\oplus\cH^\hi_k=\spA{k}$. 

For the base case $k=0$, both $|\phi_0\>$ and $|\psi_0\>$ are product states with respect to the oracle-algorithm separation, and their reduces state of the oracle part is $|\vn_\emptyset\>\in\spA{0}$.

For the inductive step, recall the expression~(\ref{eq:ODecomp}) that expresses $O$ via projectors $\Xi_x^y$. It suffices to show that, when $\Xi_x^y$ acts on the vectors spanning $\spA{k}$, namely, vectors in the form $|\vn_{x_1,\ldots,x_{k}}^{y_1,\ldots,y_{k}}\>$, the result lays in $\spA{k+1}$.
In the permutation case, we have
\[
\Xi_x^y |\vn_{x_1,\ldots,x_{k}}^{y_1,\ldots,y_{k}}\> =
\begin{cases}
|\vn_{x_1,\ldots,x_{k}}^{y_1,\ldots,y_{k}}\> & \text{if }\exists i\colon (x_i,y_i)=(x,y),\\
\frac{1}{\sqrt{N-k}}|\vn_{x_1,\ldots,x_{k},x}^{y_1,\ldots,y_{k},y}\> & \text{if }\forall i\colon x_i\ne x\text{ and }\forall i\colon y_i\ne y, \\
0 & \text{if }\exists i\colon (x_i= x\And y_i\ne y) \text{ or } i\colon (x_i\ne x\And y_i = y),
\end{cases}
\]
which is in $\spA{k+1}$. Similarly, for the function case, we have
\[
\Xi_x^y |\vn_{x_1,\ldots,x_{k}}^{y_1,\ldots,y_{k}}\> =
\begin{cases}
|\vn_{x_1,\ldots,x_{k}}^{y_1,\ldots,y_{k}}\> & \text{if }\exists i\colon (x_i,y_i)=(x,y),\\
\frac{1}{\sqrt{N}}|\vn_{x_1,\ldots,x_{k},x}^{y_1,\ldots,y_{k},y}\> & \text{if }\forall i\colon x_i\ne x, \\
0 & \text{if }\exists i\colon (x_i= x\And y_i\ne y) .
\end{cases}
\]

\paragraph{Overall success probability (proof framework for Point 2).}

First, note that
\begin{multline*}
\big\|P|\psi_k\>\big\|
\le \big\|P\widetilde\Pi^\lo_{k}|\psi_k\>\big\| + \big\|P\widetilde\Pi^\hi_{k}|\psi_k\>\big\|
\\
\le \big\|P\widetilde\Pi^\lo_{k}\big\|\cdot\big\||\psi_k\>\big\| + \big\|P\big\|\cdot\big\|\widetilde\Pi^\hi_{k}|\psi_k\>\big\|
= \big\|P\widetilde\Pi^\lo_{k}\big\| + \alpha_k.
\end{multline*}
Recall Eqn.~(\ref{eq:PDecomp}) that expresses $P$ via projectors $\Xi_x^\zero$.
By Property~5 of Fact~\ref{fac:norms} we have
\[
\big\|P\widetilde\Pi^\lo_{k}\big\|
=
\Big\|\sum_{x\in\Dom}\Xi_x^\zero\Pi^\lo_{k}\otimes I_{\regW''}\otimes|x\>\<x|_\regR\Big\|
=
\big\|\Xi_x^\zero\Pi^\lo_{k}\big\|.
\]
For functions, we evaluate this norm in Section~\ref{ssec:funcProofs}. For permutations, we evaluate the norm in Sections~\ref{sec:RepPrelim}--\ref{sec:MainTmProof} using the representation theory.

\paragraph{Amplitude transfer from ``low'' to ``high'' subspaces (proof framework for Point 3).}

Point 1 of the theorem states
\[
\big(\widetilde\Pi^\lo_{k-1}+\widetilde\Pi^\hi_{k-1}\big)|\psi_{k-1}\>=|\psi_{k-1}\>.
\]
By applying the triangle inequality, 
\begin{equation}
\label{eqn:alphaGap}
\begin{split}
\alpha_k 
 = &\,\big\|\widetilde\Pi^\hi_{k}\widetilde{O}|\psi_{k-1}\>\big\|
\\ \le &\,\big\|\widetilde\Pi^\hi_{k}\widetilde{O}\widetilde\Pi^\hi_{k-1}|\psi_{k-1}\>\big\|
 + \big\|\widetilde\Pi^\hi_{k}\widetilde{O}\widetilde\Pi^\lo_{k-1}|\psi_{k-1}\>\big\|
\\ \le &\,\big\|\widetilde\Pi^\hi_{k}\widetilde{O}\big\|\cdot\big\|\widetilde\Pi^\hi_{k-1}|\psi_{k-1}\>\big\|
 + \big\|\widetilde\Pi^\hi_{k}\widetilde{O}\widetilde\Pi^\lo_{k-1}\big\|\cdot\big\||\psi_{k-1}\>\big\|
\\ = &\,1\cdot\alpha_{k-1}
 + \big\|(\Pi^\hi_{k}\otimes I_{\regI\regO})O(\Pi^\lo_{k-1}\otimes I_{\regI\regO})\big\|\cdot1.
\end{split}
\end{equation}
For the last norm above, we use the expression (\ref{eq:ODecomp}) for $O$ to obtain\begin{multline}
\label{eq:NormSimpl}
\big\|(\Pi^\hi_{k}\otimes I_{\regI\regO})O(\Pi^\lo_{k-1}\otimes I_{\regI\regO})\big\|
\\ =
\Big\|\big(\Pi^\hi_{k}\otimes I_{\regI\regO}\big)
\sum_{x,y,y'}
\big(\Xi_x^y\otimes|x\>\<x|\otimes|y'\oplus y\>\<y|\big)
\big(\Pi^\lo_{k-1}\otimes I_{\regI\regO}\big)\Big\|
\\=
\Big\|\sum_{y,y'}
\big(\Pi^\hi_{k}\Xi_x^y\Pi^\lo_{k-1}\otimes|y'\oplus y\>\<y'|\big)\Big\|.
\end{multline}

For the function case, in Section~\ref{ssec:funcProofs}, we will be able to relatively easily show that (\ref{eq:NormSimpl})  equals $\sqrt{2/N}$. While the following framework still holds for the function case as well, its main purpose is its application for the permutation case.

To upper bound the rightmost norm of  (\ref{eq:NormSimpl}), let us rewrite the operator under the norm as a sum of two slightly more complex operators, yet whose norms are easier to bound.
This decomposition is inspired by decomposing a single call to the oracle as a sequence of a computing oracle call and an uncomputing oracle call (for example, as in~\cite{amrr:index}); see Figure~\ref{fig:CompUncomp}. Such a decomposition introduces another copy of the oracle output register $\regO$, and we implicitly do the same in the equality (\ref{eqn:CompUncompAll}) below.

\begin{figure}[ht!]
\begin{tikzcd}
\lstick{$\ket{x}_{\regI}$\hspace{6.2pt}} & \gate[wires=2]{O_f} &  \qw\rstick{$\ket{x}_{\regI}$} \\
\lstick{$\ket{y'}_{\regO}$\hspace{0.4pt}} & & \qw\rstick{$\ket{y'\oplus f(x)}_{\regO}$} \\
\lstick{$\ket{\mathbf{0}}_{\regO'}$} & \qw & \qw\rstick{$\ket{\mathbf{0}}_{\regO'}$} \\
\end{tikzcd}
\quad$=$\quad
\begin{tikzcd}
\lstick{$\ket{x}_{\regI}$\hspace{6.2pt}} & \qw & \gate[wires=2]{O_f} & \qw & \gate[wires=2]{O_f} & \qw & \qw\rstick{$\ket{x}_{\regI}$}
\\
\lstick{$\ket{y'}_{\regO}$\hspace{0.4pt}}  & \gate[swap]{} &  & \ctrl{1} & & \gate[swap]{}&\qw\rstick{$\ket{y'\oplus f(x)}_{\regO}$}
\\
\lstick{$\ket{\mathbf{0}}_{\regO'}$} & & \qw & \qw\oplus & \qw & &\qw\rstick{$\ket{\mathbf{0}}_{\regO'}$}
\end{tikzcd}
\caption{\small A single oracle call can be decomposed as a sequence of two oracle calls, so that, in the output register $\regO$, the first call always starts with $\ket{\mathbf{0}}$ and results in $\ket{f(x)}$ and the second call always starts with $\ket{f(x)}$ and results in $\ket{\mathbf{0}}$.
They are called computing and uncomputing oracle calls, respectively.
}
\label{fig:CompUncomp}
\end{figure}
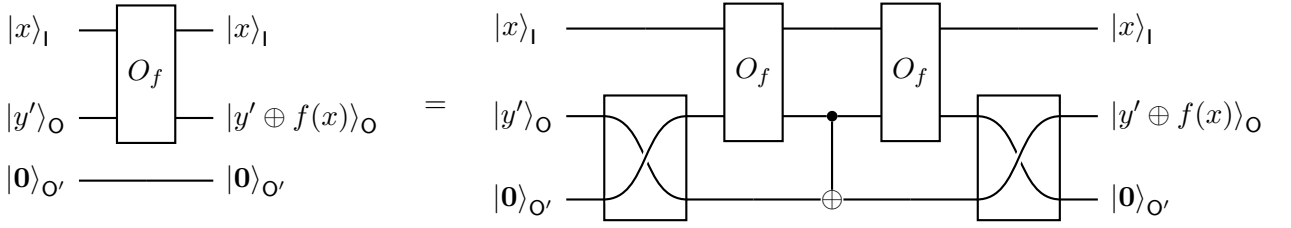

We have $\Xi_x^y=\Xi_x^y\Xi_x^y$ and, by Point~1, we have 
\[
\Xi_x^y\Pi^\lo_{k-1} =
(\Pi^\lo_{k}+\Pi^\hi_{k})\Xi_x^y\Pi^\lo_{k-1}.
\]
Hence, we can write
\[
\Pi^\hi_{k}\Xi_x^y\Pi^\lo_{k-1} = \Pi^\hi_{k}\Xi_x^y(\Pi^\hi_{k}+\Pi^\lo_{k})\Xi_x^y\Pi^\lo_{k-1}.
\]
Using this, we decompose the operator under the rightmost norm in (\ref{eq:NormSimpl}) as
\begin{subequations}
\label{eqn:CompUncompAll}
\begin{align}
&\sum_{y,y'} \big(\Pi^\hi_{k}\Xi_x^y\Pi^\lo_{k-1}\otimes|y'\oplus y\>\<y'|\big)
\\\label{eqn:CompUncompB}
& =
\sum_{y,y'} \big(\Pi^\hi_{k}\Xi_x^y \otimes|y'\oplus y\>\<y'|\otimes\<y|\big)
\sum_{y} \big(\Pi^\hi_{k}\Xi_x^y\Pi^\lo_{k-1}\otimes I_N\otimes|y\>\big)
\\\label{eqn:CompUncompC}
& \quad +
\sum_{y} \big(\Pi^\hi_{k}\Xi_x^y\Pi^\lo_{k}\otimes I_N\otimes\<y|\big)
\sum_{y,y'} \big(\Xi_x^y\Pi^\lo_{k-1} \otimes|y'\oplus y\>\<y'|\otimes|y\>\big),
\end{align}
\end{subequations}
where $I_N$ is the $N$-dimensional identity matrix.
Let us separately bound the norms of the four sums in (\ref{eqn:CompUncompB}) and (\ref{eqn:CompUncompC}). For the former sum in (\ref{eqn:CompUncompB}), we have
\[
\Big\|\sum_{y,y'} \big(\Pi^\hi_{k}\Xi_x^y \otimes|y'\oplus y\>\<y'|\otimes\<y|\big)\Big\|
\le
\big\|\Pi^\hi_{k}\otimes I_N\big\|
\Big\|\sum_{y,y'} \big(\Xi_x^y \otimes|y'\oplus y\>\<y'|\otimes\<y|\big)\Big\|=1,
\]
where the last equality holds because, for distinct $y$, $\Xi^y_x$ project on orthogonal spaces and because $\sum_{y'}|y'\oplus y\>\<y'|$ is a unitary. Similarly, the norm of the latter sum in (\ref{eqn:CompUncompC}) is at most $1$. Now consider the remaining two sums. We bound the norm of the latter sum in (\ref{eqn:CompUncompB}) as
\begin{align*}
& \Big\|
\sum_{y} \big(\Pi^\hi_{k}\Xi_x^y\Pi^\lo_{k-1}\otimes I_N\otimes|y\>\big)
\Big\|
=
\Big\|
\sum_{y} \big(\Pi^\lo_{k-1}\Xi_x^y\Pi^\hi_{k}\otimes \<y|\big)
\sum_{y} \big(\Pi^\hi_{k}\Xi_x^y\Pi^\lo_{k-1}\otimes |y\>\big)
\Big\|^{1/2}
\\ & \qquad
=
\Big\|
\sum_{y} \big(\Pi^\lo_{k-1} \Xi_x^y \Pi^\hi_{k} \Xi_x^y \Pi^\lo_{k-1}\big)
\Big\|^{1/2}
\le 
\Big(
\sum_{y} \big\|\Pi^\lo_{k-1} \Xi_x^y \Pi^\hi_{k} \Xi_x^y \Pi^\lo_{k-1}\big\|
\Big)^{1/2}
\\ & \qquad
=
\Big(
\sum_{y} \big\|\Pi^\lo_{k-1} \Xi_x^y \Pi^\hi_{k} \big\|^2
\Big)^{1/2}
\le
\Big(
\sum_{y} \big\|\Pi^\lo_k \Xi_x^y \Pi^\hi_{k} \big\|^2
\Big)^{1/2},
\end{align*}
where the last inequality is from Property~3 of Fact~\ref{fac:norms}.
We obtain the same upper bound on the former sum (\ref{eqn:CompUncompC}) analogously. 
By the triangle inequality, we have thus obtained
\begin{equation}
\label{eqn:GBTriangle}
\begin{split}
\Big\|\big(\Pi^\hi_{k}\otimes I_{\regI\regO}\big)
O
\big(\Pi^\lo_{k-1}\otimes I_{\regI\regO}\big)\Big\|
\, & \le
2\Big(
\sum_{y} \big\|\Pi^\lo_{k} \Xi_x^y \Pi^\hi_{k} \big\|^2
\Big)^{1/2}.
\end{split}
\end{equation}

In Section~\ref{sec:ZhandryApproach}, we show that the right-hand side of (\ref{eqn:GBTriangle}) evaluates to approximately $2\sqrt{2/N}$
 in the function case.
However, in the same section, we directly evaluate already (\ref{eq:NormSimpl}), giving the twice-as-strong bound of $\sqrt{2/(N-1)}$.
For the permutation case, evaluating the right-hand side of (\ref{eqn:GBTriangle}) seems to be much harder, and we bound it in Sections~\ref{sec:RepPrelim}--\ref{sec:MainTmProof} using the representation theory.

\section{Connection to Zhandry's Compressed Oracle Framework for Random Functions}
\label{sec:ZhandryApproach}

Recall that, for the Unstructured Search problem on random functions, $\Func$ is the set all functions $f\colon\Dom\rightarrow\Ran$. Here we introduce an alternative definitions for the high and low success subspaces $\cH^\hi$ and $\cH^\lo$ for the function case, and illuminate the connection to Zhandry's framework. We also conclude the proof of Theorem~\ref{thm:mainThm} for the function case.

\subsection{Tensor-Product Structure of the Oracle Space}

\paragraph{On Zhandry's compressed oracle framework.}

One of the problems studied in our work is Unstructured Search on random functions. There are many other problems on random functions one may consider, for example, the Collision Finding problem, which asks to find two distinct $x,x'\in\Dom$ such that $f(x)=f(x')$. For such problems, Zhandry provided a framework for upper bounding success probability of a query algorithm~\cite{zhandry:record}. Here we elaborate on Zhardry's approach for our problem of interest, the Unstructured Search problem.

In essence, Zhandry considered an oracle that has an $(N+1)$-dimensional quantum register $\regf^+_x$ for each input $x\in\Dom$, thus the total quantum memory of the oracle being $(N+1)^M$-dimensional. The space corresponding to each $\regf^+_x$ has the standard orthonormal basis $\{|y\>\colon y\in\Ran\}\cup\{\ketPerp\}$ where $\ketPerp$ is the special state indicating that the algorithm has not learnt the value of $f(x)$ (or, if it has learnt it, then it has subsequently unlearnt it). The oracle starts in state $\bigotimes_{x\in\Dom}\ketPerp_{\regf^+_x}$.

Assuming we are at a point where value $f(x)$ has not been learnt so far and the oracle is asked for this value, the oracle can uniformly at random decide on this value. Note that the state $\sum_{y\in\Ran}|y\>/\sqrt{N}$ naturally corresponds to such a random choice. And, indeed, if the oracle is about to query $f(x)$, Zhandry's method proceeds by ``sandwiching'' the oracle call between two calls of the operation that on register $\regf^+_x$ swaps states $\ketPerp$ and $\sum_{y\in\Ran}|y\>/\sqrt{N}$ and leaves the remaining $N-1$ dimensions of register $\regf^+_x$ as well as the states of all the other registers of the oracle fixed. This ``uncompressing/compressing'' procedure ensures that, when the oracle call copies the value $y=f(x)$ from register $\regf^+_x$ to the memory of the algorithm (as in Fig.~\ref{fig:oracle}), the state of $\regf^+_x$ is orthogonal to $\ketPerp$. The procedure is also very important for cryptographic applications where one requires a time efficient simulation of the oracle.

To upper bound the success probability for Unstructured Search, Zhandry observed the following. Let $p_k$ be the probability that, if one measured all the registers $\regf^+_x$ of the oracle after $k$ oracle calls in the standard basis, at least one of them would measure to state $|\zero\>$. Then we have%
\footnote{At the first glance, it may seem that the proof of \cite[Theorem~1]{zhandry:record} suggests that the increase in the square root of the approximate success probability, i.e., $\sqrt{p_{k+1}}-\sqrt{p_k}$, is at most $1/\sqrt{N}$. However, that proof has a typo: for two mutually orthogonal projectors $Q$ and $R$ and a unit vector $|\psi\>$, $\|Q|\psi\>\|+\|R|\psi\>\|$ is at most $\sqrt{2}$, not at most $1$.}
\begin{subequations}
\label{eqn:boundZhandry}
\begin{align}
\label{eqn:boundZhandryA}
& \sqrt{p_0}=0; \\
\label{eqn:boundZhandryB}
& \text{the success probability of the algorithm is at most }(\sqrt{p_\kappa}+1/\sqrt{N})^2; \\
\label{eqn:boundZhandryC}
& \sqrt{p_{k+1}}\le \sqrt{p_k}+\sqrt{2/N}.
\end{align}
\end{subequations}
We refer to $p_k$ as the \emph{approximate success probability}.

Notice that, if we associate $\sqrt{p_k}$ in (\ref{eqn:boundZhandry}) with $\alpha_k$ in Theorem~\ref{thm:mainThm}, (\ref{eqn:boundZhandryB}) and (\ref{eqn:boundZhandryC}) are almost equivalent to, respectively, Points 2 and 3 of Theorem~\ref{thm:mainThm}. This correspondence is not coincidental.

\paragraph{Our framework.}

 As discussed in Section~\ref{sec:FuncReg}, in our approach, the memory of the oracle is $M^N$-dimensional space $\cF$ with orthonormal basis $\{ |f\>_\regF\colon f\in\Func\}$. 
We can follow a very similar path to that of Zhandry by decomposing $\cF$ as a tensor product of $M$ subspaces, except without introducing the extra dimension to these subspaces to accommodate state $\ketPerp$.  That is, we decompose the quantum memory $\regF$ into $M$ (sub)registers $\regf_x$, where $x\in\Dom$ and each $\regf_x$ corresponds to an $N$-dimensional space with the standard basis $\{|y\>\colon y\in\Ran\}$.
  Thereby, we can represent state $|f\>_\regF$ by each register $\regf_x$ holding one value of function $f$, namely, 
\[
|f\>_\regF = \bigotimes_{x\in\Dom}|f(x)\>_{\regf_x}.
\]
Not having the state $\ketPerp$, we also do not use Zhandry's ``uncompressing/compressing'' procedure.

We interpret register $\regf_x$ being in state 
\[
|\Fzero\>:=\frac{1}{\sqrt{N}}\sum_{y\in\Ran}|y\>
\]
 to mean that the algorithm has not learnt $f(x)$. While in Zhandry's approach, states $\ketPerp_{\regf^+_x}$ and $|\zero\>_{\regf^+_x}$ were orthogonal, in our case $|\Fzero\>_{\regf_x}$ and $|\zero\>_{\regf_x}$ have inner product $1/\sqrt{N}$. Therefore, to prove a similar statement as above, for defining approximate success probability $p_k$, we would not want to define $p_k$ by measuring $\regf_x$ on $|\zero\>$ and the space orthogonal to it. Instead, consider the state
\[
|\vecr\> := \sqrt{\frac{N-1}{N}}|\zero\> - \frac{1}{\sqrt{N(N-1)}}\sum_{y\ne\zero}|y\>,
\]
which is the closest state to $|\zero\>$ that is orthogonal to $|\Fzero\>$ and consider a two outcome measurement that measures $\regf_x$  on the one-dimensional space $\cR_0:=\spn\{|\vecr\>\}$ and the $(N-1)$-dimensional space orthogonal to $|\vecr\>$; let's call the latter space $\cR_1$. Let us show that, if we defined $p_k$ to be the probability that at least for one value of $x$ the register $\regf_x$ measures to $\cR_0$, we would obtain the statement essentially equivalent to the function case of Theorem~\ref{thm:mainThm}, and, as a result,  (\ref{eqn:boundZhandry}) with $\sqrt{2/(N-1)}$ instead of $\sqrt{2/N}$ in (\ref{eqn:boundZhandryC}) would be satisfied. Such equivalence is established by the following claim. 

\begin{clm}
\label{clm:HLalt}
In the function case, we have 
\[
\cH^\hi = \bigoplus_{\substack{b\colon\Dom\rightarrow\{0,1\}\\\exists x\colon b(x)=0}}\bigotimes_{x\in\Dom}(\cR_{b(x)})_{\regf_x}
\qquad\text{and}\qquad
\cH^\lo = \bigotimes_{x\in\Dom}(\cR_1)_{\regf_x}.
\]
\end{clm}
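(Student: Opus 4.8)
\textbf{Proof plan for Claim~\ref{clm:HLalt}.}

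The plan is to show both displayed identities by identifying a convenient orthonormal basis for $\cF$ that is simultaneously adapted to the tensor-product decomposition $\cF=\bigotimes_{x}(\cF)_{\regf_x}$ and to the filtration $\spA{0}\subseteq\spA{1}\subseteq\cdots\subseteq\spA{M}=\cF$. First I would fix, for each register $\regf_x$, the orthonormal basis of its $N$-dimensional space consisting of $|\Fzero\>$, the vector $|\vecr\>$, and an arbitrary orthonormal basis of $\cR_0^\perp\cap|\Fzero\>^\perp$ (an $(N-2)$-dimensional space); note $\cR_1=\spn\{|\vecr\>\}^\perp=\spn\{|\Fzero\>\}\oplus(\cR_0^\perp\cap|\Fzero\>^\perp)$. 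Then I would observe that for an assignment $\alpha$ with $\mathrm{dom}\,\alpha=\{x_1,\dots,x_k\}$ and $\alpha(x_i)=y_i$, the superposition state factorizes as $|\vn_\alpha\>=\big(\bigotimes_{i}|y_i\>_{\regf_{x_i}}\big)\otimes\big(\bigotimes_{x\notin\mathrm{dom}\,\alpha}|\Fzero\>_{\regf_x}\big)$, directly from the definition $|f\>_\regF=\bigotimes_x|f(x)\>_{\regf_x}$ and the fact that $|\vn_\alpha\>$ is the uniform superposition over all $f$ consistent with $\alpha$.

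Next I would translate the definitions of the high/low subspaces into this basis. From the factorization above, $\spA{k}$ is spanned by product vectors in which at least $M-k$ of the tensor factors equal $|\Fzero\>$; hence $\spA{k}$ is the span of all product basis vectors with at most $k$ factors \emph{not} equal to $|\Fzero\>$ — but one must be slightly careful, since $\spA{k}$ is defined as a span of the (non-orthogonal) $|\vn_\alpha\>$ and a factor $|y_i\>$ with $y_i=\zero$ is itself a combination of $|\Fzero\>$ and $|\vecr\>$. The clean way is: in the basis $\{|\Fzero\>,|\vecr\>,\dots\}$, expand $|y_i\>$ and collect terms; this shows $\spA{k}=\bigoplus_{S:\,|S|\le k}\big(\bigotimes_{x\in S}(\spn\{|\vecr\>\}\oplus(\cR_0^\perp\cap|\Fzero\>^\perp))_{\regf_x}\big)\otimes\big(\bigotimes_{x\notin S}\spn\{|\Fzero\>\}_{\regf_x}\big)$, i.e. $\spA{k}$ is spanned by product basis vectors whose number of non-$|\Fzero\>$ factors is at most $k$, where "non-$|\Fzero\>$" ranges over $|\vecr\>$ and the $(N-2)$-dimensional remainder. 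Then $\overline\cH_k^\lo=\spA{k}\cap(\spA{k-1})^\perp$ is the span of product basis vectors with \emph{exactly} $k$ non-$|\Fzero\>$ factors, none of which is $|\vecr\>$ (because $\spB{k}$, and hence $\cH_k^\hi$, is where at least one factor can carry the "$\zero$" information — I would make this precise by the same expansion applied to $|\vn_\alpha\>$ with $\zero\in\mathrm{im}\,\alpha$). Summing over $k$ from $0$ to $M$: $\cH^\lo=\cH^\lo_M$ is spanned by product basis vectors with \emph{no} factor equal to $|\vecr\>$ and no restriction otherwise, i.e. every factor lies in $\spn\{|\Fzero\>\}\oplus(\cR_0^\perp\cap|\Fzero\>^\perp)=\cR_1$; this gives $\cH^\lo=\bigotimes_x(\cR_1)_{\regf_x}$. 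Complementarily, $\cH^\hi=(\cH^\lo)^\perp$ within $\cF$ is spanned by product basis vectors with at least one factor equal to $|\vecr\>$ (equivalently, at least one factor in $\cR_0$), which is precisely $\bigoplus_{b:\exists x,b(x)=0}\bigotimes_x(\cR_{b(x)})_{\regf_x}$.

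The main obstacle I anticipate is the second bullet point above: verifying that $\overline\cH_k^\hi$ (defined via $\spB{k}$) corresponds exactly to "at least one of the $k$ non-$|\Fzero\>$ factors equals $|\vecr\>$" rather than something larger, i.e. that the $+\spA{k-1}$ and intersection operations interact correctly with the non-orthogonal generators $|\vn_\alpha\>$. Concretely, one needs: (i) $\spB{k}+\spA{k-1}$ equals the span of product basis vectors with at most $k$ non-$|\Fzero\>$ factors, at least one of which — among a chosen $k$ — may be $|\vecr\>$; and (ii) its orthogonal complement inside $\spA{k}$ is exactly the "all $k$ factors in $\cR_0^\perp\cap|\Fzero\>^\perp$" part. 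I would handle this by the explicit change-of-basis expansion $|\vn^{y_1,\dots,y_k}_{x_1,\dots,x_k}\>$ in the $\{|\Fzero\>,|\vecr\>,\cdots\}$ basis on each touched register, separating the cases $y_i=\zero$ (factor $=\sqrt{(N-1)/N}\,|\vecr\>+\tfrac1{\sqrt N}|\Fzero\>$, expanded further) and $y_i\ne\zero$, and checking the inclusions in both directions; once the filtration is correctly described basis-wise, summing telescopes immediately. An alternative, and perhaps cleaner, route avoiding $\overline\cH_k^\hi$ altogether is to prove only the two stated identities for $\cH^\lo$ and $\cH^\hi=(\cH^\lo)^\perp$ directly: show $\cH^\lo_M=\bigotimes_x(\cR_1)_{\regf_x}$ by checking that this tensor space is exactly $\mathrm{span}\{|\vn_\alpha\>:|\alpha|\le M,\ \zero\notin\mathrm{im}\,\alpha\}$ via the factorization, which sidesteps the per-level bookkeeping.
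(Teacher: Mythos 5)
Your main route — factorize $|\vn_\alpha\>$ over the registers $\regf_x$, work in the basis $\{|\Fzero\>,|\vecr\>,\ldots\}$ of each $N$-dimensional factor, and track how many factors are orthogonal to $|\Fzero\>$ and how many of those equal $|\vecr\>$ — is essentially the paper's proof. The paper packages the same idea as Lemma~\ref{lem:HLaltGen}, stated in terms of projectors $\FE_0$, $\FE_1 = R_0 + \Erest$, and $\Erest$ rather than basis vectors, and then derives the claim by noting $\Erest+\FE_0 = R_1$. The obstacle you flag (identifying $\overline\cH_k^\hi$ precisely, i.e.\ that the intersection of $\spB{k}$ with $\spA{k-1}^\perp$ picks out exactly the products with at least one $|\vecr\>$ factor) is indeed where all the work is. The paper resolves it by a short spectral argument: it computes $\FE^{(\Dom)}_k\bigl(\sum_S\Psi_S\bigr)\FE^{(\Dom)}_k = \sum_S\bigl(\bigotimes_{x\in S}\FE_1-\bigotimes_{x\in S}\FE_1E_1\FE_1\bigr)\otimes\bigotimes_{x\notin S}\FE_0$, observes that $\FE_1E_1\FE_1 = \Erest + \tfrac1N R_0$ has the same support as $\FE_1$, and reads off that the support of the difference has projector $\bigotimes_{x\in S}\FE_1 - \bigotimes_{x\in S}\Erest$. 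Your "explicit change-of-basis expansion" of the individual $|\vn_\alpha\>$'s with $\zero\in\mathrm{im}\,\alpha$ would eventually land in the same place, but the paper's approach avoids tracking spans of the non-orthogonal generators directly.

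Your suggested alternative route at the end, however, is wrong: $\cH^\lo$ is \emph{not} $\spn\{|\vn_\alpha\>\colon|\alpha|\le M,\ \zero\notin\mathrm{im}\,\alpha\}$. For $y\ne\zero$ we have $\<\vecr|y\> = -1/\sqrt{N(N-1)}\ne 0$, so $|y\>\notin\cR_1$; hence for an assignment $\alpha$ with $|\alpha|=1$ and $\alpha(x_1)=y\ne\zero$, the product vector $|\vn_\alpha\> = |y\>_{\regf_{x_1}}\otimes\bigotimes_{x\ne x_1}|\Fzero\>_{\regf_x}$ has a nonzero $\cR_0$-component on register $\regf_{x_1}$ and therefore lies partly in $\cH^\hi$. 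So that span strictly contains $\bigotimes_x(\cR_1)_{\regf_x}$, and equating them is false. (Relatedly, you also misstate $\overline\cH_k^\lo = \spA{k}\cap(\spA{k-1})^\perp$ — that set is $\overline\cH_k^\lo\oplus\overline\cH_k^\hi$, not $\overline\cH_k^\lo$ — though your parenthetical comment indicates you meant the correct definition with $\spB{k}$.) The per-level bookkeeping through $\overline\cH_k^\lo$ and $\overline\cH_k^\hi$ cannot be sidestepped the way you propose; the paper's route through Lemma~\ref{lem:HLaltGen} is the one that actually works.
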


This claim is a corollary of Lemma~\ref{lem:HLaltGen}. Before we state and prove Lemma~\ref{lem:HLaltGen}, let us introduce some notation required for the lemma and for the conclusion of the proof of the function case of Theorem~\ref{thm:mainThm}.

\subsection{Alternative Definitions for High and Low Subspaces}

\paragraph{States and operators for a single function-value register $\regf_x$.}

Consider any $x$, its corresponding register $\regf_x$, which in turn corresponds to $N$-dimensional space $\cH$ with orthonormal basis $\{|y\>\colon y\in\Ran\}$.
For each such register, we define the following states in $\cH$ and operators on $\cH$.

Consider the standard basis state $|\zero\>$. Let $E_0:=|\zero\>\<\zero|$ be the projector on $|\zero\>$, and let $E_1:=I_N-E_0$.
Recall the \emph{Fourier zero-state} defined as
$|\Fzero\> = \sum_{y\in\Ran}|y\>/\sqrt{N}$.
Let $\FE_0:=|\Fzero\>\<\Fzero|$ be the projector on $|\Fzero\>$,
and let $\FE_1:=I_N-\FE_0$.
Let $\Erest$ be the projector on the $(N-2)$-dimensional $
(+1)$-eigenspace of $E_1\FE_1 E_1$, that is, the space orthogonal to both $|\zero\>$ and $|\Fzero\>$.

Recall the vector $|\vecr\>$ and spaces $\cR_0$ and $\cR_1$, and note that we have
\[
|\vecr\> = \FE_1|\zero\>\big/\|\FE_1|\zero\>\| = \FE_1|\zero\>\big/\sqrt{1-1/N}.
\]
Let $R_0:=|\vecr\>\<\vecr|$ and $R_1:=I_N-R_0$ be the projectors on $\cR_0$ and $\cR_1$, respectively. Note that $\FE_1 = |\vecr\>\<\vecr| + \Erest$.

\paragraph{States and operators for the whole function register $\regF$.}

For a subset $S\subseteq\Dom$, let
\[
\FE_S :=
\bigotimes_{x\in S}(\FE_1)_{\regf_x}
\otimes
\bigotimes_{x\notin S}(\FE_0)_{\regf_x},
\]
where $x\notin S$ is short for $x\in\Dom\setminus S$.
$\FE_S$ acts on the whole space $\cF$ of the oracle, and note that 
$I_\regF = \sum_{S\subseteq\Dom}\FE_S$.
Define
\[
\FE^{(\Dom)}_{k} := \sum_{\substack{S\subseteq \Dom\\|S|=k}}\FE_S.
\]
For the following lemma, recall that $\overline\cH_k^\lo=\cA_k\cap(\cB_k+A_{k-1})^\perp$, $\overline\cH_k^\hi=\cB_k\cap\cA_{k-1}^\perp$, and $\overline\cH_k^\lo\oplus\overline\cH_k^\hi=\spA{k}\cap \spA{k-1}^\perp$.

\begin{lem}
\label{lem:HLaltGen}
We have
\begin{alignat*}{2}
& \overline\Pi_k^\lo
& = & \sum_{\substack{S\subseteq\Dom\\|S|=k}} 
\bigg(
 \bigotimes_{x\in S}(\Erest)_{\regf_x}
\otimes
\bigotimes_{x\notin S}(\FE_0)_{\regf_x}
\bigg),
\\
& \overline\Pi_k^\hi &\; = &
\sum_{\substack{S\subseteq\Dom\\|S|=k}} 
\bigg(
 \Big(\bigotimes_{x\in S}(\FE_1)_{\regf_x}-\bigotimes_{x\in S}(\Erest)_{\regf_x}\Big)
\otimes
\bigotimes_{x\notin S}(\FE_0)_{\regf_x}
\bigg),
\end{alignat*}
and therefore $\overline\Pi_k^\lo+\overline\Pi_k^\hi = \FE^{(\Dom)}_{k}$.
\end{lem}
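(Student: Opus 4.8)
The plan is to carry out the whole argument inside the tensor-product picture of Section~\ref{sec:ZhandryApproach}, in which $\cF = \bigotimes_{x\in\Dom}(\cH)_{\regf_x}$ and $|\vn_\alpha\> = \bigotimes_{x\in\mathrm{dom}\,\alpha}|\alpha(x)\>_{\regf_x}\otimes\bigotimes_{x\notin\mathrm{dom}\,\alpha}|\Fzero\>_{\regf_x}$, and to identify all the relevant subspaces register by register. The first step is to describe $\spA{k}$: for a fixed $S\subseteq\Dom$ with $|S| = k$, as $\alpha$ ranges over the assignments with domain $S$ the vectors $|\vn_\alpha\>$ span $\bigotimes_{x\in S}(\cH)_{\regf_x}\otimes\bigotimes_{x\notin S}(\spn\{|\Fzero\>\})_{\regf_x}$, so $\spA{k}$ is the sum of these subspaces over all such $S$. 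Using the single-register splitting $\cH = \img\FE_0 \oplus \img\FE_1$, expanding the resulting tensor products of direct sums, and noting that the $\img\FE_T$ ($T\subseteq\Dom$) are mutually orthogonal, this collapses to $\spA{k} = \bigoplus_{|T|\le k}\img\FE_T = \img\big(\sum_{j=0}^k \FE^{(\Dom)}_j\big)$. In particular $\mathcal U_k := \spA{k}\cap(\spA{k-1})^\perp = \img\FE^{(\Dom)}_k = \bigoplus_{|S|=k}\img\FE_S$, and $\FE^{(\Dom)}_k$ is the orthogonal projector onto $\mathcal U_k$. This bookkeeping is the only laborious step.

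Write $Q^\lo_k$ and $Q^\hi_k$ for the two operators on the right-hand sides of the displayed formulas. Because $\Erest\le\FE_1$, the operator $\bigotimes_{x\in S}\FE_1 - \bigotimes_{x\in S}\Erest$ is an orthogonal projector, and because $\FE_0\FE_1 = 0$ the summands of $Q^\lo_k$ (resp. $Q^\hi_k$) attached to distinct sets $S$ annihilate one another and have pairwise orthogonal images — they disagree on a register in the symmetric difference. Hence $Q^\lo_k, Q^\hi_k$ are orthogonal projectors with mutually orthogonal images, and $Q^\lo_k + Q^\hi_k = \sum_{|S|=k}\FE_S = \FE^{(\Dom)}_k$; in particular $\img Q^\lo_k \oplus \img Q^\hi_k = \mathcal U_k$.

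The two content-bearing inclusions come next. First, $\img Q^\lo_k \subseteq \overline\cH^\lo_k$: a spanning vector $\bigotimes_{x\in S}|e_x\>_{\regf_x}\otimes\bigotimes_{x\notin S}|\Fzero\>_{\regf_x}$ of $\img Q^\lo_k$ (each $|e_x\>\in\img\Erest$) lies in $\img\FE_S\subseteq\mathcal U_k$ and is orthogonal to every $|\vn_\alpha\>$ with $\zero\in\mathrm{im}\,\alpha$: if $\mathrm{dom}\,\alpha\ne S$ some register in $S$ contributes a factor $\<e_x|\Fzero\> = 0$, and if $\mathrm{dom}\,\alpha = S$ the register $x_1$ with $\alpha(x_1) = \zero$ contributes $\<e_{x_1}|\zero\> = 0$. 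So $\img Q^\lo_k \perp \spB{k}$, whence $\img Q^\lo_k \subseteq \mathcal U_k \cap (\spB{k})^\perp = \spA{k}\cap(\spB{k}+\spA{k-1})^\perp = \overline\cH^\lo_k$. Second, $\overline\cH^\hi_k \subseteq \img Q^\hi_k$: for $|\vn_\alpha\>$ with $\mathrm{dom}\,\alpha = S$, $|\alpha| = k$ and $\alpha(x_1) = \zero$, one computes $\FE^{(\Dom)}_k|\vn_\alpha\> = \bigotimes_{x\in S}\big(\FE_1|\alpha(x)\>\big)_{\regf_x}\otimes\bigotimes_{x\notin S}|\Fzero\>_{\regf_x}$ (every other $\FE_{S'}|\vn_\alpha\>$ vanishes since $\FE_1|\Fzero\> = 0$), and as $\FE_1|\zero\> = \sqrt{1-1/N}\,|\vecr\>$ this vector has register $\regf_{x_1}$ in $\spn\{|\vecr\>\}$; decomposing the other $S$-registers into $\spn\{|\vecr\>\}$ versus $\img\Erest$ shows it lies in $\big(\bigotimes_{x\in S}\img\FE_1\big)\ominus\big(\bigotimes_{x\in S}\img\Erest\big)\subseteq\img Q^\hi_k$. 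Since $\overline\cH^\hi_k = \spB{k}\cap(\spA{k-1})^\perp$ is exactly $\{\,v\in\spB{k}:\FE^{(\Dom)}_k v = v\,\}$ (such $v$ lying in $\spA{k}\cap(\spA{k-1})^\perp = \mathcal U_k$), expanding such a $v$ over the $|\vn_\alpha\>$ that span $\spB{k}$ and applying $\FE^{(\Dom)}_k$ gives $v = \FE^{(\Dom)}_k v \in \img Q^\hi_k$. I then close by dimensions: the two inclusions give $\dim\img Q^\lo_k \le \dim\overline\cH^\lo_k$ and $\dim\overline\cH^\hi_k \le \dim\img Q^\hi_k$, while $\dim\img Q^\lo_k + \dim\img Q^\hi_k = \dim\mathcal U_k = \dim\overline\cH^\lo_k + \dim\overline\cH^\hi_k$ by the identity $\overline\cH^\lo_k\oplus\overline\cH^\hi_k = \mathcal U_k$ recalled just above the lemma; so both inclusions are equalities, $Q^\lo_k = \overline\Pi^\lo_k$, $Q^\hi_k = \overline\Pi^\hi_k$, and $\overline\Pi^\lo_k + \overline\Pi^\hi_k = \FE^{(\Dom)}_k$ is the sum already computed.

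I expect the main obstacle to be the combination of the first step with the computation of $\FE^{(\Dom)}_k|\vn_\alpha\>$: the crux observation is that projecting a database state $|\vn_\alpha\>$ recording $\zero$ as one of its values onto $\mathcal U_k$ forces the register carrying that value onto the line $\spn\{|\vecr\>\}$, which is exactly what exposes the ``high'' subspace register-by-register and matches it with $\img Q^\hi_k$. Everything downstream — orthogonality of tensor factors, the dimension count, the operator identity — is formal.
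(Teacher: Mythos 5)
Your proof has a genuine gap in the closing step. The two inclusions you establish --- $\img Q^\lo_k\subseteq\overline\cH^\lo_k$ and $\overline\cH^\hi_k\subseteq\img Q^\hi_k$ --- are not independent: since both $\img Q^\lo_k\oplus\img Q^\hi_k$ and $\overline\cH^\lo_k\oplus\overline\cH^\hi_k$ are orthogonal decompositions of the same space $\spA{k}\cap(\spA{k-1})^\perp$, each of the two inclusions implies the other by passing to orthogonal complements inside that space. Consequently the dimension count does not close: from $\dim\img Q^\lo_k\le\dim\overline\cH^\lo_k$, $\dim\overline\cH^\hi_k\le\dim\img Q^\hi_k$, and equality of the sums, you only learn that $\dim\overline\cH^\lo_k-\dim\img Q^\lo_k=\dim\img Q^\hi_k-\dim\overline\cH^\hi_k\ge 0$, which is perfectly compatible with strict inequality on both sides (for instance $0\le 1$, $0\le 1$, $0+1=1+0$). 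Nothing in the argument forces the common deficit to be zero.

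What is missing is an inclusion in the opposite direction, say $\img Q^\hi_k\subseteq\overline\cH^\hi_k$ (equivalently $\img Q^\hi_k\subseteq\spB{k}$, the $(\spA{k-1})^\perp$ constraint being automatic from $\img Q^\hi_k\subseteq\img\FE^{(\Dom)}_k$), or the equivalent statement $\overline\cH^\lo_k\subseteq\img Q^\lo_k$. This is precisely where the nontrivial content of the lemma sits, and your sketch does not touch it; your computation of $\FE^{(\Dom)}_k|\vn_\alpha\>$ only reproves the inclusion you already have. The paper's proof addresses this head-on: it realises $\overline\Pi^\hi_k$ as the support of $\FE^{(\Dom)}_k\big(\sum_{|S|=k}\Psi_S\big)\FE^{(\Dom)}_k$ and then performs a spectral analysis of the single-register operator $\FE_1 E_1\FE_1=\Erest+\frac{1}{N}|\vecr\>\<\vecr|$, whose eigenvalue gap between $1$ and $1/N$ is exactly what pins down which tensor-product eigenspaces survive in the difference $\bigotimes_{x\in S}\FE_1-\bigotimes_{x\in S}\FE_1E_1\FE_1$. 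Your write-up has no analogue of this step. Everything up to the final paragraph --- the register-by-register description of $\spA{k}$, the pairwise orthogonality and sum-to-$\FE^{(\Dom)}_k$ bookkeeping for $Q^\lo_k$ and $Q^\hi_k$, and the direct verification that $\img Q^\lo_k\perp\spB{k}$ --- is correct and close in spirit to the paper; the defect is confined to the last step.
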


\begin{proof}
Let us start by finding the orthogonal projector on $\spA{k}=\cH_k^\lo\oplus\cH_k^\hi$. 
Fix $S\subseteq \Dom$. There are $N^{|S|}$ assignments $\alpha$ with domain $S$, and for all of them vectors $|v_\alpha\>$ are mutually orthogonal.
Therefore,
\[
\sum_{\alpha\colon\mathrm{dom}\,\alpha=S}
|v_\alpha\>\<v_\alpha|
=
\bigotimes_{x\in S}I_{\regf_x}
\otimes
\bigotimes_{x\notin S}(\FE_0)_{\regf_x}
=
\sum_{T\subseteq S} \FE_T,
\]
and the orthogonal projector on $\spA{k}$ is the support of 
\[
\sum_{\substack{S\subseteq\Dom\\|S|=k}} \sum_{T\subseteq S} \FE_T
= \sum_{k'=0}^k \binom{M-k'}{k-k'}\sum_{\substack{T\subseteq\Dom\\|T|=k'}} \FE_T
= \sum_{k'=0}^k \binom{M-k'}{k-k'} \FE_{k'}^{(\Dom)}.
\]
Namely, the orthogonal projector on $\spA{k}$ is $\sum_{k'=0}^k \FE_{k'}^{(\Dom)}$ and, in turn, the orthogonal projector on $\spA{k}\cap\spA{k-1}^\perp$ is $\FE_{k}^{(\Dom)}$, as claimed.

Now let us find the orthogonal projector on $\cH_k^\hi=\spB{k}\cap\spA{k-1}^\perp$.
Fix $S\subseteq \Dom$ of size $k$, and define
\[
\Psi_S:=
\sum_{\substack{\alpha\\\mathrm{dom}\,\alpha=S\\\zero\in\mathrm{im}\,\alpha}}
|v_\alpha\>\<v_\alpha|.
\]
 There are $N^{|S|}-(N-1)^{|S|}$ assignments $\alpha$ with domain $S$ such that $\zero\in\mathrm{im}\,\alpha$, and for all of them vectors $|v_\alpha\>$ are mutually orthogonal.
Therefore,
\[
\Psi_S
=
\Big(\bigotimes_{x\in S}I_{\regf_x}-\bigotimes_{x\in S}(E_1)_{\regf_x}\Big)
\otimes
\bigotimes_{x\notin S}(\FE_0)_{\regf_x}.
\]
Note that $\Psi_S$ is orthogonal to every $\FE_{S'}$ with $|S'|=k$ except when $S'=S$.

Note that $\cB_k$ equals the support of $\sum_{\substack{S\subseteq\Dom\\|S|=k}}\Psi_S$.
Since $\cB_k\subseteq\cA_k$ and the orthogonal projector on $\cA_k\cap\cA_{k-1}^\perp$ is $\FE^{(\Dom)}_{k} = \sum_{\substack{S\subseteq \Dom\\|S|=k}}\FE_S$,
the space $\cB_k\cap\cA_{k-1}^\perp$ equals the support of 
\begin{multline*}
\FE^{(\Dom)}_{k}
\bigg(\sum_{\substack{S\subseteq\Dom\\|S|=k}}\Psi_S\bigg)
\FE^{(\Dom)}_{k}
=
\sum_{\substack{S\subseteq\Dom\\|S|=k}}
\FE_S \Psi_S \FE_S
\\ =
\sum_{\substack{S\subseteq\Dom\\|S|=k}}
\bigg(
\Big(\bigotimes_{x\in S}(\FE_1)_{\regf_x}-\bigotimes_{x\in S}(\FE_1E_1\FE_1)_{\regf_x}\Big)
\otimes
\bigotimes_{x\notin S}(\FE_0)_{\regf_x}
\bigg),
\end{multline*}
where we have used the observation that $\FE_S\Psi_{S'}=0$ for distinct $S$ and $S'$ of size $k$.

Note that $\FE_1 E_1\FE_1 = \Erest + \frac{1}{N}|\vecr\>\<\vecr|$.
Because $\FE_1 = \Erest + |\vecr\>\<\vecr|$, both 
$\bigotimes_{x\in S}(\FE_1)_{\regf_x}$ and $\bigotimes_{x\in S}(\FE_1E_1\FE_1)_{\regf_x}$ have the same support, and the non-zero eigenvectors of their difference are exactly those eigenvectors of $\bigotimes_{x\in S}(\FE_1E_1\FE_1)_{\regf_x}$ whose eigenvalues are strictly between $0$ and $1$.
One can see that the projector on such eigenvectors is $\bigotimes_{x\in S}(\FE_1)_{\regf_x}-\bigotimes_{x\in S}(\Erest)_{\regf_x}$.
Thus, in turn, the projector on $\overline\cH_k^\hi=\cB_k\cap\cA_{k-1}^\perp$ is 
\[
\sum_{\substack{S\subseteq\Dom\\|S|=k}} 
\bigg(
 \Big(\bigotimes_{x\in S}(\FE_1)_{\regf_x}-\bigotimes_{x\in S}(\Erest)_{\regf_x}\Big)
\otimes
\bigotimes_{x\notin S}(\FE_0)_{\regf_x}
\bigg).
\]
As a result, the projector on $\overline\cH_k^\lo=\cA_k\cap(\cB_k+A_{k-1})^\perp$ is
\[
 \FE_{k}^{(\Dom)}-\overline\Pi_k^\hi 
= \sum_{\substack{S\subseteq\Dom\\|S|=k}} 
\bigg(
 \bigotimes_{x\in S}(\Erest)_{\regf_x}
\otimes
\bigotimes_{x\notin S}(\FE_0)_{\regf_x}
\bigg).
\qedhere
\]
\end{proof}

To see that Lemma~\ref{lem:HLaltGen} implies Claim~\ref{clm:HLalt}, note that $\Erest+\FE_0=R_1$. As a result, we have
\[
\Pi^\lo 
= \sum_{k=0}^M \overline\Pi_k^\lo 
= \sum_{S\subseteq\Dom} 
\bigg(
 \bigotimes_{x\in S}(\Erest)_{\regf_x}
\otimes
\bigotimes_{x\notin S}(\FE_0)_{\regf_x}
\bigg)
= \bigotimes_{x\in\Dom}(R_1)_{\regf_x}.
\]

\subsection{Conclusion of the Proof of Theorem~\ref{thm:mainThm} for the Function Case}
\label{ssec:funcProofs}

In the function case, Theorem~\ref{thm:mainThm} follows from the following two claims.

\begin{clm}
\label{clm:funcLowBound}
In the function case, we have $\big\|\Xi_x^\zero\Pi^\lo_k\big\|
\le 
1/\sqrt{N}$.
\end{clm}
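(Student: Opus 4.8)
The plan is to exploit the tensor-product description of the low success subspace (Claim~\ref{clm:HLalt}, itself a corollary of Lemma~\ref{lem:HLaltGen}) and to reduce the estimate to a one-register computation on $\regf_x$.

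First I would reduce from $\Pi^\lo_k$ to $\Pi^\lo$. Since $\cH^\lo = \cH^\lo_M = \bigoplus_{k'=0}^{M}\overline\cH^\lo_{k'}$ contains $\cH^\lo_k = \bigoplus_{k'=0}^{k}\overline\cH^\lo_{k'}$, we have $\Pi^\lo_k = \Pi^\lo\Pi^\lo_k$, and submultiplicativity of the spectral norm (Property~2 of Fact~\ref{fac:norms}) gives
\[
\big\|\Xi_x^\zero\Pi^\lo_k\big\|=\big\|\Xi_x^\zero\Pi^\lo\Pi^\lo_k\big\|\le\big\|\Xi_x^\zero\Pi^\lo\big\|\cdot\big\|\Pi^\lo_k\big\|=\big\|\Xi_x^\zero\Pi^\lo\big\|,
\]
so it suffices to bound $\|\Xi_x^\zero\Pi^\lo\|$.

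Next I would invoke Claim~\ref{clm:HLalt}, which says $\Pi^\lo=\bigotimes_{x'\in\Dom}(R_1)_{\regf_{x'}}$, together with the fact that $\Xi_x^\zero$ acts nontrivially only on $\regf_x$, as the rank-one projector $E_0=|\zero\>\<\zero|$. Hence $\Xi_x^\zero\Pi^\lo=(E_0R_1)_{\regf_x}\otimes\bigotimes_{x'\ne x}(R_1)_{\regf_{x'}}$, and since all the remaining tensor factors are projectors, $\|\Xi_x^\zero\Pi^\lo\|=\|E_0R_1\|$. This single-register norm I would compute via $\|E_0R_1\|^2=\|E_0R_1E_0\|$ (Property~1 of Fact~\ref{fac:norms}): using $R_1=I_N-|\vecr\>\<\vecr|$ with $\<\zero|\vecr\>=\sqrt{1-1/N}$ (which follows from $|\vecr\>=\FE_1|\zero\>/\sqrt{1-1/N}$), one gets $E_0R_1E_0=(1-|\<\zero|\vecr\>|^2)|\zero\>\<\zero|=\tfrac1N|\zero\>\<\zero|$, so $\|E_0R_1\|=1/\sqrt N$, which yields the claim (in fact with equality).

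There is no genuine obstacle here: the content is entirely in having Lemma~\ref{lem:HLaltGen}/Claim~\ref{clm:HLalt} available, and the rest is a routine tensor-factor computation. If one preferred to avoid the inclusion $\cH^\lo_k\subseteq\cH^\lo$, one could argue directly from Lemma~\ref{lem:HLaltGen}: writing $\Pi^\lo_k=\sum_{|S|\le k}\big(\bigotimes_{x'\in S}(\Erest)_{\regf_{x'}}\otimes\bigotimes_{x'\notin S}(\FE_0)_{\regf_{x'}}\big)$, one notes that $\Xi_x^\zero$ annihilates every term with $x\in S$ (since $E_0\Erest=0$) and sends a term with $x\notin S$ to an operator of norm $\|E_0\FE_0\|=1/\sqrt N$, and that these surviving images and coimages are pairwise orthogonal, so Property~4 of Fact~\ref{fac:norms} bounds the norm of the sum by the maximum, namely $1/\sqrt N$.
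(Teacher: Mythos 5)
Your proof is correct and follows essentially the same route as the paper: reduce $\Pi^\lo_k$ to $\Pi^\lo$, apply the tensor-product description from Claim~\ref{clm:HLalt}, and compute $\|E_0R_1\|=1/\sqrt N$ on a single register $\regf_x$. The only cosmetic difference is that you use submultiplicativity (Property~2) for the reduction step where the paper invokes Property~3 on adjoints; both are trivial, and your optional alternative via Lemma~\ref{lem:HLaltGen} and Property~4 is also valid.
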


\begin{clm}
\label{clm:funcTransfBound}
In the function case, we have
\[
\Big\|\sum_{y,y'}
\big(\Pi^\hi_{k}\Xi_x^y\Pi^\lo_{k-1}\otimes|y'\oplus y\>\<y'|\big)\Big\| < \sqrt{2/(N-1)}.
\]
\end{clm}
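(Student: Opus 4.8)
The plan is to use the tensor-product descriptions of $\Pi^\hi_k$ and $\Pi^\lo_{k-1}$ furnished by Lemma~\ref{lem:HLaltGen} (together with $\Pi^\lo_{k-1}=\sum_{j\le k-1}\overline\Pi^\lo_j$), and to exploit the fact that the oracle call $O$ acts on register $\regf_x$ alone, namely $\Xi_x^y\otimes|y'\oplus y\>\<y'| = |y\>\<y|_{\regf_x}\otimes(I_{\text{other }\regf})\otimes|y'\oplus y\>\<y'|_{\regO}$. Because $|y\>\<y|_{\regf_x}$ touches only the $x$-th subregister, the whole operator $\sum_{y,y'}\Pi^\hi_k\,\Xi_x^y\,\Pi^\lo_{k-1}\otimes|y'\oplus y\>\<y'|$ decomposes as a direct sum over the ``type'' of the configuration on the remaining $M-1$ registers $\regf_{x'}$, $x'\ne x$, and within each block the nontrivial part of the computation happens only on $\regf_x\otimes\regO$. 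Concretely, on the other registers $\Pi^\lo_{k-1}$ forces each of them to be in $\Erest$ or $\FE_0$, with exactly $j\le k-1$ of them in $\Erest$; then $\Pi^\hi_k$ allows those same registers to stay as they are while the $x$-register must move into a state that, together with the rest, makes up a weight-$k$ ``high'' configuration. First I would fix the configuration on $\{\regf_{x'}:x'\ne x\}$ to have $j$ copies of $\Erest$ (so $j\le k-1$) and analyze the induced operator on $\regf_x\otimes\regO$; by Property~5 of Fact~\ref{fac:norms} the total norm is the max over such blocks.

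Within one block the operator on $\regf_x$ reduces to sandwiching the map $\sum_{y,y'}|y\>\<y|_{\regf_x}\otimes|y'\oplus y\>\<y'|$ between a projector on the left coming from $\Pi^\hi_k$ restricted to $\regf_x$ and a projector on the right coming from $\Pi^\lo_{k-1}$ restricted to $\regf_x$. Reading off Lemma~\ref{lem:HLaltGen}: if $j\le k-2$ then the rest already carries $j<k-1$ copies of $\Erest$, and for $\Pi^\hi_k$ we need the $x$-register in the span of $\FE_1$ but orthogonal to $\Erest$, i.e.\ in $\spn\{|\vecr\>\}$ — but wait, a high configuration of weight $k$ needs at least one register in ``$\FE_1$ minus $\Erest$''. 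The clean way is: $\Pi^\hi_k$ on $\regf_x$ (after fixing the rest) is either $R_0=|\vecr\>\<\vecr|$, or $\FE_1$, or $\FE_0$, depending on whether the rest has $j=k-1$ copies of $\Erest$ and we need the $x$-register to supply the ``high'' slot ($R_0$), or the rest already has $k-1$ copies of $\Erest$ and one in ``$\FE_1\setminus\Erest$'' making $x$ free to be $\FE_0$, etc. — I would enumerate these few cases carefully. In every case the left projector $Q$ on $\regf_x$ is one of $\FE_0$, $R_0$, $\Erest$, $\FE_1$, and the right projector $R$ on $\regf_x$ is one of $\FE_0$, $\Erest$. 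So the block norm is $\bigl\|\sum_{y,y'} Q|y\>\<y|_{\regf_x} R\otimes|y'\oplus y\>\<y'|_{\regO}\bigr\|$. Squaring via Fact~\ref{fac:norms}(1) turns this into $\bigl\|\sum_y RQ|y\>\<y|Q R\bigr\|^{1/2}$ on $\regf_x$ (the $\regO$ factor is unitary and drops out), i.e.\ $\bigl\|R\bigl(\sum_y Q|y\>\<y|Q\bigr)R\bigr\|^{1/2}$, and since $\sum_y Q|y\>\<y|Q$ is just $Q$ composed with the identity-in-the-standard-basis, this is $\|R\,\mathrm{diag}(Q)\,R\|^{1/2}$ — the point being that in the standard basis one can compute $\sum_y Q|y\>\<y|Q$ explicitly for each of the four choices of $Q$ and then the compression by $R\in\{\FE_0,\Erest\}$.

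The main obstacle — and the only real computation — is bounding $\|R(\sum_y Q|y\>\<y|Q)R\|$ for the handful of $(Q,R)$ pairs that actually arise, and checking that the worst case gives strictly less than $2/(N-1)$. The contributing cases should be exactly the ones where $R=\FE_0$ (the $x$-register was ``unlearnt'' and becomes ``high'') and where $R=\Erest$ (the $x$-register was one of the $\le k-1$ learnt-but-not-$\zero$ registers and participates in the transfer); in the first case one gets something like $\|\FE_0(\sum_y R_0|y\>\<y|R_0)\FE_0\| = |\<\Fzero|\vecr\>|^2\cdot\|\vecr\|^2$-type quantities, and a short computation with $|\vecr\>=\FE_1|\zero\>/\sqrt{1-1/N}$ gives $|\<\Fzero|\vecr\>|=0$ — so actually the surviving contributions come from the interplay of the two summands in the decomposition of $\Pi^\hi_k$. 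Here I expect to need the triangle inequality exactly as set up in (\ref{eqn:CompUncompAll})–(\ref{eqn:GBTriangle}), but carried out with the sharper tensor-product bookkeeping so that the two ``$2$''s in (\ref{eqn:GBTriangle}) collapse: each of $\sum_y\|\Pi^\lo_k\Xi_x^y\Pi^\hi_k\|^2$ and the companion sum should evaluate to $\tfrac12\cdot\tfrac{2}{N-1}=\tfrac1{N-1}$ via the identity $\Erest+\FE_0=R_1$ and $\FE_1E_1\FE_1=\Erest+\tfrac1N|\vecr\>\<\vecr|$ from the lemma's proof, and then one argues the single norm in (\ref{eq:NormSimpl}) is $\sqrt{2/(N-1)}$ directly (strict inequality coming from the fact that for $k\ge1$ not all $M$ registers can simultaneously be in the extremal configuration, or simply from $N\ge 2$ making one eigenvalue computation strict). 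I would finish by noting the bound is independent of $k$, which is what Point~3 of Theorem~\ref{thm:mainThm} for functions asserts.
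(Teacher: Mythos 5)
Your proposal never gets to a complete argument, and the two specific shortcuts you lean on both fail.

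First, the step where you claim that ``the $\regO$ factor is unitary and drops out'' after squaring is incorrect. The operator on $\regf_x\otimes\regO$ is $\sum_{y,y'}Q|y\>\<y|R\otimes|y'\oplus y\>\<y'|$, i.e.\ $\sum_y Q|y\>\<y|R\otimes X^y$ with $X^y$ the bit-flip by $y$. When you form $\Phi\Phi^*$ the $\regO$-part contributes $\sum_{y'}|y'\oplus y_1\>\<y'\oplus y_2|=X^{y_1\oplus y_2}$, not a scalar, so you are left with $\sum_{y_1,y_2}Q|y_1\>\<y_1|R|y_2\>\<y_2|Q\otimes X^{y_1\oplus y_2}$, which still entangles the two registers. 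The paper removes this entanglement by first changing to the Fourier basis on $\regO$, writing $\sum_{y'}|y'\oplus y\>\<y'|=\sum_z(-1)^{y\cdot z}|\widehat z\>\<\widehat z|$, so that the whole operator becomes block-diagonal in $z$ and the norm is $\max_z\big\|\Pi^\hi\big(\sum_y(-1)^{y\cdot z}\Xi_x^y\big)\Pi^\lo\big\|$ -- a single computation on $\regf_x$ with a fixed diagonal unitary $U_z$. That Fourier step is the essential device here, and it is absent from your proposal.

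Second, your hope that ``the two `2's in (\ref{eqn:GBTriangle}) collapse'' cannot be realized. The compute/uncompute path (\ref{eqn:CompUncompAll})--(\ref{eqn:GBTriangle}) is evaluated exactly at the end of Section~\ref{ssec:funcProofs}: it gives $2\sqrt{(2N-3)/(N(N-1))}$, which is roughly $2\sqrt{2/N}$, a factor of about two worse than the claimed $\sqrt{2/(N-1)}$. So no sharper bookkeeping along that route can reach the stated bound; you have to evaluate (\ref{eq:NormSimpl}) directly, and the paper does so via the Fourier trick. Relatedly, after applying Property~3 of Fact~\ref{fac:norms} to replace $\Pi^\lo_{k-1},\Pi^\hi_k$ by $\Pi^\lo,\Pi^\hi$, the projector $\Pi^\lo=\bigotimes_x (R_1)_{\regf_x}$ \emph{is} a pure tensor product and $\Pi^\hi=I-\Pi^\lo$, so the whole block-decomposition over configurations on the other registers that occupies the first half of your proposal is unnecessary; $\Pi^\hi\Xi_z\Pi^\lo$ already factors as $(R_0 U_z R_1)_{\regf_x}\otimes\bigotimes_{x'\ne x}(R_1)_{\regf_{x'}}$. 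Once you have the Fourier reduction, the remaining computation is exactly the inner-product evaluation $1-|\<\vecr|U_z|\vecr\>|^2=(2N-3)/(N-1)^2<2/(N-1)$ for $z\neq 0$, and zero for $z=0$.
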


In the proofs of both claims we use Property~3 of Fact~\ref{fac:norms}, which lets us replace $\Pi^\lo_k$ by $\Pi^\lo$ and $\Pi^\hi_k$ by $\Pi^\hi$. In addition, note that
\[
\Xi_x^y= |y\>\<y|_{\regf_x}\otimes\bigotimes_{x'\in\Dom\setminus\{x\}} I_{\regf_{x'}}.
\]

\begin{proof}[Proof of Claim~\ref{clm:funcLowBound}]
We have
\[
\Xi_x^\zero\Pi^\lo
= (|\zero\>\<\zero|R_1)_{\regf_x}\otimes\bigotimes_{x'\in\Dom\setminus\{x\}} (R_1)_{\regf_{x'}},
\]
therefore
\[
\|\Xi_x^\zero\Pi^\lo\|^2
=\|\<\zero|R_1\|^2
= 1-|\<\zero|\vecr\>|^2 = 1/N. \qedhere
\]
\end{proof}

\begin{proof}[Proof of Claim~\ref{clm:funcTransfBound}]
Recall that we interpret elements of range $\Ran$ as $\log_2N$-bit strings. Consider the Fourier basis of the space corresponding to the register $\regf_x$: for $z\in\Ran$, let
\[
|\widehat{z}\> := \frac{1}{\sqrt{N}}\sum_{y'}(-1)^{z\cdot y'}|y'\>
\]
(which is consistent with our earlier notation of $|\Fzero\>$).
In the Fourier basis, we have
\[
\sum_{y'}|y'\oplus y\>\<y'| = \sum_{z}(-1)^{y\cdot z}|\widehat{z}\>\<\widehat{z}|,
\]
and therefore
\[
\Big\|\sum_{y,y'}\big(\Pi^\hi\Xi_x^y\Pi^\lo\otimes|y'\oplus y\>\<y'|\big)\Big\|
=
\max_{z}
\Big\|\Pi^\hi\Big(\sum\nolimits_y(-1)^{y\cdot z}\Xi_x^y\Big)\Pi^\lo\Big\|,
\]
where the summation and the maximum is over $z\in\Ran$.%
\footnote{This equality is also true for the permutation case, however, it is not clear how to evaluate this norm in that case.}
If $z=0$, the operator under the norm is $\Pi^\hi I_\regF\Pi^\lo=0$, so let us assume $z\ne 0$.
We have
\[
\Big\|\Pi^\hi\Big(\sum\nolimits_y(-1)^{y\cdot z}\Xi_x^y\Big)\Pi^\lo\Big\|
= \Big\||\vecr\>\<\vecr|\Big(\sum\nolimits_y(-1)^{y\cdot z}|y\>\<y|\Big) R_1\Big\|.
\]
Note that $\sum\nolimits_y(-1)^{y\cdot z}|y\>\<y|$ is a unitary and $R_1=I-|\vecr\>\<\vecr|$, so
\begin{multline*}
\Big\||\vecr\>\<\vecr|\Big(\sum\nolimits_y(-1)^{y\cdot z}|y\>\<y|\Big) R_1\Big\|^2
= 1-\Big|\<\vecr|\Big(\sum\nolimits_y(-1)^{y\cdot z}|y\>\<y|\Big) |\vecr\>\Big|^2
\\ = 1-\bigg(\frac{N-1}{N}+\frac{N/2-1}{N(N-1)}-\frac{N/2}{N(N-1)}\bigg)^2
 = \frac{2N-3}{(N-1)^2}
< \frac{2}{N-1},
\end{multline*}
because $|\<r_0|\zero\>|^2=(N-1)/N$ and $|\<r_0|y\>|^2=1/(N(N-1))$ for $y\ne\zero$.
\end{proof}

We note that, in the proof of Point 3 of Theorem~\ref{thm:mainThm}, if we bounded Eqn.~(\ref{eqn:GBTriangle}) instead of Eqn.~(\ref{eq:NormSimpl}), we would obtain the following weaker bound on $\alpha_k-\alpha_{k-1}$.
For all $y\in\Ran$, we have 
\[
\Pi^\lo\Xi_x^y\Pi^\hi
=
(R_1|y\>\<y|R_0)_{\regf_x}\otimes \bigotimes_{x'\in\Ran\setminus\{x\}} (R_1)_{\regf_{x'}}.
\]
As a result,
\[
\|\Pi^\lo\Xi_x^y\Pi^\hi\|
= \|R_1|y\>\|\cdot\|R_0|y\>\| 
= |\<\vecr|y\>|\sqrt{1-|\<\vecr|y\>|^2}
= \begin{cases}
\frac{\sqrt{N-1}}{N} & \text{if }y=\zero,\\
\frac{\sqrt{N^2-N-1}}{N(N-1)} & \text{if }y\neq\zero.\\
\end{cases}
\]
Finally,
\[
2\Big(
\sum_{y} \big\|\Pi^\lo \Xi_x^y \Pi^\hi \big\|^2
\Big)^{1/2}
=
2\Big(
\frac{N-1}{N^2}
+(N-1)\frac{N^2-N-1}{N^2(N-1)^2}
\Big)^{1/2}
 =
2\sqrt{\frac{2N-3}{N(N-1)}}.
\]

\section{Representation Theory}
\label{sec:RepPrelim}

In this section, let us present the basics of the representation theory of the symmetric group. 
For a detailed study of the representation theory of the symmetric group, refer to~\cite{james:symmetric,sagan:symmetric};
for the fundamentals of the representation theory of finite groups, refer to~\cite{serre:representation}.

\subsection{Representation Theory of Finite Groups}

Let $\cV$ be a complex Euclidean space and let $G$ be a finite group.
A \emph{(linear) representation} of $G$ on $\cV$ is a group homomorphism $V$ from $G$ to the general liner group $GL(\cV)$. We typically write $V_g$ instead of $V(g)$, call it a 
\emph{representation operator}, and, by definition, $V$ satisfies $V_{gg'}=V_gV_g'$ for all $g,g'\in G$.
We call $\dim\cV$ the dimension of the representation.
We may also refer to $\cV$ as a representation, when the operators $V_g$ are clear from the context, and say that $G$ \emph{acts} on $\cV$.

We call a representation $V$ on $\cV$ \emph{reducible} if there is a proper subspace $\cV'\subset\cV$ such that $\cV'\neq\{0\}$ and $\cV'$ is invariant under to $V_g$ for all $g\in G$.
We call $V$ irreducible (or, \emph{irrep}, for short) if it is not reducible.

We call two representations $V,V'$ \emph{isomorphic} and write $V\cong V'$ if there is an isomorphism $M$ such that $V_gM=MV'_g$ for all $g$.
If $V$ and $V'$ are irreducible and $M$ is an isomorphism between them, then Schur's lemma implies that $M=0$ whenever $V\not\cong V'$.
The number of mutually non-isomorphic irreps of $G$ equals the number of conjugacy classes of $G$, and squares of their dimensions sum up to $|G|$.

If two irreps $\rho$ and $\sigma$ are isomorphic, we may call $\rho$ an instance of $\sigma$.
Given a representation $V$ of $G$ on $\cV$ and an irrep $\rho$ of $G$, the $\rho$-isotypic subspace of $\cV$ is the subspace that corresponds to all instances of $\rho$ contained in $\cV$.
The following lemma is a consequence of Schur's lemma (see~\cite{amrr:index}). 

\begin{lem}\label{lem:isotypicOverlap}
Suppose we are given an isotypic subspace, three $\cH,\cH',\cH''$ isomorphic irreps of dimension $d$ belonging to that subspace, and orthogonal projectors $\Pi,\Pi',\Pi''$ on these irreps, respectively.
We have
$\tr[\Pi\Pi'\Pi\Pi'']=\tr[\Pi\Pi']\tr[\Pi\Pi'']/d$.
\end{lem}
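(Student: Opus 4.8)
The statement is a standard Schur-type identity, so the plan is to reduce everything to block-diagonal form adapted to the given isotypic component. First I would work inside the $\rho$-isotypic subspace, which (by the structure theory of representations of finite groups) is isomorphic to $\mathbb{C}^d \otimes \mathbb{C}^m$, where $d = \dim\rho$ and $m$ is the multiplicity, with $G$ acting as $\rho \otimes I_m$. Under this identification every sub-irrep of the isotypic component is of the form $\{|e\>\otimes |w\> : |e\>\in\mathbb{C}^d\}$ for some unit vector $|w\>\in\mathbb{C}^m$ (this is exactly Schur's lemma: a $G$-equivariant embedding of $\rho$ into $\rho\otimes I_m$ is determined by a vector in the multiplicity space). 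Hence there are unit vectors $|w\>,|w'\>,|w''\>\in\mathbb{C}^m$ with
\[
\Pi = I_d\otimes |w\>\<w|,\qquad
\Pi' = I_d\otimes |w'\>\<w'|,\qquad
\Pi'' = I_d\otimes |w''\>\<w''|.
\]

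Given this, the computation is immediate. On one hand,
\[
\Pi\Pi'\Pi\Pi'' = I_d \otimes \big(|w\>\<w|w'\>\<w'|w\>\<w|w''\>\<w''|\big)
= I_d\otimes |\<w|w'\>|^2\,|w\>\<w|w''\>\<w''|,
\]
so taking the trace gives $d\cdot|\<w|w'\>|^2\<w|w''\>\<w''|w\> = d\,|\<w|w'\>|^2|\<w|w''\>|^2$. On the other hand $\tr[\Pi\Pi'] = d\,|\<w|w'\>|^2$ and $\tr[\Pi\Pi''] = d\,|\<w|w''\>|^2$, so $\tr[\Pi\Pi']\tr[\Pi\Pi'']/d = d\,|\<w|w'\>|^2|\<w|w''\>|^2$, matching. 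This proves the identity.

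The only real content is the first paragraph — justifying that within an isotypic component each copy of $\rho$ is cut out by a vector in the multiplicity space and the projectors tensor-factor as above. Since the excerpt already cites \cite{amrr:index} for exactly this lemma, I would either quote that reference or include the short Schur's-lemma argument: an intertwiner $M\colon \cH\to\cH'$ between two sub-irreps is a scalar times a fixed unitary, which forces the claimed form once one fixes an orthonormal basis of $\cH$ transported equivariantly across all copies. No step here is genuinely hard; the bookkeeping of the isotypic decomposition is the part to state carefully.
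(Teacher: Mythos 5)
Your proof is correct, and it fills in exactly the argument the paper elides: the paper gives no proof of Lemma~\ref{lem:isotypicOverlap}, only the remark that it ``is a consequence of Schur's lemma'' with a citation to \cite{amrr:index}. Your route — decompose the isotypic component as $\mathbb{C}^d\otimes\mathbb{C}^m$ with $G$ acting as $\rho\otimes I_m$, note via Schur's lemma that every sub-irrep is $\mathbb{C}^d\otimes|w\>$ with projector $I_d\otimes|w\>\<w|$, and then compute traces — is precisely the standard Schur's-lemma argument the paper is pointing to, and the final algebra ($\tr[\Pi\Pi'\Pi\Pi''] = d\,|\<w|w'\>|^2|\<w|w''\>|^2 = \tr[\Pi\Pi']\tr[\Pi\Pi'']/d$) checks out.
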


Given two finite groups $G$ and $H$, any irrep of their direct product $G\times H$ is isomorphic to a representation $V_g\otimes V'_h$ where $V$ and $V'$ are irreps of $G$ and $H$, respectively.

If $V$ is a representation of $G$ and $H$ is a subgroup of $G$, $V$ restricted to the group $H$ is also a representation of $H$. We denote it $\Res^G_HV$, and it is not necessarily irreducible for $H$ even if it were irreducible for $G$.
Let $\mathrm{Inv}_H^GV'$ denote a representation of $G$ that is obtained by inducing a representation $V'$ of $H$, which is unique up to isomorphism.

A character of a representation $V$ is the complex-valued function
\[
\chi\colon G\rightarrow \mathbb{C}\colon g\mapsto  \tr[V_g].
\]
Given two characters $\chi$ and $\chi'$ of a group $G$, their inner product is
\[
\<\chi,\chi'\>:=\frac{1}{|G|}\sum_{g\in G}\overline\chi(g)\chi'(g).
\]
The inner product is non-negative integer and, if $\chi$ is a character of an  irrep $\rho$, then $\<\chi,\chi'\>$ equals the number of instances of $\rho$ contained by the representation corresponding to $\chi'$.
Given a representation $V$ on $\cV$ and a character $\chi$ of an irrep $\rho$,
\begin{equation}
\label{eqn:isotypicPi1}
\Pi_\rho:=\frac{\dim\rho}{|G|}\sum_{g\in G}\overline\chi_{g}V_g
\end{equation}
is the orthogonal projector on the $\rho$-isotypic subspace of $\cV$.

Consider the space $\mathbb{C}^G$ having an orthonormal basis $\{|g\>\colon g\in G\}$.
Maps $L\colon G\rightarrow GL(\mathbb{C}^G)$ and $R\colon G\rightarrow GL(\mathbb{C}^G)$ 
such that
\[
L_g\colon |h\> \mapsto |gh\>
\qquad\text{and}\qquad
R_g\colon |h\> \mapsto |hg^{-1}\>
\]
are know as, respectively, the \emph{left} and the \emph{right regular representation} of $G$.

\subsection{Representation Theory of the Symmetric Group} \label{sec:rep}

Let $\Sym_A$ be the symmetric group of a finite set $A$, that is, the group whose elements are all the
 permutations of elements of $A$ and whose group operation is the composition of permutations.

\paragraph{Partitions and Young diagrams.}

A partition of a non-negative integer $m$ is a tuple $\rho=(\rho_1,\ldots,\rho_r)$ of positive integers such that $\rho_1\ge \rho_2\ge \ldots \ge \rho_r$ and $\sum_{i=1}^r\rho_i=m$, and we write $\rho\vdash m$.
Given a partition $\rho=(\rho_1,\ldots,\rho_r)\vdash m$, its transpose partition is $\rho^\top:=(\rho^\top_1,\ldots,\rho^\top_c)\vdash m$, where 
\[
\rho^\top_j := \max\{i\colon \rho_i\ge j\}
\]
(note that $\rho^\top_1=r$ and $c=\rho_1$).

There is one-to-one correspondence between partitions of $m$ and $m$-box Young diagrams, and from now on we refer to both concepts as Young diagrams.
A \emph{Young diagram} $\rho=(\rho_1,\ldots,\rho_r)\vdash m$ consists $m$ square boxes (also known as cells) organized in $r$ left-aligned rows with $\rho_i$ boxes in row $i$; see Figure~\ref{fig:YoungEx}.
Note that the column $j$ of the Young diagram $\rho$ has $\rho^\top_j$ boxes.

\begin{figure}[ht!]
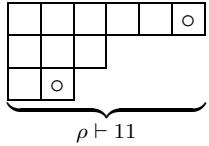

\centering
\[
\underbrace{\young(~~~~~\circ,~~~,~\circ)}_{\rho\;\vdash\; 11}
 \]
 \caption{\small An example of an $11$-box Young diagram $\rho=(6,3,2)$. The distance between the two marked vertices, $(3,2)$ and $(1,6)$, is $6$.}
\label{fig:YoungEx}
\end{figure}

Given a Young diagram $\rho$, let $|\rho|$ denote the number of boxes in it (as a result, $\rho\vdash|\rho|$). Let $\rho\in \lightPart{k}$ denote that $\rho\vdash\ell$ for some $\ell\le k$.

We identify a box by its row and column coordinates.
We say that a box $(i,j)$ is present in $\rho$ and write $(i,j)\in \rho$ if $\rho_i\ge j$ (equivalently, $\rho^\top_j\ge i$). 
The {\em distance} 
 between two boxes $(i,j)$ and $(i',j')$ is defined as
\(
|i'-i|+|j-j'|
\).

\paragraph{Specht modules and the hook length formula.}

Up to isomorphism, there is one-to-one correspondence between the irreps of $\Sym_A$ and $|A|$-box Young diagrams, and we denote the irrep corresponding to $\rho\vdash|A|$ by $\Spe{\rho}$, which is called the \emph{Specht module}.
The Specht module $\Spe{(|A|)}$ is the trivial representation that maps all permutations to $1$.
Let $\dm{\rho}:=\dim\Spe{\rho}$ denote the dimension of $\Spe{\rho}$, and we also sometimes denote it by $\dim\rho$. The dimension $\dm{\rho}$ is given by the \emph{hook length formula}.

Given a Young diagram $\rho\vdash m$ and a box $(i,j)\in\rho$, the \emph{hook length} of $(i,j)$ is
\[
h_\rho(i,j):= (\rho_i-j)+(\rho^\top_j-i)+1;
\]
see Figure~\ref{fig:HooksAll}.
That is, $h_\rho(i,j)$ equals the number of the boxes right from $(i,j)$ in the same row plus number of the boxes under $(i,j)$ in the same column plus $1$ (that is, the box itself).

\begin{figure}[ht!]
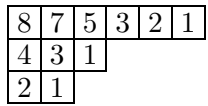

\centering
\[
\young(875321,431,21)
 \]
 \caption{\small Hook lengths of the $11$-box Young diagram $\rho=(6,3,2)$. The dimension of the Specht module $\Spe{\rho}$ is $\dm{\rho}=11!/(8\cdot7\cdot5\cdot3\cdot2\cdot1\cdot4\cdot3\cdot1\cdot2\cdot1)=990$.}
\label{fig:HooksAll}
\end{figure}

Let
\[
h(\rho):=\prod_{(i,j)\in\rho}h_\rho(i,j)
\]
be the product of the hook lengths of all boxes in $\rho$.
The dimension of the Specht module $\Spe{\rho}$ is
\[
\dm{\rho}=m!/h(\rho).
\]

\paragraph{Projectors on isotypic subspaces.}

Suppose $A$ is a finite set, $V$ is a unitary representation of $\Sym_A$ on $\cV$, and $\rho\vdash|A|$ is a Young diagram.
Given the character $\chi$ corresponding to $\rho$, expression (\ref{eqn:isotypicPi1}) provides one way to construct the orthogonal projector $\Pi_\rho$ on the $\Spe{\rho}$-isotypical subspace of $\cV$. 
Here we present another way to construct the same projector that does not require the knowledge of the character.
We also note that all irreducible characters of the symmetric group are real-valued.

A \emph{Young tableau} of shape $\rho$ is an assignment that assigns to every box of $\rho$ a unique element from $A$.%
\footnote{This is a slightly non-standard definition, as typically one considers $A=\{1,2,\ldots,|A|\}$. For our purposes, however, it is convenient to use the more general definition.}
 Clearly, there are $|A|!$ Young tableaux of shape $\rho$.
Given a subset $B\subseteq A$, define
\begin{alignat*}{2}
& E_B^+ := \sum_{\pi\in\Sym_B}V_\pi,
&\qquad&\widehat{E}_B^+ := E_B^+\big/|B|!,
\\& E_B^- := \sum_{\pi\in\Sym_B}\sgn\pi V_\pi,
&& \widehat{E}_B^- = E_B^-\big/|B|!,
\end{alignat*}
where $\sgn\pi=\pm 1$ is the sign of the permutation.
One can see that both $\widehat{E}_B^+$ and $\widehat{E}_B^-$ are orthogonal projectors.

Given a Young tableau $\tabl$, let $R_i(\tabl)$ and $C_j(\tabl)$ be the sets formed by entries in $i$-th row and $j$-th column of $\tabl$, respectively.
Let
\[
E^+_{\mathrm{row}}(\tabl)=\prod_i E^+_{R_i(\tabl)}
\qquad\text{and}\qquad
E^-_{\mathrm{col}}(\tabl)=\prod_j E^-_{C_j(\tabl)},
\]
where the products are over all the rows and all the columns, respectively.
Note that all $E^+_{R_i(\tabl)}$ mutually commute and so do all $E^-_{C_j(\tabl)}$. However, $E^+_{R_i(\tabl)}$ and $E^-_{C_j(\tabl)}$ in general do not commute.

Define the operator
\[
P_\tabl := \frac{\dm{\rho}}{|A|!}
E^-_{\mathrm{col}}(\tabl)E^+_{\mathrm{row}}(\tabl).
\]
This operator is a projector in the sense that $P_\tabl^2=P_\tabl$, but it is not necessarily an orthogonal projector, as it may differ from its transpose $P^\top_\tabl$, which is also a projector. However,
\begin{equation}
\label{eqn:PiRhoAlt}
\Pi_\rho
=\frac{\dm{\rho}}{|A|!}\sum_{\tabl}P_\tabl
=\frac{\dm{\rho}}{|A|!}\sum_{\tabl}P_\tabl^\top
\end{equation}
is an orthogonal projector (hence the equality to its transpose) that projects on the $\Spe{\rho}$-isotypic subspace,
where the sums are over all $|A|!$ Young tableaux $\tabl$ of shape $\rho$.%
\footnote{The same projector $\Pi_\rho$ would be obtained if one took the sum only over what is known as the \emph{standard} Young tableaux and dropped the normalization factor $\dm{\rho}/|A|!$.}

\paragraph{Branching rule.}

Let $\sigma\yunder\rho$ denote that a Young diagram $\sigma$ is obtained from $\rho$ by removing exactly one box. 
 The {\em branching rule} states that the restriction of an irrep $\Spe\rho$ of $\Sym_{A}$ to $\Sym_{A\setminus\{a\}}$, where $a\in A$, is
\[
\Res\nolimits_{\,\Sym_{A\setminus\{a\}}}^{\,\Sym_A} \!\Spe\rho \cong \bigoplus\nolimits_{\sigma\yunder\rho}\Spe\sigma.
\]
{\em{}Frobenius reciprocity} then tells us that the ``opposite'' happens when we induce an irrep of
 $\Sym_{A\setminus\{a\}}$ to $\Sym_A$.
 
We also use the following special case of the Littlewood--Richardson rule.
Suppose $A=A'\sqcup A''$ is a disjoint union of $A$. 
Given non-negative integers $k$ and $\ell > k$ and a Young diagram $\xi\vdash k$, let
$\Lambda_\ell^+(\xi)$ be the set of all $\ell$-box Young diagrams that can be obtained from $\xi$ by adding at most one box per each column (including previously empty columns).
Given $\xi\vdash|A|$, we have
 \begin{equation}
 \label{eqn:IndLR}
  \mathrm{Ind}_{\Sym_{A'}\times\Sym_{A''}}^{\Sym_A} \Spe{\xi}\otimes\Spe{(|A|)}
 = \bigoplus\nolimits_{\rho\in\Lambda^+_{|A|}(\xi)}\Spe{\rho}.
 \end{equation}
 We note that the branching rule is a special case of this when $A''=\{a\}$ because $\Sym_{A\setminus\{a\}}$ and $\Sym_{A\setminus\{a\}}\times\Sym_{\{a\}}$ are isomorphic.

\paragraph{Shorthand notation.}

Very often in this paper we treat boxes in a first row of a Young diagram differently than boxes in rows below it.
For that reason, let us introduce notation that can express such a treatment more concisely.

Given a Young diagram $\rho=(\rho_1,\rho_2,\ldots,\rho_r)$ and $m\ge\rho_1$, let 
$(m,\rho)$ be short for the Young diagram $(m,\rho_1,\rho_2,\ldots,\rho_r)$. 

Recall that, for the Inverting Permutations problem, $\Dom$ and $\Ran$ are, respectively, the domain and the range for bijections $f$, both having size $N$.
Given $\ell\in\{0,1,2\}$, a set $A=\Dom$ or $A=\Ran$, its subset $A\setminus\{a_1,\ldots,a_\ell\}$, and $\rho\vdash N-\ell$, let us write $\rho_{a_1\!\ldots a_\ell}$ if we want to stress that we think of $\Spe{\rho}$ as an irrep of $\Sym_{A\setminus\{a_1,\ldots,a_\ell\}}$. We omit the subscript if $\ell=0$ or when $\{a_1,\ldots,a_\ell\}$ is clear from the context. To lighten the notations, given $k\le N/2$ and $\parE\vdash k$, let $\parEb_{a_1\!\ldots a_\ell}=(N-\ell-k,\parE)_{a_1\!\ldots a_\ell}\vdash N-\ell$;
here we omit the subscript if and only if $\ell=0$. 
We may write $\bar\eta_*$ instead of $\bar\eta_{a_1}$ and $\bar\eta_{**}$ instead of $\bar\eta_{a_1a_2}$ where $a_1$ and $a_2$ are clear from the context or irrelevant, like when we write $\dim{\bar\eta_*}$.

We use
 $\parD$, $\parE$, and $\parF$ to denote Young diagrams having $o(N)$ boxes, $\parA$, $\parB$, and $\parC$ to denote Young diagrams having $N$, $N-1$, and $N-2$ boxes, respectively, and $\rho$ and $\sigma$ for general statements and other purposes.

\section{Decomposition of $\cF$ via Representation Theory}
\label{sec:ABviaRepTh}

Here we consider the permutation case only.
Recall that $|\Dom|=|\Ran|=N$, $\Func$ is the set of all $N!$ bijections $f\colon\Dom\rightarrow\Ran$, and $\cF=\mathbb{C}^{\Func}$ is the complex Euclidean space corresponding to $\Func$.
Permutations $\pi\in\Sym_\Dom$ of the domain and permutations $\tau\in\Sym_\Ran$ of the range act on bijections $f$ in the natural way:
\begin{subequations}
\label{eqn:ActOnF}
\begin{alignat}{3}
&\pi
 &\;\;:\;\;&f \;\;\mapsto\;\;
f_\pi:=f\circ\pi^{-1},\\
&\tau
&\;\;:\;\; &f \;\; \mapsto \;\;
f^\tau :=\tau\circ f.
\end{alignat}
\end{subequations}
Note that, since $f$ is a bijection, so are $f_\pi$ and $f^\tau$.
The actions of $\pi\in\Sym_\Dom$ and $\tau\in\Sym_\Ran$ commute: we have $(f_\pi)^\tau=(f^\tau)_\pi$, which we denote by $f^\tau_\pi$ for short.

Let $V_\pi^\tau$ be the unitary operator that acts on the space $\mathcal{F}$ and that maps every $|f\>$ to
$|f_\pi^\tau\>$.
Then $V\colon (\pi,\tau)\mapsto V_\pi^\tau$ is a unitary representation of $\Sym_\Dom\times\Sym_\Ran$, and it decomposes into irreps of $\Sym_\Dom\times\Sym_\Ran$ as the direct product
\begin{equation}
\label{eqn:LambdaLambdaDecom}
V \cong
\bigoplus_{\lambda\vdash N}
\Spe{\lambda}\times\Spe{\lambda}
\end{equation}
(see, for example, \cite[Claim 3.1]{Ros14}, \cite{amrr:index}).
For a transposition $\pi=(x,x')$, we often write $V_{xx'}$ instead of $V_{(x,x')}$. 

It is also convenient to consider $V$ as a representation of just $\Sym_\Dom$ or just $\Sym_\Ran$, and, respectively, we may write $V_\pi:=V_\pi^\epsilon$ and $V^\tau:=V^\tau_\epsilon$, where $\epsilon$ denote the respective identity permutations.
In such a case, $V$ contains $\dm{\lambda}$ instances of the irrep $\Spe{\lambda}$.%
\footnote{If we interpret $\Sym_\Dom=\Func=\Sym_\Ran$, then $V_\pi$ and $V^\tau$ are, respectively, the left and the right regular representations of the group.} 

\subsection{Notation Regarding Various Subspaces}

For $\lambda\vdash N$, $\mu\vdash N-1$, $\nu\vdash N-2$ and distinct $y,y'\in\Ran$, let $\cH^\lambda$, $\cH^{\mu_y}$, and $\cH^{\nu_{yy'}}$ be the isotypic subspaces of $\cF$ corresponding to the irrep $\Spe{\lambda}$ of $\Sym_\Ran$, the irrep $\Spe{\mu}$ of $\Sym_{\Ran\setminus\{y\}}$, and the irrep $\Spe{\nu}$ of $\Sym_{\Ran\setminus\{y,y'\}}$.
Similarly, for $\lambda'\vdash N$, $\mu'\vdash N-1$, and $x\in\Dom$, let $\cH_{\lambda'}$ and $\cH_{\mu'_x}$ be the isotypic subspaces corresponding to, respectively, the irrep $\Spe{\lambda'}$ of $\Sym_\Dom$ and the irrep $\Spe{\mu'}$ of $\Sym_{\Dom\setminus\{x\}}$.

As per our convention on the notation, let $\Pi^\lambda$, $\Pi^{\mu_y}$, $\Pi^{\nu_{yy'}}$, $\Pi_{\lambda'}$, $\Pi_{\mu'_x}$  denote the orthogonal projectors on $\cH^\lambda$, $\cH^{\mu_y}$, $\cH^{\nu_{yy'}}$, $\cH_{\lambda'}$, $\cH_{\mu'_x}$, respectively.
In this paper, we mostly use subscripts for irreps of $\Sym_\Dom$ and its subgroups, superscripts for irreps of $\Sym_\Ran$ and its subgroups, and overscripts for both.

The group actions (\ref{eqn:ActOnF}) of $\Sym_\Dom$ and  $\Sym_\Ran$ commute, therefore all $\Pi^\lambda$, $\Pi^{\mu_y}$, $\Pi^{\nu_{yy'}}$, $\Pi_{\lambda'}$, $\Pi_{\mu'_x}$ also mutually commute.
However, for $\mu,\mu'\vdash N-1$ and for distinct $y,y'\in\Ran$, $\Pi^{\mu_y}$ and $\Pi^{\mu'_{y'}}$ in general do not commute.
When two or more such projectors do commute, for the sake of brevity, we often denote their product by combining their scripts. For example,
\[
\Pi_\lambda^{\mu_y,\nu_{yy'}}
:=
\Pi_\lambda
\Pi^{\mu_y}\Pi^{\nu_{yy'}}.
\]

\paragraph{Overscripts.}

Let $x\in \Dom$ and $y\in\Ran$.
Using the branching rule, one can show that $\Pi_{\mu_x}\ne \Pi^{\mu_y}$ for any $\mu\vdash N-1$ (we do not use this fact).
On the other hand, for $\lambda\vdash N$, 
due to the decomposition (\ref{eqn:LambdaLambdaDecom}), we have
\[
\Pi_\lambda = \Pi^\lambda = \Pi_\lambda^\lambda
\]
and we denote this projector by $\Piooo{\lambda}{}{}$.
This notation lets us ``bring down'' the overscripts as superscripts or subscripts, if we want to consider $\Spe{\lambda}$ as a representation of $\Sym_\Ran$ or $\Sym_\Dom$, respectively.
For example, by the branching rule, the projectors $\Piooo{\lambda}{\mu_y}{}$ and $\Piooo{\lambda}{}{\mu_x}$, where $\mu\vdash N-1$, are non-zero if and only if $\mu\yunder\lambda$.
Note that all three $\Piooo{\lambda}{\mu_y}{}$, $\Piooo{}{\lambda,\mu_y}{}$, and $\Piooo{}{\mu_y}{\lambda}$ denote the same projector, and the notation we choose to use may depend on the desired emphasis, as well as the fact that the latter two occupy less vertical space.
We have
\begin{equation}
\label{eqn:IFDec}
I_\regF=\sum_{\lambda\vdash N}\Piooo{\lambda}{}{}
\qquad\text{and}\qquad
\cF=\bigoplus_{\lambda\vdash N}\Hooo{\lambda}{}{}.
\end{equation}

\paragraph{Bijections on $N-1$ elements.}

Recall that $\Xi_x^y$ projects on all bijections from $\Dom$ to $\Ran$ that map $x$ to $y$.
These bijections, in turn, are trivially in one-to-one correspondence to all bijections from $\Dom\setminus\{x\}$ to $\Ran\setminus\{y\}$.
Hence, analogously to (\ref{eqn:IFDec}), we can write
\begin{equation}\label{eqn:XiBasicDecomp}
\Xi_x^y = \sum_{\mu\vdash N-1}\Xiooo{\mu}{x}{y},
\end{equation}
where
\[
\Xiooo{\mu}{y}{x} := \Xi_x^y\Pi^{\mu_y} = \Xi_x^y\Pi_{\mu_x}
\]
is an orthogonal projector on a single instance of the irrep $\Spe{\mu}\otimes\Spe{\mu}$ of $\Sym_{\Dom\setminus\{x\}}\times\Sym_{\Ran\setminus\{y\}}$.

\subsection{Definitions of $\spA{k}$ and $\spB{k}$ via Representation Theory}
\label{sec:spAspBviaReps}

Let us first consider $\spA{k}$ as a subspace of $\cF$. 
Note that $\spA{k}$ is invariant under the action of $\Sym_\Dom\times\Sym_\Ran$, and thus it is a representation of the group.
Recall the decomposition~(\ref{eqn:XiBasicDecomp}) of $\cF$, in which
the space $\Hooo{\lambda}{}{}$ is the $(\Spe{\lambda}\otimes\Spe{\lambda})$-isotypic subspace containing the sole instance of the irrep.
Hence, for every $\lambda\vdash N$, we either have $\Hooo{\lambda}{}{}\subseteq\spA{k}$ or $\Hooo{\lambda}{}{}\perp\spA{k}$.

Now consider $\spB{k}\subseteq\cF$, which is invariant under the action of $\Sym_\Dom\times\Sym_{\Ran\setminus\{\zero\}}$ and, thus, is a representation of the group.
The space $\cF$ considered as a representation of $\Sym_\Dom\times\Sym_{\Ran\setminus\{\zero\}}$ decomposes into irreps as 
\[
\cF=\bigoplus_{\substack{\lambda\vdash N,\,\mu\vdash N-1\\ \mu\yunder\lambda}}\Hooo{\lambda}{\mu_\zero}{},
\]
and the space $\Hooo{\lambda}{\mu_\zero}{}$ is the $(\Spe{\lambda}\otimes\Spe{\mu})$-isotypic subspace containing the sole instance of the irrep.
Hence, for every $\lambda\vdash N$ and $\mu\vdash N-1$ such that $\mu\yunder\lambda$, we either have $\Hooo{\lambda}{\mu_\zero}{}\subseteq\spB{k}$ or $\Hooo{\lambda}{\mu_\zero}{}\perp\spB{k}$.

\begin{clm}
\label{clm:spABviaReps}
We have $\spA{k}=\bigoplus_{\theta\in\lightPart{k}}\Hooo{\bar\theta}{}{}$ and 
$\spB{k+1}=\bigoplus_{\theta\in\lightPart{k}}\cH^{\bar\theta_\zero}$.
\end{clm}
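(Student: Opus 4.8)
The plan is to prove the two claims separately, both via character theory (counting instances of irreps), using the induced representations computed by the Littlewood--Richardson rule in equation~(\ref{eqn:IndLR}). For $\spA{k}$, the key observation is already set up in the text: $\spA{k}$ is a $\Sym_\Dom\times\Sym_\Ran$-invariant subspace of $\cF$, and by the multiplicity-free decomposition~(\ref{eqn:LambdaLambdaDecom}), for each $\lambda\vdash N$ the isotypic component $\Hooo{\lambda}{}{}$ is either contained in $\spA{k}$ or orthogonal to it. So it suffices to identify \emph{which} $\lambda$ occur. First I would recall the spanning vectors $|\vn_\alpha\>$ with $|\alpha|=k$; the stabilizer of $|\vn_{x_1,\ldots,x_k}^{y_1,\ldots,y_k}\>$ under the group action is $\Sym_{\Dom\setminus\{x_1,\ldots,x_k\}}\times\Sym_{\Ran\setminus\{y_1,\ldots,y_k\}}$ (acting trivially, since permuting unassigned domain/range elements fixes the uniform superposition over compatible bijections). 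Hence $\spA{k}$ is a quotient/subrepresentation of the representation induced from the trivial representation of $\Sym_{\Dom\setminus\{x_1,\ldots,x_k\}}\times\Sym_\Dom$-side $\times\,\Sym_{\Ran\setminus\{y_1,\ldots,y_k\}}$ — more precisely, $\spA{k}$ is spanned by a single $\Sym_\Dom\times\Sym_\Ran$-orbit of such vectors, and computing the permutation character (or applying~(\ref{eqn:IndLR}) to both the $\Dom$ and $\Ran$ factors) shows the irreps appearing are exactly $\Spe{\rho}\times\Spe{\rho}$ with $\rho\in\Lambda^+_N(\theta)$ ranging over all $\theta\vdash\ell$, $\ell\le k$. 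Since $\Lambda^+_N(\theta)$ always contains $\bar\theta=(N-\ell,\theta)$, and — crucially — $(N-\ell,\theta)$ is the \emph{only} diagram in $\Lambda^+_N(\theta)$ whose first row has length $\ge N-k$ when one organizes the count by the partition $\theta$ sitting below the first row, the union over all $\theta\in\lightPart{k}$ of these $\bar\theta$ is exactly the set of $\lambda$ with $\lambda_1\ge N-k$, i.e. $\lambda=\bar\theta$ for some $\theta\vdash\ell\le k$. This gives $\spA{k}=\bigoplus_{\theta\in\lightPart{k}}\Hooo{\bar\theta}{}{}$.

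For $\spB{k+1}$, the plan is analogous but with the group $\Sym_\Dom\times\Sym_{\Ran\setminus\{\zero\}}$: the text already records the multiplicity-free decomposition $\cF=\bigoplus_{\mu\yunder\lambda}\Hooo{\lambda}{\mu_\zero}{}$ of $\cF$ as a representation of this group, so again each $\Hooo{\lambda}{\mu_\zero}{}$ is either inside $\spB{k+1}$ or orthogonal to it, and it suffices to determine which pairs $(\lambda,\mu)$ occur. The spanning vectors of $\spB{k+1}$ are $|\vn_{x_1,\ldots,x_{k+1}}^{\zero,y_2,\ldots,y_{k+1}}\>$ (with $\zero$ necessarily among the assigned values), and one such vector is stabilized by $\Sym_{\Dom\setminus\{x_1,\ldots,x_{k+1}\}}\times\Sym_{\Ran\setminus\{\zero,y_2,\ldots,y_{k+1}\}}$ acting trivially. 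Inducing the trivial representation up and using~(\ref{eqn:IndLR}) on both factors, I would show the irreps of $\Sym_\Dom\times\Sym_{\Ran\setminus\{\zero\}}$ appearing in $\spB{k+1}$ are those $\Spe{\rho}\times\Spe{\sigma}$ with $\rho\in\Lambda^+_N(\theta)$ and $\sigma\in\Lambda^+_{N-1}(\theta)$ for $\theta\vdash\ell\le k$; restricting attention to the ones with $\rho=\bar\theta=(N-\ell,\theta)$ and $\sigma=(N-1-\ell,\theta)$, i.e. the unique pair with long enough first rows on both sides, one reads off $\spB{k+1}=\bigoplus_{\theta\in\lightPart{k}}\Hooo{\bar\theta}{\bar\theta_\zero}{}=\bigoplus_{\theta\in\lightPart{k}}\cH^{\bar\theta_\zero}$, matching the claim (the notation $\cH^{\bar\theta_\zero}$ being the $\Spe{\bar\theta}$-isotypic space of $\Sym_{\Ran\setminus\{\zero\}}$, which inside $\cF$ sits exactly where $\Hooo{\bar\theta}{\bar\theta_\zero}{}$ does by the branching rule $\bar\theta_\zero\yunder\bar\theta$).

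There is an alternative, more self-contained route that avoids explicitly computing induced characters, and I would probably present \emph{that} as the main argument, using the explicit recursion for the $|\vn_\alpha\>$ already displayed in Section~\ref{sec:hlssDef}. Namely: one shows $\spA{k}\supseteq\Hooo{\bar\theta}{}{}$ for each $\theta\vdash\ell\le k$ by exhibiting, inside $\spA{\ell}\subseteq\spA{k}$, a nonzero vector transforming under $\Spe{\bar\theta}\times\Spe{\bar\theta}$ — obtained by antisymmetrizing the vectors $|\vn_{x_1,\ldots,x_\ell}^{y_1,\ldots,y_\ell}\>$ along the columns of a tableau of shape $\bar\theta$ via the operators $E^-_{\mathrm{col}}$ from Section~\ref{sec:rep} and using that the projector $\Piooo{\bar\theta}{}{}$ does not annihilate it (a hook-length/dimension count, or the fact that $\bar\theta$ is the ``widest'' diagram hit). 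For the reverse inclusion one shows that any $\lambda$ with $\lambda_1<N-k$ is orthogonal to every spanning $|\vn_\alpha\>$ with $|\alpha|=k$, because $\Piooo{\lambda}{}{}|\vn_\alpha\>$ would have to be fixed (up to the sign representation) by the large Young subgroup stabilizing $|\vn_\alpha\>$, forcing $\lambda$ to have a row of length $\ge N-k$ — this is a standard argument showing $\Spe{\lambda}$ restricted to $\Sym_{N-k}$ contains the trivial representation only if $\lambda_1\ge N-k$. The main obstacle, and the step I would spend the most care on, is verifying that the candidate highest-weight vectors are genuinely nonzero and that $\spB{k+1}$ is \emph{exactly} (not merely contained in) the claimed sum; this requires pinning down the dimension count (e.g., comparing $\dim\spA{k}$ with $\sum_{\theta\in\lightPart{k}}\dm{\bar\theta}^2$, which should equal the number of $k$-tuples-with-distinct-entries-compatible bijection orbit size, and similarly for $\spB{k+1}$) — or, more cleanly, simply invoking~(\ref{eqn:IndLR}) to get the multiplicity list on the nose and noting that the constraint ``first row $\ge N-k$'' selects precisely the summands claimed.
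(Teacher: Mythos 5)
Your ``alternative, more self-contained route'' tracks the paper's argument closely, and the step you flag as the main obstacle --- verifying nonzero-ness of a column-antisymmetrized highest-weight vector, or otherwise pinning down that the containment is an equality --- is precisely where the paper's proof is cleaner. Rather than constructing one nonzero vector in $\spA{k}\cap\cH_{\bar\theta}$, the paper shows directly that $\Pi_{\bar\theta}$ maps \emph{every} basis state $|f\>$ into $\spA{k}$, so $\cH_{\bar\theta}=\Pi_{\bar\theta}\cF\subseteq\spA{k}$ and nothing needs to be proved nonzero. The device is the factorization $P_\tabl=P_\tabl\widehat{E}^+_{R_1(\tabl)}$ together with the observation that $\widehat{E}^+_{R_1(\tabl)}|f\>$ is a scalar multiple of $|\vn_{x_1,\ldots,x_{k'}}^{y_1,\ldots,y_{k'}}\>\in\spA{k'}\subseteq\spA{k}$, where $\{x_1,\ldots,x_{k'}\}=\Dom\setminus R_1(\tabl)$; the remaining column and row factors of $P_\tabl$ are then absorbed because $\spA{k}$ is $\Sym_\Dom$-invariant. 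Your reverse inclusion is, up to language, the paper's: you phrase it as ``$\Res\Spe{\lambda}$ has no trivial $\Sym_{N-k}$-subrepresentation unless $\lambda_1\ge N-k$'', while the paper finds a column of $\tabl$ with two entries $x',x''\notin\{x_1,\ldots,x_k\}$ and uses $\widehat{E}^-_{\{x',x''\}}|\vn_\alpha\>=0$; these are the same branching-rule fact.

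Two further issues. Your primary route via induced characters does not close as written: since $\spA{k}$ is only a \emph{quotient} of the permutation module on weight-$k$ assignments (the vectors $|\vn_\alpha\>$ are linearly dependent, e.g.\ $\spA{k-1}\subseteq\spA{k}$), knowing the irreducible constituents of the induced module gives at best an upper bound on what survives in $\spA{k}$; you would still need a kernel computation or a dimension count, which you only gesture at. Second, in the $\spB{k+1}$ step you equate $\bigoplus_\theta\Hooo{\bar\theta}{\bar\theta_\zero}{}$ with $\bigoplus_\theta\cH^{\bar\theta_\zero}$, but these are not the same space. By the branching rule, $\cH^{\bar\theta_\zero}$ (the full $\Spe{\bar\theta_*}$-isotypic component of $\Sym_{\Ran\setminus\{\zero\}}$ inside $\cF$) equals $\bigoplus_{\lambda\colon\bar\theta_*\yunder\lambda}\Hooo{\lambda}{\bar\theta_\zero}{}$, which strictly contains $\Hooo{\bar\theta}{\bar\theta_\zero}{}$; compare the displayed decompositions of $\spB{1}$ and $\spB{2}$ just after the claim. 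The paper avoids this bookkeeping entirely by writing $\spB{k}=\bigoplus_{x\in\Dom}\spB{k,x}$ with each $\spB{k,x}\subseteq\cF_x^\zero$ and rerunning the $\spA{}$ argument verbatim with $\Sym_{\Ran\setminus\{\zero\}}$ in place of $\Sym_\Dom$, which yields the $\Sym_{\Ran\setminus\{\zero\}}$-isotypic decomposition of $\spB{k}$ directly, with no further refinement to $\Sym_\Dom\times\Sym_{\Ran\setminus\{\zero\}}$ needed at that stage.
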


For example, for small values of $k$, the above claim states that
\begin{align*}
& \spA{0}=\underline{\cH_{(N)}^{(N-1)_\zero}}, \\
& \spB{1}=\cH_{(N)}^{(N-1)_\zero} \oplus \underline{\cH_{(N-1,1)}^{(N-1)_\zero}}, \\
& \spA{1}=\cH_{(N)}^{(N-1)_\zero} \oplus \cH_{(N-1,1)}^{(N-1)_\zero}\oplus \underline{\cH_{(N-1,1)}^{(N-2,1)_\zero}}, \\
& \spB{2}=\cH_{(N)}^{(N-1)_\zero} \oplus \cH_{(N-1,1)}^{(N-1)_\zero}\oplus \cH_{(N-1,1)}^{(N-2,1)_\zero}\oplus \underline{\cH_{(N-2,2)}^{(N-2,1)_\zero} \oplus \cH_{(N-2,1,1)}^{(N-2,1)_\zero}}, \\
& \spA{2}=\cH_{(N)}^{(N-1)_\zero} \oplus \cH_{(N-1,1)}^{(N-1)_\zero}\oplus \cH_{(N-1,1)}^{(N-2,1)_\zero}\oplus \cH_{(N-2,2)}^{(N-2,1)_\zero} \oplus \cH_{(N-2,1,1)}^{(N-2,1)_\zero}\oplus \underline{\cH_{(N-2,2)}^{(N-3,2)_\zero} \oplus \cH_{(N-2,1,1)}^{(N-3,1,1)_\zero}}.
\end{align*}
Recalling the definitions $\overline\cH_k^\hi := \spB{k}\cap(\spA{k-1})^\perp$ and $\overline\cH_k^\lo := \spA{k}\cap(\spB{k})^\perp$, 
the underlined spaces are $\overline\cH_0^\lo$, $\overline\cH_1^\hi$, $\overline\cH_1^\lo$, $\overline\cH_2^\hi$, $\overline\cH_2^\lo$, respectively.
In general, due to the above claim, we decompose $\overline\cH_k^\lo$ and $\overline\cH_k^\hi$ into irreps of $\Sym_\Dom\times\Sym_{\Ran\setminus\{\zero\}}$ as follows.
\begin{cor}
\label{cor:HlohiViaIrreps}
We have
\[
\overline\cH^\lo_{k} = 
\bigoplus_{\theta\vdash k}
\Hooo{\bar\theta}{\bar\theta_\zero}{}
\qquad\text{and}\qquad
\overline\cH^\hi_{k} = 
\bigoplus_{\substack{\theta\vdash k,\,\rho\vdash k-1\\\rho\yunder\theta}}
\Hooo{\bar\theta}{\bar\rho_\zero}{}.
\]
\end{cor}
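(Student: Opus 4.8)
The plan is to deduce the corollary directly from Claim~\ref{clm:spABviaReps} together with the branching rule, by taking the appropriate orthogonal complements inside the decompositions given by the claim. First I would recall the definitions $\overline\cH^\lo_k=\spA{k}\cap(\spA{k-1})^\perp$ (the $+\spA{k-1}$ term being redundant in the permutation case) and $\overline\cH^\hi_k=\spB{k}\cap(\spA{k-1})^\perp$. Claim~\ref{clm:spABviaReps} tells us that $\spA{k}=\bigoplus_{\theta\in\lightPart{k}}\Hooo{\bar\theta}{}{}$, so that $\spA{k}\cap(\spA{k-1})^\perp$ picks out exactly those summands $\Hooo{\bar\theta}{}{}$ with $\theta\vdash k$ (all of which are absent from $\spA{k-1}$, since $\lightPart{k-1}$ contains only diagrams with at most $k-1$ boxes, and the $\bar\theta$ for distinct $\theta$ are non-isomorphic irreps of $\Sym_\Ran$ so the corresponding isotypic subspaces are mutually orthogonal). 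Since each $\Hooo{\bar\theta}{}{}$ with $\theta\vdash k$ is the sole instance in $\cF$ of $\Spe{\bar\theta}\times\Spe{\bar\theta}$, and when restricted to $\Sym_\Dom\times\Sym_{\Ran\setminus\{\zero\}}$ the factor $\Spe{\bar\theta}$ of $\Sym_\Ran$ contains the trivial irrep of $\Sym_{\Ran\setminus\{\zero\}}$ with multiplicity one (the restriction of an irrep of $\Sym_n$ to $\Sym_{n-1}$ contains the trivial irrep $\Spe{(n-1)}$ at most once, and does so exactly when one can remove a box to reach $(n-1)$; but here we do not even need that refinement because the relevant isotypic subspace is $\Hooo{\bar\theta}{\bar\theta_\zero}{}$ itself as a $\Sym_\Dom\times\Sym_{\Ran\setminus\{\zero\}}$-subrep). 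This gives the first equation, $\overline\cH^\lo_k=\bigoplus_{\theta\vdash k}\Hooo{\bar\theta}{\bar\theta_\zero}{}$.

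For the high space, I would apply the same reasoning to $\spB{k}$. By Claim~\ref{clm:spABviaReps} (shifting the index), $\spB{k}=\bigoplus_{\theta\in\lightPart{k-1}}\cH^{\bar\theta_\zero}$, where each $\cH^{\bar\theta_\zero}$ is the isotypic subspace of the irrep $\Spe{\bar\theta}$ of $\Sym_{\Ran\setminus\{\zero\}}$ (equivalently, of $\Spe{\lambda}\otimes\Spe{\bar\theta}$ as an irrep of $\Sym_\Dom\times\Sym_{\Ran\setminus\{\zero\}}$ summed over those $\lambda\vdash N$ with $\bar\theta\yunder\lambda$). Intersecting with $(\spA{k-1})^\perp$ removes the part lying in $\spA{k-1}=\bigoplus_{\sigma\in\lightPart{k-1}}\Hooo{\bar\sigma}{}{}$. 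Here the key structural fact is the branching rule: as a $\Sym_\Dom\times\Sym_{\Ran\setminus\{\zero\}}$-representation, $\Hooo{\lambda}{}{}$ decomposes as $\bigoplus_{\mu\yunder\lambda}\Hooo{\lambda}{\mu_\zero}{}$. So $\spB{k}\cap(\spA{k-1})^\perp$ retains exactly those summands $\Hooo{\lambda}{\bar\rho_\zero}{}$ for which $\bar\rho\in\{\bar\theta:\theta\in\lightPart{k-1}\}$ but $\lambda\notin\{\bar\sigma:\sigma\in\lightPart{k-1}\}$, i.e. $|\lambda|=N$ with $|\lambda|$ corresponding to a diagram $\lambda=\bar\theta'$ for some $\theta'\vdash k$ (not $\theta'\in\lightPart{k-1}$), and $\bar\rho\yunder\bar\theta'$ with $\rho\vdash k-1$. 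Writing $\theta$ for $\theta'$ and $\rho$ for the smaller diagram, and noting $\bar\rho\yunder\bar\theta$ is equivalent to $\rho\yunder\theta$ when $\bar\theta=(N-k,\theta)$ and $\bar\rho=(N-k,\rho)$ with the first row unchanged (removing a box from $\bar\theta$ to get $\bar\rho$ must come from the sub-first-row part, since $(N-k)-1<N-k+1$ would violate the partition inequality with $\theta_1$ only in the boundary case, which one checks), we obtain $\overline\cH^\hi_k=\bigoplus_{\theta\vdash k,\,\rho\vdash k-1,\,\rho\yunder\theta}\Hooo{\bar\theta}{\bar\rho_\zero}{}$.

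The main obstacle I anticipate is the bookkeeping in the last step: one must be careful that ``remove a box from $\bar\theta=(N-k,\theta)$ to land in $\spA{k-1}$'' versus ``…to land in a genuinely new high-space summand'' is correctly sorted out, i.e. that removing a box from the \emph{first} row of $\bar\theta$ yields $\bar\theta'$ with $\theta'\in\lightPart{k-1}$ (hence already inside $\spA{k-1}$, contributing to $\overline\cH^\lo_{k-1}$ rather than $\overline\cH^\hi_k$), while removing a box from a lower row yields $\bar\rho_\zero$ with $\rho\vdash k-1$, $\rho\yunder\theta$. This requires the inequality $N-k\ge \theta_1$ (so that $(N-k,\theta)$ is a valid diagram and the first row is strictly longer than the second whenever $N-k>\theta_1$, and equal only in a degenerate small-$k$ regime), which holds under the standing assumption $k\le N/2$; the edge cases $\theta_1=N-k$ would need a separate remark but do not arise for the parameter ranges used in Theorem~\ref{thm:mainThm}. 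Everything else is a direct transcription of Claim~\ref{clm:spABviaReps} and the branching rule, using that distinct Specht modules give orthogonal isotypic subspaces so that intersections and orthogonal complements act summand-wise.
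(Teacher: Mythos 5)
Your argument for the second identity (the high subspace) is essentially correct in substance, but your derivation of the first identity (the low subspace) has a real gap that stems from misremembering the definition of $\overline\cH^\lo_k$.

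You write $\overline\cH^\lo_k=\spA{k}\cap(\spA{k-1})^\perp$, but in the permutation case the definition is $\overline\cH^\lo_k=\spA{k}\cap\big(\spB{k}+\spA{k-1}\big)^\perp=\spA{k}\cap\spB{k}^\perp$ (the $\spA{k-1}$ term being the redundant one, not $\spB{k}$). What you instead computed, correctly, is
\[
\spA{k}\cap(\spA{k-1})^\perp=\bigoplus_{\theta\vdash k}\Hooo{\bar\theta}{}{},
\]
but this is $\overline\cH^\lo_k\oplus\overline\cH^\hi_k$, not $\overline\cH^\lo_k$. Each $\Hooo{\bar\theta}{}{}$ with $\theta\vdash k$, when viewed as a $\Sym_\Dom\times\Sym_{\Ran\setminus\{\zero\}}$-representation, branches via the branching rule as
\[
\Hooo{\bar\theta}{}{}=\Hooo{\bar\theta}{\bar\theta_\zero}{}\oplus\bigoplus_{\rho\yunder\theta}\Hooo{\bar\theta}{\bar\rho_\zero}{},
\]
and only the \emph{first} summand belongs to $\overline\cH^\lo_k$; the remaining summands are exactly the pieces of $\overline\cH^\hi_k$. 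Your parenthetical about the trivial irrep of $\Sym_{\Ran\setminus\{\zero\}}$ does not rescue this: $\bar\theta_\zero=(N-1-k,\theta)$ is not the trivial irrep $(N-1)$ unless $\theta$ is empty, so the "multiplicity one of the trivial irrep" statement is irrelevant here, and the concluding assertion that "the relevant isotypic subspace is $\Hooo{\bar\theta}{\bar\theta_\zero}{}$ itself" is precisely the thing you would need to justify, not assume.

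The fix is short. Either use the correct definition and compare $\spA{k}$ to $\spB{k}$ directly: by Claim~\ref{clm:spABviaReps} and the branching rule, both are orthogonal direct sums of pieces $\Hooo{\bar\theta}{\bar\rho_\zero}{}$, with $\spA{k}$ contributing $(\theta,\rho)$ for $\theta\in\lightPart{k}$ and $\rho\in\{\theta\}\cup\{\rho:\rho\yunder\theta\}$, and $\spB{k}$ contributing $(\theta,\theta)$ for $\theta\in\lightPart{k-1}$ together with $(\theta,\rho)$ for $\theta\in\lightPart{k}$, $\rho\yunder\theta$; the set-difference is exactly $\{(\theta,\theta):\theta\vdash k\}$. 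Or, since your derivation of $\overline\cH^\hi_k$ is sound, simply subtract it from $\overline\cH^\lo_k\oplus\overline\cH^\hi_k=\bigoplus_{\theta\vdash k}\Hooo{\bar\theta}{}{}$ using the branching decomposition displayed above. Either way the remaining discussion of boundary cases (first row length $N-k$ vs.\ $\theta_1$) is fine but could be stated more cleanly; it only matters that the pieces $\bar\rho_\zero\yunder\bar\theta$ with $\rho\vdash k-1$ are obtained by removing a box below the first row, which is automatic from $|\bar\rho_\zero|=N-1$ and comparing first-row lengths.
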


\begin{proof}[Proof of Claim~\ref{clm:spABviaReps}]
Let us first address the decomposition of $\spA{k}$ into irreps.
Recall that for each $\lambda\vdash N$
we have either $\Hooo{\lambda}{}{}\subseteq\spA{k}$ or $\Hooo{\lambda}{}{}\perp\spA{k}$.

\paragraph{Proof of $\spA{k}\supseteq\bigoplus_{\theta\in \lightPart{k}}\cH_{\bar\theta}$.}
Given $\theta\vdash k'$ with $k'\le k$ and $f\in\Func$, it suffices to show that $\Pi_{\bar\theta}|f\>\in\spA{k}$.
Recall Eqn.~(\ref{eqn:PiRhoAlt}) that expresses $\Pi_{\bar\theta}$ via Young tableaux $\tabl$.
Given $\tabl$ of shape $\bar\theta$, it suffices to show that $P_\tabl|f\>\in\spA{k}$.
Further on, since $P_\tabl=P_\tabl \widehat{E}_{R_1(\tabl)}^+$,
 it suffices to prove $\widehat{E}_{R_1(\tabl)}^+|f\>\in\spA{k}$,
because $\spA{k}$ being a representation of $\Sym_\Dom$ implies that it is invariant under $P_\tabl$.

Let $\{x_1,\ldots,x_{k'}\}:=\Dom\setminus R_1(\tabl)$ and let  $y_i:=f(x_i)$ for $i \in [k']$.
We note that 
\[
\widehat{E}^+_{R_1(\tabl)} |f\> = |\vn_{x_1,\ldots,x_{k'}}^{y_1,\ldots,y_{k'}}\> \big/\sqrt{(N-k)!}.
\]
As a result, $P_\tabl$ maps every $|f\>$ and, in turn, every vector in $\cF$ to a vector in $\spA{k'}\subseteq\spA{k}$. 

\paragraph{Proof of $\spA{k}\subseteq\bigoplus_{\theta\in \lightPart{k}}\cH_{\bar\theta}$.}
We have to show that $\spA{k}\perp\cH_{\bar\theta}$ for all $\theta$ such that $|\theta|\ge k+1$, and we do that by showing that $\cH_{\bar\theta}$ maps vectors spanning $\spA{k}$ to $0$.
Take a vector $|\vn_{x_1,\ldots,x_k}^{y_1,\ldots,y_k}\> $ and any Young tableau $\tabl$ of shape $\bar\theta$, where $|\theta|\ge k+1$. One can see that, because $|\theta|\ge k+1$, there exists a column $C_j(\tabl)$ of $\tabl$ that contains two elements 
$x',x''$ outside $\{x_1,\ldots,x_k\}$.
For this column, we have 
\[
\widehat{E}^-_{C_j(\tabl)}=\widehat{E}^-_{C_j(\tabl)} \widehat{E}^-_{\{x',x''\}}
\]
and, in turn,
\[
P_\tabl^\top = P_\tabl^\top \widehat{E}^-_{\{x',x''\}}.
\]
However, since $x'$ and $x''$ are outside $\{x_1,\ldots,x_k\}$, the vector $|\vn_{x_1,\ldots,x_k}^{y_1,\ldots,y_k}\> $ is fixed by $V_{x'x''}$, and thus it is nullified by
$\widehat{E}^-_{\{x',x''\}}=(I-V_{x'x''})/2$ and, in turn, $P_\tabl^\top$.
Since this holds for any tableau $\tabl$, we have $\Pi_{\bar\theta} |\vn_{x_1,\ldots,x_k}^{y_1,\ldots,y_k}\>  = 0$.

\paragraph{Proof of $\spB{k}=\bigoplus_{\theta\in \lightPart{k}}\cH^{\bar\theta_\zero}$.}
We can write $\cF=\oplus_{x\in\Dom}\cF_x^\zero$, where $\cF_x^\zero:=\{|f\>\colon f(x)=\zero\}$ is the image of $\Xi_x^\zero$.
We can write
\[
\spB{k}=
\bigoplus_{x\in\Dom}
\spB{k,x},
\qquad\text{where}\qquad
\spB{k,x}:=\spn\big\{|\vn_{x,x_2,\ldots,x_{k}}^{\zero,y_2,\ldots,y_{k}}\>\colon x_i\in \Dom\setminus\{x\}\And y_i\in\Ran\setminus\{\zero\}\big\}\subseteq\cF_x^0.
\]
The rest of the proof proceeds exactly as for $\spA{k}$, except now, instead of permutations over $\Dom$ (or, equivalently, over $\Ran$), we consider permutations over $\Ran\setminus\{0\}$.
\end{proof}

\section{Proof of Theorem~\ref{thm:mainThm} for the Permutation Case}
\label{sec:MainTmProof}

In this section, we prove Theorem~\ref{thm:mainThm}. While originally $\cH_k^\lo$ and $\cH_k^\hi$ were defined via vectors $|\vn_{x_1,\ldots,x_k}^{y_1,\ldots,y_k}\>$, in the proof we only use their expression via representation theory arising from Corollary~\ref{cor:HlohiViaIrreps}.

\paragraph{Alternative proof of Point 1.}

While we have already proven Point 1 in Section~\ref{sec:MainTmFrame}, here we provide an alternative proof based on the representation theory.
Recall the expression~(\ref{eq:ODecomp}) that expresses $O$ via projectors
\(
\Xi_x^y = \sum_{\mu\vdash N-1} \Xiooo{\mu}{y}{x}.
\)
By the branching rule, we have
\begin{equation}\label{eqn:LambdaDiff}
\Piooo{\lambda}{}{}\Xi_x^y \Piooo{\lambda'}{}{}
= \sum_{\substack{\mu\vdash N-1\\\mu\yunder\lambda,\,\mu\yunder\lambda'}}\Piooo{\lambda}{}{}\Xiooo{\mu}{y}{x} \Piooo{\lambda'}{}{}.
\end{equation}
Hence, if $\lambda=\bar\theta$ and $\lambda'=\bar\eta$ for $\eta,\theta$ such that $\big||\theta|-|\eta|\big|>1$, no $\mu$ exists satisfying both $\mu\yunder\lambda$ and $\mu\yunder\lambda'$, therefore $\Piooo{\bar\theta}{}{}\Xi_x^y \Piooo{\bar\eta}{}{}=0$.
This concludes the proof of Point~1 of Theorem~\ref{thm:mainThm}.

\paragraph{Proof of Point 2.}

Recall that we have already shown in Section~\ref{sec:MainTmFrame} that
\[
\big\|P|\psi_k\>\big\|
\le  \alpha_k + \big\|\Xi_x^\zero\Pi^\lo_k\big\|,
\]
and it is left to bound the latter norm.
Since $I_{\regF}=\sum_{\mu\vdash N-1}\Pi^{\mu_\zero}$
and $\Pi^\lo_k$ and $\Xi_x^\zero$ commute with all $\Pi^{\mu_\zero}$, by Property~5 of Fact~\ref{fac:norms} we have
\[
\big\|\Xi_x^\zero\Pi^\lo_k\big\|
=
\max_{\theta\in\lightPart{k}}
\Big\|\Xi_x^\zero\Piooo{\bar\theta}{\bar\theta_\zero}{}\Big\|.
\]
To evaluate the norm under maximization, more generally we have
\begin{multline*}
\Big\|\Xiooo{\mu}{\zero}{x}\Piooo{\lambda}{\mu_\zero}{}\Big\|^2
=\Big\|\Xiooo{\mu}{0}{x}\Piooo{\lambda}{\mu_0}{}\Xiooo{\mu}{0}{x}\Big\|
=\frac{\tr\Big[\Xiooo{\mu}{0}{x}\Piooo{\lambda}{\mu_0}{}\Big]}{\dm\mu^2}
=\frac{\tr\Big[\Xiooo{}{\zero}{x}\Piooo{\lambda}{\mu_0}{}\Big]}{\dm\mu^2}
\\=\frac{\sum_{x\in\Dom}\tr\Big[\Xiooo{}{0}{x}\Piooo{\lambda}{\mu_0}{}\Big]}{N\dm{\mu}^2}
=\frac{\tr\Big[\Piooo{\lambda}{\mu_0}{}\Big]}{N\dm{\mu}^2}
=\frac{\dm\lambda}{N\dm\mu}.
\end{multline*}
The proof of Point 2 of Theorem~\ref{thm:mainThm} concludes by taking $\lambda=\bar\theta$ and $\mu=\bar\theta_*$ and applying the following claim, which is also used in the proof of Point~3.
\begin{clm}[{\cite[Claim~6.3]{Ros14}}]\label{clm:dimClose}
For $\theta\vdash k$, we have
$
\frac{\dim\bar\theta}{\dim\bar\theta_*} \le \frac{N}{N-2k}.
$
\end{clm}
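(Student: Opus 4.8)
The plan is to prove Claim~\ref{clm:dimClose} by a direct computation with the hook length formula. Write $\theta=(\theta_1,\ldots,\theta_r)\vdash k$; the stated inequality is only meaningful when $N>2k$, so I assume $k<N/2$, which guarantees $\theta_1\le k<N-k$ and hence that $\bar\theta_*=(N-1-k,\theta)$ is a genuine Young diagram, obtained from $\bar\theta=(N-k,\theta)$ by deleting the single corner box $(1,N-k)$. Since $\bar\theta\vdash N$ and $\bar\theta_*\vdash N-1$, the hook length formula $\dim\rho=|\rho|!/h(\rho)$ gives
\[
\frac{\dim\bar\theta}{\dim\bar\theta_*}=N\cdot\frac{h(\bar\theta_*)}{h(\bar\theta)},
\]
so it suffices to show $h(\bar\theta_*)/h(\bar\theta)\le 1/(N-2k)$.

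Next I would track how the hook lengths change when the box $(1,N-k)$ is removed. Every box $(i,j)$ with $i\ge 2$ has $j\le\theta_1<N-k$, so the deleted box lies neither in its row nor in its column, whence $h_{\bar\theta_*}(i,j)=h_{\bar\theta}(i,j)$. Every box $(1,j)$ with $1\le j\le N-1-k$ has the deleted box in its arm, so $h_{\bar\theta_*}(1,j)=h_{\bar\theta}(1,j)-1$. Finally $h_{\bar\theta}(1,N-k)=1$. Taking the product of all hook lengths, the contributions from the rows $i\ge 2$ cancel, and
\[
\frac{h(\bar\theta_*)}{h(\bar\theta)}=\prod_{j=1}^{N-1-k}\frac{h_{\bar\theta}(1,j)-1}{h_{\bar\theta}(1,j)}.
\]

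To finish I would split this product at $j=\theta_1$. For $\theta_1<j\le N-1-k$ the box $(1,j)$ has nothing below it, so $h_{\bar\theta}(1,j)=N-k-j+1$, and as $j$ ranges over $\theta_1+1,\ldots,N-1-k$ these values run through $N-k-\theta_1,\ N-k-\theta_1-1,\ \ldots,\ 2$, so the corresponding subproduct telescopes to $1/(N-k-\theta_1)$. For $1\le j\le\theta_1$ each factor $(h_{\bar\theta}(1,j)-1)/h_{\bar\theta}(1,j)$ is at most $1$. Hence
\[
\frac{h(\bar\theta_*)}{h(\bar\theta)}\le\frac{1}{N-k-\theta_1}\le\frac{1}{N-2k},
\]
the last inequality because $\theta_1\le|\theta|=k$; combining with the first display yields the claim. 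The only genuinely nontrivial point is the bookkeeping in the second step — in particular recognising that deleting the row-one corner box leaves every lower row untouched, which is exactly where the hypothesis $\theta_1<N-k$ is used — but this is routine rather than a real obstacle.
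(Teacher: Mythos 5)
Your proof is correct. The paper does not contain its own proof of this claim; it is cited verbatim from \cite[Claim~6.3]{Ros14}, so there is no in-paper argument to compare against. Your computation is a clean, self-contained derivation via the hook length formula: writing $\dim\bar\theta/\dim\bar\theta_* = N\cdot h(\bar\theta_*)/h(\bar\theta)$, observing that deleting the corner box $(1,N-k)$ leaves all hook lengths in rows $\ge 2$ unchanged (since $\theta_1\le k<N-k$) while decrementing every row-one hook by one, and then telescoping the tail $\theta_1<j\le N-1-k$ to $1/(N-k-\theta_1)\le 1/(N-2k)$. The bookkeeping is sound, including the implicit use of $k<N/2$ to guarantee $\theta_1\le N-1-k$ so that $\bar\theta_*$ is a genuine diagram, and the edge cases ($\theta$ empty, $\theta_1=N-1-k$) degenerate correctly. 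This is essentially the standard argument one would expect for a ratio of two Specht module dimensions differing by one corner box, and it is the right tool here.
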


\paragraph{Proof of Point 3.}

Recall that we have already shown in Section~\ref{sec:MainTmFrame} that
\[
\alpha_k 
\le \alpha_{k-1}
+
2\Big(
\sum_{y} \big\|\Pi^\lo_k \Xi_x^y \Pi^\hi_k \big\|^2
\Big)^{1/2}.
\]

\begin{clm}\label{clm:Yis0transition}
We have $\big\|\Pi^\lo_k \Xi_x^0 \Pi^\hi_k \big\|\le 1/\sqrt{N-2k}$.
\end{clm}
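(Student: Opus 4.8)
The plan is to reduce the operator norm $\big\|\Pi^\lo_k\Xi_x^\zero\Pi^\hi_k\big\|$ to a trace computation by exploiting Schur's lemma, exactly as was done for the Point~2 bound $\big\|\Xi_x^\zero\Pi^\lo_k\big\|$. First I would use the decomposition $I_\regF=\sum_{\mu\vdash N-1}\Pi^{\mu_\zero}$ together with the fact that both $\Pi^\lo_k$, $\Pi^\hi_k$, and $\Xi^\zero_x$ commute with every $\Pi^{\mu_\zero}$ (since $\Xi^\zero_x$ restricted to bijections sending $x\mapsto\zero$ is naturally a representation of $\Sym_{\Ran\setminus\{\zero\}}$, and $\cH^\lo_k$, $\cH^\hi_k$ are invariant under this group). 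By Property~5 of Fact~\ref{fac:norms} the norm equals $\max_{\mu}\big\|\Pi^\lo_k\Xiooo{\mu}{\zero}{x}\Pi^\hi_k\big\|$. Now I invoke Corollary~\ref{cor:HlohiViaIrreps}: $\overline\cH^\lo_k$ is a sum of spaces $\Hooo{\bar\theta}{\bar\theta_\zero}{}$ over $\theta\vdash k$, and $\overline\cH^\hi_k$ is a sum of spaces $\Hooo{\bar\theta}{\bar\rho_\zero}{}$ over $\theta\vdash k$, $\rho\vdash k-1$, $\rho\yunder\theta$. The key point is that $\Xiooo{\mu}{\zero}{x}$ is the projector onto the sole instance of $\Spe{\mu}\otimes\Spe{\mu}$ inside $\cF$ as a representation of $\Sym_{\Dom\setminus\{x\}}\times\Sym_{\Ran\setminus\{\zero\}}$, so a nonzero contribution to the norm requires the $\Spe{\mu}$-irrep (over $\Sym_{\Ran\setminus\{\zero\}}$) that appears in both the "low" factor $\bar\theta_\zero$ and the "high" factor $\bar\rho_\zero$ to coincide — that forces $\bar\rho_\zero=\mu=\bar\theta_\zero$ is impossible (different box counts), so actually the relevant matching is via the $\Sym_\Dom$ side or via a shared $\mu\yunder\bar\theta$ on the range side with $\mu$ playing the role in both.

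Concretely, I expect the right bookkeeping to be: write $\Pi^\lo_k\Xiooo{\mu}{\zero}{x}\Pi^\hi_k$ as a sum over pairs $(\theta,\theta')$ with $\theta,\theta'\vdash k$ and over $\rho\vdash k-1$ with $\rho\yunder\theta'$, of terms $\Piooo{\bar\theta}{\bar\theta_\zero}{}\,\Xiooo{\mu}{\zero}{x}\,\Piooo{\bar\theta'}{\bar\rho_\zero}{}$, and then note that such a term is nonzero only when the $\Sym_{\Ran\setminus\{\zero\}}$-irreps match, i.e.\ $\mu=\bar\theta_\zero$ \emph{and} $\mu=\bar\rho_\zero$ — but these differ, so we must instead exploit that $\Xi_x^\zero$ lives over $\Sym_{\Dom\setminus\{x\}}$ while the projectors' superscripts are over $\Sym_{\Ran}$-subgroups, so the matching condition is on the $\Sym_{\Dom\setminus\{x\}}$-irrep: $\Xiooo{\mu}{\zero}{x}$ carries $\Spe{\mu}$ over $\Sym_{\Dom\setminus\{x\}}$, and we need $\mu\yunder\bar\theta$ (for the low side, since $\bar\theta$ restricted to $\Sym_{\Dom\setminus\{x\}}$ contains $\mu$) and $\mu\yunder\bar\theta'$ (for the high side). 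Within a fixed isotypic pairing, Lemma~\ref{lem:isotypicOverlap} converts the squared norm into a ratio of traces divided by $\dim\mu$, and the surviving trace is $\tr\big[\Xiooo{\mu}{\zero}{x}\Piooo{\bar\theta'}{\bar\rho_\zero}{}\Piooo{\bar\theta}{\bar\theta_\zero}{}\big]$ or similar; after summing over $x\in\Dom$ and using $\sum_x\Xiooo{}{\zero}{x}=I_\regF$ (the permutation-case identity noted after the definition of $\Xi$), this collapses to dimension ratios of the form $\dim\bar\theta / (N\dim\bar\theta_*)$, identical in shape to the Point~2 computation.

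The final step is then purely a dimension estimate: the squared norm is bounded by something like $\dim\bar\theta/(N\dim\bar\theta_\zero)$ summed or maximized appropriately, and Claim~\ref{clm:dimClose} gives $\dim\bar\theta/\dim\bar\theta_* \le N/(N-2k)$, yielding the bound $1/\sqrt{N-2k}$. \textbf{The main obstacle} I anticipate is the careful representation-theoretic bookkeeping of exactly which pairs of irreps survive the product $\Pi^\lo_k\Xiooo{\mu}{\zero}{x}\Pi^\hi_k$ — in particular verifying that the "high" space, whose defining superscript $\bar\rho_\zero$ has $N-1-(k-1)=N-k$ boxes over a group of size $N-1$, pairs with the "low" space through a common $\mu\vdash N-1$ only via the branching on the $\Sym_\Dom$ side, and that the off-diagonal $(\theta\ne\theta')$ terms either vanish or combine cleanly. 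Once that structure is pinned down, the trace evaluations mirror those already carried out in the proof of Point~2, and the dimension bound from Claim~\ref{clm:dimClose} finishes the argument. I would also double-check that no extra factor (e.g.\ a $\sqrt{2}$ from a sum over two boxes $\rho\yunder\theta$) sneaks in; since the claim asserts a clean $1/\sqrt{N-2k}$ with no constant, the isotypic orthogonality must be killing all but one term in each relevant sum.
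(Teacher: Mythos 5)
Your opening move is correct and matches the paper: decompose $I_\regF=\sum_{\mu\vdash N-1}\Pi^{\mu_\zero}$, note all three operators commute with these projectors, and apply Property~5 of Fact~\ref{fac:norms} to get a max over $\mu$. But you then go astray on the key structural step. You claim that the range-side matching fails because $\bar\theta_\zero$ and $\bar\rho_\zero$ ``differ (different box counts)'', and conclude the pairing must instead be on the $\Sym_{\Dom\setminus\{x\}}$ side. That is a genuine error: both $\bar\theta_\zero$ and $\bar\rho_\zero$ are always $(N-1)$-box diagrams, and they coincide precisely when $\theta=\rho$. You implicitly froze the same level index $k$ for the low and high spaces, but $\Pi^\lo_k$ and $\Pi^\hi_k$ are sums over levels (Eqn.~(\ref{eqn:cHaccum})): the block $\Pi^{\bar\theta_\zero}$ with $\theta\vdash k'$ contains both $\overline\cH^\lo_{k'}=\Hooo{\bar\theta}{\bar\theta_\zero}{}$ and a piece of $\overline\cH^\hi_{k'+1}=\bigoplus_{\chi\succ\theta}\Hooo{\bar\chi}{\bar\theta_\zero}{}$. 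That one-level offset is exactly what the $\Pi^{\mu_\zero}$ decomposition reveals, and it is the central insight of the proof. By missing it, you detour to a domain-side $\Pi_{\mu_x}$ pairing that does not split the low and high projectors into single irreps the way the range-side blocks do, so the isotypic-overlap argument (Lemma~\ref{lem:isotypicOverlap}) no longer applies cleanly.

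After the correct reduction, the paper bounds, for fixed $\theta\in\lightPart{k-1}$, the norm of $\Piooo{\bar\theta}{\bar\theta_\zero}{}\,\Xi_x^\zero\,\sum_{\chi\succ\theta}\Piooo{\bar\chi}{\bar\theta_\zero}{}$; squaring, inserting the natural $\Sym_{\Dom\setminus\{x\}}$ subscript $\bar\theta_x$, using Lemma~\ref{lem:isotypicOverlap} to factor the trace, and applying $\sum_x\Xiooo{}{\zero}{x}=I$ yields $\dim\bar\theta\cdot\big(\sum_{\chi\succ\theta}\dim\bar\chi\big)\big/\big(N^2(\dim\bar\theta_*)^2\big)$, which is then bounded by the branching-rule inequality $\sum_{\chi\succ\theta}\dim\bar\chi<N\dim\bar\theta_*$ (Eqn.~(\ref{eqn:IndBranchIneq})) together with Claim~\ref{clm:dimClose}. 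Note in particular that your worry about an extra $\sqrt{2}$ from multiple $\rho\yunder\theta$ is misplaced for the wrong reason: the sum over $\chi\succ\theta$ genuinely survives (isotypic orthogonality does not kill it), but it is absorbed into the branching-rule estimate, not into a constant. Your trace-and-dimension endgame is the right instinct, but the proposal is underspecified without the correct cross-level pairing, which is where the argument actually lives.
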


\begin{clm}\label{clm:Ynot0transition}
We have $\big\|\Pi^\lo_k \Xi_x^y \Pi^\hi_k \big\|\le 1/(N-2k)$ for any $y\in\Ran\setminus\{0\}$.
\end{clm}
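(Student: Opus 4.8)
The plan is to reduce the norm $\|\Pi^\lo_k \Xi_x^y \Pi^\hi_k\|$ for $y\ne\zero$ to a sum of overlaps between isotypic projectors living in a common isotypic subspace, so that Lemma~\ref{lem:isotypicOverlap} applies. First I would use Corollary~\ref{cor:HlohiViaIrreps} to write $\Pi^\hi_k=\sum_{\theta\vdash k,\,\rho\vdash k-1,\,\rho\yunder\theta}\Piooo{\bar\theta}{\bar\rho_\zero}{}$ and $\Pi^\lo_k=\sum_{\theta'\vdash k}\Piooo{\bar\theta'}{\bar\theta'_\zero}{}$, but now the reference range element is $\zero$ while the projector $\Xi_x^y$ involves $y\ne\zero$. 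So I would instead expand $\Pi^\lo_k$ and $\Pi^\hi_k$ using the same Corollary but with $\zero$ replaced by $y$ — this is legitimate because the spaces $\overline\cH^\lo_k,\overline\cH^\hi_k$ are defined independently of $\zero$ (they come from $\spA{k}$ and $\spB{k}$, and $\spB{k}$ uses $\zero$, but the decomposition of $\overline\cH^\lo_k\oplus\overline\cH^\hi_k=\spA{k}\cap\spA{k-1}^\perp$ into $(\Sym_\Dom\times\Sym_\Ran)$-irreps via $\Hooo{\bar\theta}{}{}$ does not; only the finer split into $\cH^\hi$ versus $\cH^\lo$ depends on which element plays the role of $\zero$). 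Concretely: $\Xi_x^y=\sum_{\mu\vdash N-1}\Xiooo{\mu}{y}{x}$, and by the branching rule $\Piooo{\bar\theta}{}{}\Xiooo{\mu}{y}{x}\Piooo{\bar\eta}{}{}$ is nonzero only if $\mu\yunder\bar\theta$ and $\mu\yunder\bar\eta$. Since $\Pi^\lo_k$ is supported on $\bigoplus_{\theta'\in\lightPart{k}}\Hooo{\bar\theta'}{}{}$ and $\Pi^\hi_k$ on $\bigoplus_{\theta\vdash k}\Hooo{\bar\theta}{}{}$ minus the ``$\zero$-diagonal'' instance, the relevant $\mu$ must be of the form $\mu=\bar\rho$ with $\rho\yunder\theta$, $|\theta|=k$, i.e. $\mu\vdash N-1$ with $\mu$ having $k-1$ nontrivial boxes.

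The key computation will be, for fixed $\theta\vdash k$, to bound $\|\Pi^\lo_k\,\Xiooo{\mu}{y}{x}\,\Piooo{\bar\theta}{\bar\rho_\zero}{}\|$ where $\mu=\bar\rho_y$, $\rho\yunder\theta$. Inside $\Xiooo{\mu}{y}{x}$, both $\Pi^\lo_k$ and the high projector become, after restricting to the image of $\Xi_x^y$, orthogonal projectors onto specific $\Spe{\mu}\otimes\Spe{\mu}$-instances of $\Sym_{\Dom\setminus\{x\}}\times\Sym_{\Ran\setminus\{y\}}$. The crucial point is that $\Xiooo{\mu}{y}{x}$ itself projects onto a single such instance (as stated right after~(\ref{eqn:XiBasicDecomp})), so $\Xiooo{\mu}{y}{x}\Piooo{\bar\theta}{\bar\rho_\zero}{}\Xiooo{\mu}{y}{x}$ is a scalar multiple of $\Xiooo{\mu}{y}{x}$, and likewise for $\Pi^\lo_k$. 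Then $\|\Pi^\lo_k\Xiooo{\mu}{y}{x}\Piooo{\bar\theta}{\bar\rho_\zero}{}\|^2$ equals a product of two such scalars (an analogue of Lemma~\ref{lem:isotypicOverlap}/the identity used in the proof of Point~2), each of the form $\tr[\cdot]/\dim(\cdot)^2$, which I would evaluate via the same trace-and-sum-over-$x$ trick used in the proof of Point~2: summing $\tr[\Xiooo{}{y}{x}\Piooo{\bar\theta}{\bar\rho_\zero}{}]$ over $x$ gives $\tr[\Piooo{\bar\theta}{\bar\rho_\zero}{}]=\dim\bar\theta\cdot\dim\bar\rho$, so each scalar is roughly $\dim\bar\theta\cdot\dim\bar\rho/(N\dim\bar\rho_y^2)$ up to the branching combinatorics. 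Collecting the factors, $\|\Pi^\lo_k\Xi_x^y\Pi^\hi_k\|^2$ should come out to something like $\dim\bar\theta_{**}\cdot\dim\bar\rho_{\zero y}/(N\dim\bar\rho_y^2)$ summed/maximized appropriately, and applying Claim~\ref{clm:dimClose} (and its two-box analogue, bounding ratios like $\dim\bar\rho_{\zero y}/\dim\bar\rho_y$ by $N/(N-2k)$-type factors) should yield the bound $1/(N-2k)$. The extra $1/\sqrt{N-2k}$ saving over Claim~\ref{clm:Yis0transition} comes precisely from the fact that, when $y\ne\zero$, both the ``low'' leg on the $\regf_x$-side and the ``high'' leg on the range side contribute a $1/\sqrt{N-2k}$-type suppression, whereas for $y=\zero$ the high projector $\Piooo{\bar\theta}{\bar\rho_\zero}{}$ already ``knows about'' $\zero$ and one factor is lost.

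The main obstacle I anticipate is the bookkeeping of which projectors commute: $\Pi^{\mu_\zero}$ and $\Pi^{\mu'_y}$ for $\zero\ne y$ do not commute in general (as the paper warns), so I cannot simply multiply scripts. I expect to need to carefully interleave: expand $\Pi^\hi_k$ in the $\zero$-adapted basis, expand $\Pi^\lo_k$ in whichever basis is convenient, insert $\Xi_x^y$ between them, and then use that $\Xi_x^y\Pi^{\mu_\zero}\Xi_x^y$ is computable even though $\Xi_x^y$ and $\Pi^{\mu_\zero}$ do not commute — because within the image of $\Xi_x^y$ we are looking at $\Sym_{\Ran\setminus\{y\}}$-representations and $\zero\in\Ran\setminus\{y\}$, so the branching rule from $\Sym_{\Ran\setminus\{y\}}$ to $\Sym_{\Ran\setminus\{y,\zero\}}$ controls the overlap. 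Making this precise — identifying $\Xi_x^y\Piooo{\bar\theta}{\bar\rho_\zero}{}\Xi_x^y$ as a known multiple of a single irrep-projector of the smaller group — is the technical heart, and it is essentially the two-step version of the one-step branching argument that proves Point~2. Once that is set up, the dimension estimates are routine given Claim~\ref{clm:dimClose}.
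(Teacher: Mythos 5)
Your overall structure is on the right track: you correctly identify that the action must be analyzed inside the image of $\Xi_x^y$ as a representation of $\Sym_{\Dom\setminus\{x\}}\times\Sym_{\Ran\setminus\{y\}}$, that the noncommutation of $\Pi^{\mu_\zero}$ and $\Pi^{\mu'_y}$ is the central obstacle, and that one has to branch further down to $\Sym_{\Ran\setminus\{\zero,y\}}$ so that everything lives in commuting isotypic blocks $\Pi_{\mu_x}^{\nu_{0y}}$. That matches the paper's opening move.

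But there is a genuine gap at the heart of your sketch, and it is exactly the step you admit to deferring. You propose to evaluate the sandwich $\Xi_x^y\,\Piooo{\bar\theta}{\bar\rho_\zero}{}\,\Xi_x^y$ (restricted to a block) by the ``trace-and-sum-over-$x$ trick'' from the proof of Point~2, and then to finish with Claim~\ref{clm:dimClose}-type dimension ratios. This cannot produce the required $1/(N-2k)$ bound. The sum-over-$x$ identity only yields $\tr[\Piooo{\bar\theta}{\bar\rho_\zero}{}]=\dim\bar\theta\cdot\dim\bar\rho$, i.e.\ the trace of a \emph{single} isotypic projector. What you actually need is the trace of a \emph{product} of two non-commuting projectors of the form $\Piooo{\bar\theta}{\mu_0,\bar\rho_{0y}}{}$ and $\Piooo{\bar\theta}{\mu_y,\bar\rho_{0y}}{}$, both living inside the two-copy isotypic subspace $\Hooo{}{\bar\rho_{0y}}{\bar\theta}$. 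These two projectors are conjugates of one another by the transposition $V^{0y}$, and the magnitude of their overlap is governed by the explicit decomposition of $V^{0y}$ on the $2$-dimensional multiplicity space, taken from Eqn.~(5.6) of~\cite{Ros14}. That is what the paper's Claim~\ref{clm:boxDist} encapsulates:
\[
\tr\Big[\Piooo{\bar\theta}{\mu_0,\bar\rho_{0y}}{}\Piooo{\bar\theta}{\mu_y,\bar\rho_{0y}}{}\Big]
= \frac{\dim\bar\theta\,\dim\bar\rho_{**}}{\mathfrak{D}(\bar\theta,\bar\rho_{**})^2}
\le \frac{\dim\bar\theta\,\dim\bar\rho_{**}}{(N-2k+1)^2},
\]
where $\mathfrak{D}(\bar\theta,\bar\rho_{**})$ is the box distance. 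This $(N-2k)^{-2}$ suppression is the entire point of the claim; it is precisely the extra factor distinguishing the $y\neq\zero$ bound $1/(N-2k)$ from the $y=\zero$ bound $1/\sqrt{N-2k}$ of Claim~\ref{clm:Yis0transition}. Your anticipated estimate (``$\dim\bar\theta_{**}\cdot\dim\bar\rho_{\zero y}/(N\dim\bar\rho_y^2)$ up to branching combinatorics'') is of order $1/N$, not $1/N^2$, and no amount of Claim~\ref{clm:dimClose}-type ratio bounds (which are $1+O(k/N)$ corrections) will recover the missing power.

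Two smaller points. First, the intuition you offer — that the extra $1/\sqrt{N-2k}$ comes because ``both the low leg on the $\regf_x$-side and the high leg on the range side contribute a suppression'' — does not reflect the actual mechanism: the extra factor is not a second independent $1/\sqrt{N}$ but the overlap angle $1/\mathfrak{D}(\bar\theta,\bar\rho_{**})$ between two branching paths through a common $\nu$-irrep, which has no analogue in the $y=\zero$ case. Second, the paper also makes a structural simplification you don't anticipate: after blocking by $\Pi_{\mu_x}^{\nu_{0y}}$, it drops either $\Pi^\lo_k$ or $\Pi^\hi_k$ depending on whether $\nu=\bar\theta_{0y}$ or $\nu=\bar\rho_{0y}$, and bounds $\|\Piooo{\bar\theta}{\bar\theta_0}{}\Xiooo{\bar\theta_*}{y,\bar\rho_{0y}}{x}\|$ and $\|\Pi^\hi_k\Xiooo{\bar\theta_*}{y,\bar\theta_{0y}}{x}\|$ separately, each via Claim~\ref{clm:boxDist}. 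Without this split and without Claim~\ref{clm:boxDist}, the argument does not close.
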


By a simple observation that $(N-1)/(N-2k)^2 < 1/(N-4k)$, we get that
\[
2\Big(
\sum_{y} \big\|\Pi^\lo_k \Xi_x^y \Pi^\hi_k \big\|^2
\Big)^{1/2}
<
\frac{2\sqrt{2}}{\sqrt{N-4k}}.
\]

\begin{proof}[Proof of Claim~\ref{clm:Yis0transition}]
Note that all $\Pi^\lo_k$, $\Xi_x^0$, and $\Pi^\hi_k$ commute with $\Pi^{\mu_0}$ for any $\mu\vdash N-1$. Hence, by Property~5 of Fact~\ref{fac:norms},
\[
\big\|\Pi^\lo_k \Xi_x^0 \Pi^\hi_k \big\|
=
\max_{\theta\in\lightPart{k-1}}
\bigg\|\Piooo{\bar\theta}{\bar\theta_0}{} \Xi_x^0 \sum_{\chi\succ\theta}\Piooo{\bar\chi}{\bar\theta_0}{} \bigg\|
=
\max_{\theta\in\lightPart{k-1}}
\bigg\|
\Piooo{\bar\theta}{\bar\theta_0}{} \Xi_x^0 \sum_{\chi\succ\theta}\Piooo{\bar\chi}{\bar\theta_0}{}
\Xi_x^0 \Piooo{\bar\theta}{\bar\theta_0}{}
\bigg\|^{1/2}
\]

So, for a given $\theta$, we have
\begin{multline*}
\left\|
\Piooo{\bar\theta}{\bar\theta_0}{\bar\theta_x}
\Xiooo{\bar\theta_*}{0}{x}
\sum_{\chi\succ\theta}
\Piooo{\bar\chi}{\bar\theta_0}{\bar\theta_x}
\Xiooo{\bar\theta_*}{0}{x}
\Piooo{\bar\theta}{\bar\theta_0}{\bar\theta_x}
\right\|
=
\frac{
\sum_{\chi\succ\theta}
\tr\left[
\Piooo{\bar\theta}{\bar\theta_0}{\bar\theta_x}
\Xiooo{}{0}{x}
\Piooo{\bar\chi}{\bar\theta_0}{\bar\theta_x}
\Xiooo{}{0}{x}
\right]
}{(\dim\bar\theta_*)^2}
=
\frac{
\sum_{\chi\succ\theta}
\tr\left[
\Piooo{\bar\theta}{\bar\theta_0}{\bar\theta_x}
\Xiooo{}{0}{x}
\right]
\tr\left[
\Piooo{\bar\chi}{\bar\theta_0}{\bar\theta_x}
\Xiooo{}{0}{x}
\right]
}{(\dim\bar\theta_*)^4}
\\
=
\frac{
\dim\bar\theta
\sum_{\chi\succ\theta}
\dim\bar\chi
}{N^2(\dim\bar\theta_*)^2}
<
\frac{\dim\bar\theta}{N \dim\bar\theta_*}
\le \frac{1}{N-2k+2}
\end{multline*}
where the second equality is due to Lemma~\ref{lem:isotypicOverlap} and where the two inequalities are due to
\begin{equation}\label{eqn:IndBranchIneq}
\sum_{\chi\succ\theta}\dim\bar\chi = N\dim\bar\theta_* - \dim\bar\theta < N\dim\bar\theta_*
\end{equation}
as a result of the branching rule and Claim~\ref{clm:dimClose}.
\end{proof}

\begin{proof}[Proof of Claim~\ref{clm:Ynot0transition}]

Note that $\Pi^\lo_k$, $\Xi_x^y$, and $\Pi^\hi_k$ all commute with $\Pi_{\mu_x}^{\nu_{0y}}$ for any $\mu\vdash N-1$ and $\nu\vdash N-2$
and that $\sum_{\mu,\nu}\Pi_{\mu_x}^{\nu_{0y}}=I$. Hence,
\begin{align*}
\big\|\Pi^\lo_k \Xi_x^y \Pi^\hi_k \big\|
\,&=
\max\bigg\{
\max_{\theta\in\lightPart{k}}
\Big\|\Pi^\lo_k \Xiooo{\bar\theta_*}{y,\bar\theta_{0y}}{x} \Pi^\hi_k \Big\|,
\,
\max_{\substack{\theta\in\lightPart{k}\\\rho\yunder\theta}}
\Big\|\Pi^\lo_k \Xiooo{\bar\theta_*}{y,\bar\rho_{0y}}{x} \Pi^\hi_k \Big\|
\bigg\}
\\&\le
\max\bigg\{
\max_{\theta\in\lightPart{k}}
\Big\|\Xiooo{\bar\theta_*}{y,\bar\theta_{0y}}{x} \Pi^\hi_k
\Big\|,
\,
\max_{\substack{\theta\in\lightPart{k}\\\rho\yunder\theta}}
\Big\|\Pi^\lo_k \Xiooo{\bar\theta_*}{y,\bar\rho_{0y}}{x}\Big\|
\bigg\}.
\end{align*}
We will bound the two norms separately. Both bounds use the following claim.

\begin{clm}\label{clm:boxDist}
Suppose $\theta\vdash k$ and $\rho\vdash k-1$ satisfy $\rho\yunder\theta$. Let
$\mu\vdash N-1$ be such that $\bar\theta\yover\mu\yover\bar\rho_{**}$, namely 
either $\mu=\bar\theta_*$ or $\mu=\bar\rho_*$.
Then
\[
\tr\Big[
\Piooo{\bar\theta}{\mu_0,\bar\rho_{0y}}{}
\Piooo{\bar\theta}{\mu_y,\bar\rho_{0y}}{}
\Big]
\le \frac{\dim\bar\theta \dim\bar\rho_{**}}{(N-2k+1)^2}.
\]
\end{clm}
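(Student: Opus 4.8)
The plan is to compute the trace exactly and then estimate it, reducing everything to a $2\times 2$ matrix controlled by Young's orthogonal form. Abbreviate $\Pi_1:=\Piooo{\bar\theta}{\mu_0,\bar\rho_{0y}}{}$ and $\Pi_2:=\Piooo{\bar\theta}{\mu_y,\bar\rho_{0y}}{}$, and set $Q:=\Pi_{\bar\theta}\Pi^{\bar\rho_{0y}}$, the orthogonal projector onto $\mathcal{W}:=Q\cF$. First I would record that $\Pi_{\bar\theta}$ and $\Pi^{\bar\rho_{0y}}$ each commute with $\Pi^{\mu_0}$, with $\Pi^{\mu_y}$, and with one another -- for $\Pi_{\bar\theta}$ because it is central in the image of $\mathbb{C}[\Sym_\Ran]$, and for $\Pi^{\bar\rho_{0y}}$ because $\Sym_{\Ran\setminus\{0,y\}}$ sits inside both $\Sym_{\Ran\setminus\{0\}}$ and $\Sym_{\Ran\setminus\{y\}}$ -- and then use this, together with $\Pi_{\bar\theta}^2=\Pi_{\bar\theta}$, $(\Pi^{\bar\rho_{0y}})^2=\Pi^{\bar\rho_{0y}}$ and cyclicity of the trace, to rewrite $\tr[\Pi_1\Pi_2]=\tr[Q\,\Pi^{\mu_0}\,\Pi^{\mu_y}]$.

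The second step is to bring in the transposition $(0\,y)\in\Sym_\Ran$. Its representation operator $V^{(0\,y)}$ is a self-adjoint involution that commutes with $\Pi_{\bar\theta}$ and fixes $\Sym_{\Ran\setminus\{0,y\}}$ elementwise, hence commutes with $Q$ and preserves $\mathcal{W}$; conjugation by it carries $\Sym_{\Ran\setminus\{0\}}$ to $\Sym_{\Ran\setminus\{y\}}$, so $\Pi^{\mu_y}=V^{(0\,y)}\Pi^{\mu_0}V^{(0\,y)}$ and therefore $\Pi_2=V^{(0\,y)}\Pi_1V^{(0\,y)}$. Writing $R:=\Pi_1=Q\Pi^{\mu_0}Q$ as a projector on $\mathcal{W}$ and $U:=V^{(0\,y)}|_{\mathcal{W}}$, the trace becomes $\tr[\Pi_1\Pi_2]=\tr_{\mathcal{W}}[R\,U\,R\,U]$.

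Third, I would decompose $\mathcal{W}$ as a representation of $\Sym_\Dom\times\Sym_{\Ran\setminus\{0,y\}}$. Since $\mathcal{W}$ is the $\Spe{\bar\rho_{**}}$-isotypic component (for $\Sym_{\Ran\setminus\{0,y\}}$) inside $\Hooo{\bar\theta}{}{}\cong\Spe{\bar\theta}\otimes\Spe{\bar\theta}$, and the branching rule applied twice shows that $\Spe{\bar\theta}$ restricted to $\Sym_{\Ran\setminus\{0,y\}}$ contains $\Spe{\bar\rho_{**}}$ with multiplicity $|\{\mu':\bar\theta\yover\mu'\yover\bar\rho_{**}\}|=2$ -- the two intermediate diagrams being $\bar\theta_*=\bar\rho_{**}+Q$ and $\bar\rho_*=\bar\rho_{**}+P$, where $P=(1,N-k)$ is the first-row corner of $\bar\theta$ and $Q$ is the box of $\bar\theta$ lying in $\theta$ but not in $\rho$ -- one obtains $\mathcal{W}\cong\Spe{\bar\theta}\otimes\mathbb{C}^2\otimes\Spe{\bar\rho_{**}}$ with $\Sym_\Dom$ acting on the first factor, $\Sym_{\Ran\setminus\{0,y\}}$ on the third, and $\mathbb{C}^2$ the multiplicity space. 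Both $R$ and $U$ commute with this product action, so Schur's lemma (as used in Lemma~\ref{lem:isotypicOverlap}) forces $R=I\otimes|e_\mu\rangle\langle e_\mu|\otimes I$, where $e_\mu\in\mathbb{C}^2$ indexes the copy of $\Spe{\bar\rho_{**}}$ passing through $\Spe{\mu}$ (here $\mu\in\{\bar\theta_*,\bar\rho_*\}$), and $U=I\otimes\tilde U\otimes I$ for a self-adjoint involution $\tilde U$ on $\mathbb{C}^2$; a one-line trace computation then yields $\tr[\Pi_1\Pi_2]=\dim\bar\theta\cdot\dim\bar\rho_{**}\cdot\big|\langle e_\mu,\tilde U e_\mu\rangle\big|^2$.

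The last step identifies $\tilde U$ as the action of the transposition $(0\,y)$ on the $2$-dimensional multiplicity space of $\Spe{\bar\rho_{**}}$ in $\Spe{\bar\theta}$, and invokes Young's orthogonal form~\cite{sagan:symmetric}: because $P$ and $Q$ lie in distinct rows and distinct columns of $\bar\theta$, in the Gelfand--Tsetlin basis $(e_{\bar\theta_*},e_{\bar\rho_*})$ one has $\langle e_{\bar\theta_*},\tilde U e_{\bar\theta_*}\rangle=1/r$ and $\langle e_{\bar\rho_*},\tilde U e_{\bar\rho_*}\rangle=-1/r$, where $r=\mathrm{content}(P)-\mathrm{content}(Q)$ is the axial distance between $P$ and $Q$. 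Since $\mathrm{content}(P)=N-k-1$ while $Q$, being a box of $\theta\vdash k$ sitting in a row $\ge 2$ of $\bar\theta$, has $\mathrm{content}(Q)\le k-2$, we get $r\ge N-2k+1>0$, so $\big|\langle e_\mu,\tilde U e_\mu\rangle\big|^2=1/r^2\le 1/(N-2k+1)^2$; plugging this into the formula above proves the claim. I expect the bookkeeping of commutations in steps $1$--$2$ and, above all, the clean identification of $\tilde U$ with a single transposition acting on a multiplicity space in step $3$ to be the delicate part; once that structural reduction is in place, the axial-distance bound of step $4$ is immediate (and tight when $\theta=(k)$).
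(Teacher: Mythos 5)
Your proof is correct and follows essentially the same route as the paper's: conjugate one projector by $V^{0y}$, reduce to the two-dimensional multiplicity space of $\Spe{\bar\rho_{**}}$ inside $\Spe{\bar\theta}$, and read off the diagonal matrix element of the transposition as the reciprocal of the axial distance. The only difference is presentational: where the paper delegates the key formula to Eq.~(5.6) of \cite{Ros14}, you re-derive it explicitly from Young's orthogonal form in the Gelfand--Tsetlin basis, and you verify directly that the axial distance $r=\mathrm{content}(P)-\mathrm{content}(Q)$ agrees with the paper's $L^1$ box-distance $\mathfrak{D}(\bar\theta,\bar\rho_{**})$ (they coincide here because the two removable boxes sit in corner position), giving $r\ge N-2k+1$.
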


\begin{proof}
Note that $V^{0y}$ commutes with $\Piooo{}{\bar\rho_{0y}}{\bar\theta}$ and that
$\Piooo{}{\mu_y,\bar\rho_{0y}}{\bar\theta}=V^{0y}\Piooo{}{\mu_0,\bar\rho_{0y}}{\bar\theta}
V^{0y}$.
Eqn.~(5.6) of \cite{Ros14} expresses the restriction of $V^{0y}$ to $\Hooo{}{\bar\rho_{0y}}{\bar\theta}$ as a liner combination of projectors on irreps $\Hooo{}{\bar\theta_0,\bar\rho_{0y}}{\bar\theta}$ and 
$\Hooo{}{\bar\rho_0,\bar\rho_{0y}}{\bar\theta}$ as well as isomorphisms between them.
A direct consequence of that expression is that
\[
\tr\Big[
\Piooo{}{\mu_0,\bar\rho_{0y}}{\bar\theta}
\cdot
V^{0y}
\Piooo{}{\mu_0,\bar\rho_{0y}}{\bar\theta}
V^{0y}
\Big]
=
\frac{\dim\bar\theta \dim\bar\rho_{**}}{\mathfrak{D}(\bar\theta,\bar\rho_{**})^2}
\]
where $\mathfrak{D}(\bar\theta,\bar\rho_{**})$ is the distance between two boxes that have to be removed from the Young diagram $\bar\theta$ to obtain $\bar\rho_{**}$.
The proof concludes by observing that $\mathfrak{D}(\bar\theta,\bar\rho_{**})\ge N-2k+1$.
\end{proof}

Let us first bound $\big\|\Pi^\lo_k \Xiooo{\bar\theta_*}{y,\bar\rho_{0y}}{x}\big\|$. Notice that
\[
\Xiooo{\bar\theta_*}{y,\bar\rho_{0y}}{x} 
=
\Big(
\Piooo{\bar\theta}{\bar\rho_{0}}{}
+\Piooo{\bar\theta}{\bar\theta_{0}}{}
+ \sum_{\substack{\chi\yover\theta,\,\eta\yover\rho
}}
\Piooo{\bar\chi}{\bar\eta_{0}}{}
\Big)
\Xiooo{\bar\theta_*}{y,\bar\rho_{0y}}{x}.
\]
Since $\chi\yover\theta$ and $\eta\yover\rho$ together imply that $|\chi|=|\eta|+1$ and, in turn, that $\Pi^\lo_k$ and $\Piooo{\bar\chi}{\bar\eta_{0}}{}$ are orthogonal,
we get that 
$\Pi^\lo_k \Xiooo{\bar\theta_*}{y,\bar\rho_{0y}}{x}$ equals $\Piooo{\bar\theta}{\bar\theta_{0}}{} \Xiooo{\bar\theta_*}{y,\bar\rho_{0y}}{x}$
when $|\theta|\le k$ and $0$ otherwise. Let us assume $|\theta|\le k$.
We have
\begin{multline*}
\left\|
\Piooo{\bar\theta}{\bar\theta_0}{}
\Xiooo{\bar\theta_*}{y,\bar\rho_{0y}}{x}
\right\|^2
=
\left\|
\Xiooo{}{y,\bar\rho_{0y}}{x,\bar\theta_x}
\Piooo{\bar\theta}{\bar\theta_0,\bar\rho_{0y}}{\bar\theta_x}
\Xiooo{}{y,\bar\rho_{0y}}{x,\bar\theta_x}
\right\|
=
\frac{
\tr\left[
\Xiooo{}{y,\bar\rho_{0y}}{x,\bar\theta_x}
\Piooo{\bar\theta}{\bar\theta_0,\bar\rho_{0y}}{\bar\theta_x}
\Xiooo{}{y,\bar\rho_{0y}}{x,\bar\theta_x}
\right]
}
{\dim\bar\theta_*\dim\bar\rho_{**}}
\\
=
\frac{
\tr\left[
\Xi_x^y
\Pi^{\bar\theta_y}
\Piooo{\bar\theta}{\bar\theta_0,\bar\rho_{0y}}{}
\right]
}
{\dim\bar\theta_*\dim\bar\rho_{**}}
=
\frac{
\tr\left[
\Piooo{\bar\theta}{\bar\theta_y,\bar\rho_{0y}}{}
\Piooo{\bar\theta}{\bar\theta_0,\bar\rho_{0y}}{}
\right]
}
{N\dim\bar\theta_*\dim\bar\rho_{**}}
<
\frac{\dim\bar\theta}
{(N-2k)^2 N\dim\bar\theta_*}
<
\frac{1}{(N-2k)^{3}}
\end{multline*}
where the two inequalities are due to Claims \ref{clm:boxDist} and \ref{clm:dimClose}.

Let us now bound $\big\|
\Pi^{\hi}_k\Xiooo{\bar\theta_*}{y,\bar\theta_{0y}}{x}
\big\|$.
Notice that
\[
\Xiooo{\bar\theta_*}{y,\bar\theta_{0y}}{x}
=
\Big(
\Piooo{\bar\theta}{\bar\theta_0}{}
+
\sum_{\chi\yover\theta}
\big(
\Piooo{\bar\chi}{\bar\theta_0}{} + \Piooo{\bar\chi}{\bar\chi_0}{}
\big)
\Big)
\Xiooo{\bar\theta_*}{y,\bar\theta_{0y}}{x}.
\]
As a result,
$\Pi^{\hi}_k\Xiooo{\bar\theta_*}{y,\bar\theta_{0y}}{x}$
equals
$\sum_{\chi\yover\theta}
\Piooo{\bar\chi}{\bar\theta_0}{}
\Xiooo{\bar\theta_*}{y,\bar\theta_{0y}}{x}
$
when $|\theta|\le k-1$ and $0$ otherwise. Let us assume $|\theta|\le k-1$.
We have
\begin{multline*}
\Big\|
\sum_{\chi\yover\theta}
\Piooo{\bar\chi}{\bar\theta_0}{}
\Xiooo{\bar\theta_*}{y,\bar\theta_{0y}}{x}
\Big\|^2
=
\Big\|
\Xiooo{\bar\theta_*}{y,\bar\theta_{0y}}{x}
\sum_{\chi\yover\theta}
\Piooo{\bar\chi}{\bar\theta_0}{}
\Xiooo{\bar\theta_*}{y,\bar\theta_{0y}}{x}
\Big\|
=
\frac{
\tr\left[
\sum_{\chi\succ\theta}
\Piooo{\bar\chi}{\bar\theta_0}{}
\Piooo{\bar\chi}{\bar\theta_y,\bar\theta_{0y}}{}
\right]
}
{N\dim\bar\theta_*\dim\bar\theta_{**}} 
\\
<
\frac{
\sum_{\chi\succ\theta}
\dim\bar\chi
}
{(N-2k)^2 N \dim\bar\theta_*}
<\frac1{(N-2k)^2},
\end{multline*}
where the two inequalities are due to Claim~\ref{clm:boxDist} and the branching rule, in particular, Enq.~(\ref{eqn:IndBranchIneq}). 
\end{proof}

\section*{Acknowledgements}

The author would like to thank Fran\c{c}ois Le~Gall, Akinori Hosoyamada, and Tetsu Iwata for fruitful discussions.
Part of this work was done while the author was a JSPS International Research Fellow supported by the JSPS KAKENHI Grant Number JP19F19079. This work was supported by JSPS KAKENHI Grant Number JP20H05966 and MEXT Quantum Leap Flagship Program (MEXT Q-LEAP) Grant Number JPMXS0120319794.

{
\small
\newcommand{\etalchar}[1]{$^{#1}$}

}

\appendix

\section{Overlap of \texorpdfstring{$|\vn_{x_1,\ldots,x_k}^{y_1,\ldots,y_k}\>$}{Vectors v} on Various Subspaces}
\label{app:LargeOlaps}

In this section we prove Claims~\ref{clm:LargeOlap0} and \ref{clm:LargeOlap1}, which address what are the overlaps of vectors $|\vn_{x_1,\ldots,x_k}^{y_1,\ldots,y_k}\>$ on some specific subspaces.
When proving Claim~\ref{clm:LargeOlap0}, we prove an even stronger statement giving a concise expression for the overlap of $|\vn_{x_1,\ldots,x_k}^{y_1,\ldots,y_k}\>$ on any $\cH_\lambda$.

Recall that there are $\binom{N}{k}\frac{N!}{(N-k)!}$ assignments of $\alpha$ weight $k$.
Define the operator 
\[
\MOp_k:=(N-k)!\sum_{\alpha\colon|\alpha|=k}|\vn_\alpha\>\<\vn_\alpha|
\]
on $\cF$, whose support is clearly $\spA{k}$.
Note that $\MOp_k$ is fixed by $\Sym_\Dom\times\Sym_\Ran$, and therefore we can write
\[
\MOp_k= 
\sum_{\lambda\vdash N}
\mathrm{eval}_{\lambda}(k) \Pi_\lambda.
\]
The following lemma provides an expression for the eigenvalues $\mathrm{eval}_{\lambda}(k)$.
Let $h_1(\lambda)$ denote the product of the hook lengths of boxes in the first row of $\lambda$ and let $\lambda_{\geq2}$ denote the diagram formed by the boxes of $\lambda$ below the first row.
Recall that $|\lambda_{\geq2}|$ is the number of boxes in $\lambda_{\ge 2}$.

\begin{lem}\label{lem:IJ_perm_evals}
Suppose $k<N/2$.
 The eigenvalue of $\MOp_k$ corresponding to $\lambda\vdash N$ is 
\(
h_1(\lambda)/(k-|\lambda_{\geq2}|)!
\)
when $|\lambda_{\geq2}|\leq k$, and $0$ otherwise.
\end{lem}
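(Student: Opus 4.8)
The plan is to read off the scalar by which $\MOp_k$ acts on each isotypic subspace from a character sum over $\Sym_\Ran$. First I would compute the matrix entries of $\MOp_k$ in the standard basis: since $|\vn_\alpha\>$ is the uniform superposition over the $(N-k)!$ bijections agreeing with the assignment $\alpha$, the coefficient $\<f|\MOp_k|g\>$ counts the weight-$k$ assignments agreed with by both $f$ and $g$, i.e.\ the $k$-element subsets of $\{x:f(x)=g(x)\}$, and since $f$ is a bijection this set has size equal to the number of fixed points of the permutation $g\circ f^{-1}$ of $\Ran$, so
\[
\<f|\MOp_k|g\>=\binom{|\mathrm{Fix}(g\circ f^{-1})|}{k}.
\]
As $\tau\mapsto\binom{|\mathrm{Fix}\tau|}{k}$ is a class function on $\Sym_\Ran$ and $\mathrm{Fix}$ is invariant under inversion, comparing matrix entries with $V^\tau|g\>=|\tau\circ g\>$ identifies $\MOp_k=\sum_{\tau\in\Sym_\Ran}\binom{|\mathrm{Fix}\tau|}{k}V^\tau$. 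In particular $\MOp_k$ commutes with the whole action of $\Sym_\Dom\times\Sym_\Ran$ (as already observed), hence by Schur's lemma acts as a scalar $\mathrm{eval}_\lambda(k)$ on the range of $\Pi_\lambda$, which by (\ref{eqn:LambdaLambdaDecom}) carries $\dm\lambda$ copies of $\Spe\lambda$ as a $\Sym_\Ran$-representation, so $\tr[\Pi_\lambda]=\dm\lambda^2$ and $\tr[V^\tau\Pi_\lambda]=\dm\lambda\,\chi^\lambda(\tau)$. Taking the trace of $\MOp_k\Pi_\lambda$ two ways therefore gives $\mathrm{eval}_\lambda(k)\,\dm\lambda^2=\dm\lambda\sum_\tau\binom{|\mathrm{Fix}\tau|}{k}\chi^\lambda(\tau)$, i.e.
\[
\mathrm{eval}_\lambda(k)=\frac{1}{\dm\lambda}\sum_{\tau\in\Sym_\Ran}\binom{|\mathrm{Fix}\tau|}{k}\chi^\lambda(\tau).
\]

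Next I would evaluate this character sum by double counting. Since $\binom{|\mathrm{Fix}\tau|}{k}$ is the number of $k$-element subsets $T\subseteq\Ran$ fixed pointwise by $\tau$, the sum equals $\sum_{|T|=k}\sum_{\tau\in\Sym_{\Ran\setminus T}}\chi^\lambda(\tau)$, and for each of the $\binom{N}{k}$ sets $T$ the inner sum equals $(N-k)!$ times the multiplicity of the trivial representation of $\Sym_{\Ran\setminus T}\cong\Sym_{N-k}$ in $\Res^{\Sym_N}_{\Sym_{N-k}}\Spe\lambda$. Applying the branching rule $k$ times, this multiplicity is the number $g_\lambda$ of chains in Young's lattice descending from $\lambda$ to the one-row diagram $(N-k)$ by single-box removals --- equivalently, the number of standard Young tableaux of the skew diagram $\lambda/(N-k)$ --- which is $0$ unless $\lambda_1\ge N-k$, that is, unless $|\lambda_{\geq2}|\le k$. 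Hence $\mathrm{eval}_\lambda(k)=\frac{N!}{k!\,\dm\lambda}\,g_\lambda$, which already settles the ``$0$ otherwise'' case.

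It remains to compute $g_\lambda$ and $\dm\lambda$ when $|\lambda_{\geq2}|=\ell\le k$, and this is where the hypothesis $k<N/2$ enters; write $\lambda=(N-\ell,\theta)$ with $\theta=\lambda_{\geq2}$, so $\theta\vdash\ell$. Because $\theta_1\le\ell\le k<N-k$, the skew diagram $\lambda/(N-k)$ splits as a disjoint union of a length-$(k-\ell)$ horizontal strip lying in row $1$ in columns $N-k+1,\dots,N-\ell$ together with a copy of $\theta$ lying in the rows below, and these two pieces occupy disjoint rows \emph{and} disjoint columns; a standard filling therefore amounts to choosing which $\ell$ of the entries $1,\dots,k$ go into the $\theta$-piece and arranging them as a standard tableau of $\theta$, so $g_\lambda=\binom{k}{\ell}\dm\theta$. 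Similarly, since the prepended first row of $\lambda$ meets every column occupied by $\theta$, one checks directly from the definition that $h_\lambda(i,j)=h_\theta(i-1,j)$ for every box $(i,j)\in\lambda$ with $i\ge2$, whence $h(\lambda)=h_1(\lambda)\,h(\theta)$; combined with the hook length formula in the forms $\dm\lambda=N!/h(\lambda)$ and $\dm\theta=\ell!/h(\theta)$, this gives $\dm\lambda=N!\,\dm\theta/(\ell!\,h_1(\lambda))$. Substituting both identities into $\mathrm{eval}_\lambda(k)=\frac{N!}{k!\,\dm\lambda}\binom{k}{\ell}\dm\theta$ collapses it to $h_1(\lambda)/(k-\ell)!=h_1(\lambda)/(k-|\lambda_{\geq2}|)!$, as claimed. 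The main obstacle is the bookkeeping in this last paragraph: both the splitting of the skew diagram and the hook-length identity depend on $k<N/2$, which guarantees that the first row of $\lambda$ is long enough that the ``$\theta$ part'' and the ``row-$1$ strip'' of $\lambda/(N-k)$ do not interact; everything preceding it is a routine character computation.
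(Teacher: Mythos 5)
Your proof is correct but follows a genuinely different route from the paper's after the common starting point. Both proofs compute the matrix entry $\<f|\MOp_k|g\>=\binom{\fxd(gf^{-1})}{k}$ and, using that $\MOp_k$ commutes with the $\Sym_\Dom\times\Sym_\Ran$-action, reduce the eigenvalue to the character sum $\mathrm{eval}_\lambda(k)=\frac{1}{\dm\lambda}\sum_\tau\binom{\fxd\tau}{k}\chi_\lambda(\tau)$. From there the paper identifies this as $\frac{N!}{k!\,\dm\lambda}\<X_k,\chi_\lambda\>$ where $X_k$ is the permutation character of $\Sym_N$ on $k$-permutations of $[N]$, and evaluates the inner product through two applications of the Pieri/Littlewood--Richardson rule (Claim~\ref{clm:kn_perm_char}). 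You instead expand $\binom{\fxd\tau}{k}$ as a sum over $k$-subsets $T$ fixed pointwise by $\tau$, turning the character sum into $\binom{N}{k}(N-k)!$ times the multiplicity of the trivial representation in $\Res^{\Sym_N}_{\Sym_{N-k}}\Spe\lambda$; iterated branching identifies this with the number of standard Young tableaux of the skew shape $\lambda/(N-k)$, which you then count directly as $\binom{k}{\ell}\dm\theta$ (where $\theta=\lambda_{\geq 2}$, $\ell=|\theta|$) by observing that, under $k<N/2$, the skew shape splits into a row-$1$ horizontal strip and a copy of $\theta$ in disjoint rows and columns. The algebraic finish via $h(\lambda)=h_1(\lambda)h(\theta)$ is the same factorization the paper uses in Eqn.~(\ref{eqn:hookLenExpand}). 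Your approach is arguably more elementary --- it bypasses the Littlewood--Richardson rule entirely, using only the branching rule and a straightforward skew-SYT count --- and it makes explicit that the hypothesis $k<N/2$ enters precisely to guarantee the skew shape decomposes, whereas the paper's proof of Claim~\ref{clm:kn_perm_char} also silently needs $N\ge 2k$ for the diagrams $(N-|\theta|,\theta)$ it writes down to be well-formed. One small quibble: the hook-length identity $h(\lambda)=h_1(\lambda)h(\theta)$ holds for any valid Young diagram $\lambda\vdash N$ with $\lambda_{\geq 2}=\theta$ and does not actually require $k<N/2$; only the disjoint-columns splitting of $\lambda/(N-k)$ does.
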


\noindent
We prove the lemma in Appendix~\ref{sec:IJ_perm_evals}.

\subsection{Proof of Claim~\ref{clm:LargeOlap0}}

We have
\[
\MOp_k=(N-k)!\sum_{\alpha\colon|\alpha|=k}|\vn_\alpha\>\<\vn_\alpha| = 
\sum_{\theta\in\lightPart{k}}
\mathrm{eval}_{\bar\theta}(k) \Pi_{\bar\theta}.
\]
By multiplying this from both left and right with $\Pi_{\bar\theta}$, where $\theta\in\lightPart{k}$, we get
\[
(N-k)!\sum_{\alpha\colon|\alpha|=k}\Pi_{\bar\theta}|\vn_\alpha\>\<\vn_\alpha|\Pi_{\bar\theta} = \mathrm{eval}_{\bar\theta}(k) \Pi_{\bar\theta},
\]
whose trace is
\[
(N-k)!\sum_{\alpha\colon|\alpha|=k}
\|\Pi_{\bar\theta}|\vn_\alpha\>\|^2
= \mathrm{eval}_{\bar\theta}(k) d_{\bar\theta}^2,
\]
which equals $\binom{N}{k}N!\|\Pi_{\bar\theta}|\vn_\alpha\>\|^2$ due to the symmetry among all $\alpha$.
Hence,
\[
\|\Pi_{\bar\theta}|\vn_\alpha\>\|^2
= \frac{d_{\bar\theta}^2 \mathrm{eval}_{\bar\theta}(k)}{\binom{N}{k}N!}
= \frac{d_{\bar\theta}^2 h_1(\bar\theta)}{\binom{N}{k}N!(k-|\theta|)!}.
\]

By the hook length formulae for $d_{\bar\theta}$ and $d_\theta$, 
we have 
\begin{equation}
\label{eqn:hookLenExpand}
d_{\bar\theta}
= \frac{N!}{h(\bar\theta)}
= \frac{N!}{h_1(\bar\theta)\cdot h(\theta)}
= \frac{N!d_{\theta}}{h_1(\bar\theta)\cdot |\theta|!},
\end{equation}
from which we extract
\[
d_{\bar\theta}h_1(\bar\theta)\big/N!=d_\theta\big/|\theta|!.
\]
Hence,
\[
\|\Pi_{\bar\theta}|\vn_\alpha\>\|^2
= 
\frac{d_{\bar\theta}}{\binom{N}{k}}
\cdot
\frac{d_{\theta}}{(k-|\theta|)!|\theta|!},
\]
which gives us the amplitude of $|\vn_\alpha\>$ on the space $\cH_{\bar\theta}$.
Regarding Claim~\ref{clm:LargeOlap0}, we are interested in $\theta\vdash k$, as we have
\[
\spA{k}\cap(\spA{k-1})^\perp=\bigoplus_{\theta\vdash k}\cH_\theta.
\]
When $|\theta|=k$, we have
\begin{equation}
\label{eqn:AnyOlap0}
\|\Pi_{\bar\theta}|\vn_\alpha\>\|^2
= 
\frac{d_{\bar\theta}}{\binom{N}{k}}
\cdot
\frac{d_{\theta}}{k!},
\end{equation}
and using (\ref{eqn:hookLenExpand}) we get
\[
\|\Pi_{\bar\theta}|\vn_\alpha\>\|^2
= 
\frac{1}{\binom{N}{k}}
\cdot
\frac{N!d_\theta}{h_1(\bar\theta)\,k!}
\cdot
\frac{d_{\theta}}{k!}
=
\frac{d_\theta^2}{k!}\cdot\frac{(N-k)!}{h_1(\bar\theta)}
\in \frac{d_\theta^2}{k!}\cdot\bigg[\frac{(N-k)!}{\max_{\theta\vdash k}h_1(\bar\theta)},\,
\frac{(N-k)!}{\min_{\theta\vdash k}h_1(\bar\theta)}\bigg].
\]
Note that $\sum_{\theta \vdash k}d_\theta^2=k!$, 
and hence
\[
\bigg\|\Big(\sum_{\theta \vdash k}\Pi_{\bar\theta}\Big)|\vn_\alpha\>\bigg\|^2
=
\bigg[\frac{(N-k)!}{\max_{\theta\vdash k}h_1(\bar\theta)},\,
\frac{(N-k)!}{\min_{\theta\vdash k}h_1(\bar\theta)}\bigg].
\]
The proof of Claim~\ref{clm:LargeOlap1} concludes by the following claim.
\begin{clm}
Given $\theta\vdash k$, we have
\[
N(N-k-1)!= h_1\big((N-k,1^k)\big) \le h_1(\bar\theta) \le h_1\big((N-k,k)\big) = \frac{(N-k+1)!}{N-2k+1},
\]
where $(N-k,1^k)=(N-k,1,\ldots,1)$ with $1$ repeated $k$ times.
\end{clm}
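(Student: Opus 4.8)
The plan is to reduce both inequalities to a single monotonicity statement about how $h_1(\bar\theta)$ changes under ``box moves'' inside $\theta$, after first writing the first-row hook lengths of $\bar\theta=(N-k,\theta)$ explicitly. Since $k<N/2$ we have $\theta_1\le k<N-k$, so $\bar\theta$ is a genuine diagram with first part $N-k$, and the hook length of a box $(1,j)$ in its first row is $h_{\bar\theta}(1,j)=(N-k)-j+\bar\theta^\top_j$, where $\bar\theta^\top_j=1+\theta^\top_j$ for $1\le j\le\theta_1$ and $\bar\theta^\top_j=1$ for $\theta_1<j\le N-k$. Two facts drop out at once: these hook lengths are strictly decreasing in $j$ along the first row (consecutive ones differ by $1+\bar\theta^\top_j-\bar\theta^\top_{j+1}\ge1$); and substituting the two extreme shapes gives $h_1((N-k,1^k))=(N-k-1)!\cdot N$ and $h_1((N-k,k))=(N-2k)!\cdot\prod_{i=1}^{k}(N-k+2-i)=(N-k+1)!/(N-2k+1)$, which are exactly the two boundary values in the statement. (These also follow by rearranging \eqref{eqn:hookLenExpand} into $h_1(\bar\theta)=N!\,d_\theta/(k!\,d_{\bar\theta})$, but the direct computation is just as quick.)

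Next I would compute the effect on $h_1(\bar\theta)$ of an ``upward'' box move inside $\theta$, meaning: delete a removable corner of $\theta$, sitting in some column $c$, and re-add the box at the end of a strictly higher row, landing in some column $c'$, so that the result $\theta'$ is again a Young diagram; since higher rows are at least as long, $c'>c$, and $c'\le\theta_1+1\le N-k$, so $\bar\theta'$ stays valid. Only the two column heights $\bar\theta^\top_c$ and $\bar\theta^\top_{c'}$ change — the first drops by $1$, the second rises by $1$ — so along the first row exactly the hook length in column $c$ decreases by $1$, exactly the hook length in column $c'$ increases by $1$, and all others stay fixed. Writing $p>q$ for the original first-row hook lengths in columns $c$ and $c'$ (strict by the previous paragraph), we get $p\ge q+1$, hence $(p-1)(q+1)=pq+(p-q-1)\ge pq$, and therefore $h_1(\bar\theta')\ge h_1(\bar\theta)$. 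Reading the same move backwards (a ``downward'' move) gives $h_1(\bar\theta')\le h_1(\bar\theta)$.

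It then remains to chain the moves. Starting from any $\theta\vdash k$, repeatedly transporting the end box of the lowest row to the end of the first row is a sequence of upward moves terminating at $(k)$, so $h_1(\bar\theta)\le h_1((N-k,k))$; and repeatedly transporting the end box of the lowest row of length $\ge2$ down to start a fresh row in column $1$ is a sequence of downward moves terminating at $(1^k)$, so $h_1(\bar\theta)\ge h_1((N-k,1^k))$. Combined with the two boundary computations above, this is the claim.

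I do not expect a real obstacle here; the whole argument is combinatorial bookkeeping. The only point needing a little care is to check at each link of the two chains that one removes an honest removable corner and adds a box in a legal position, so that every intermediate shape is genuinely a partition of $k$ and the monotonicity estimate of the second step applies.
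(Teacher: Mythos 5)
Your proof is correct and takes essentially the same route as the paper: both establish a monotonicity of $h_1(\bar\theta)$ under single-box moves within $\theta$ (your "upward/downward move" is exactly the paper's "move the lowest box of column $j$ to the bottom of column $j'$" read in reverse), both rely on the same elementary inequality comparing products $(p\mp1)(q\pm1)$ against $pq$ when $p>q$, and both chain the moves from $\theta$ to the two extremes $(k)$ and $(1^k)$. Your write-up is somewhat more explicit about the boundary hook-length computations and the decreasing order of first-row hook lengths, but the underlying argument is the paper's.
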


\begin{proof}
The equalities of the claim follow easily from examining the hook lengths of $(N-k,1^k)$ and $(N-k,k)$. It remains to prove the inequalities.

Let $\theta,\theta'\vdash k$ be such that $\theta'$ is obtained from $\theta$ by moving the lowest box of column $j$ to the bottom of column $j'<j$. In the first rows of $\bar\theta$ and $\bar{\theta'}$, all but two boxes have the same hook lengths:
\[
h_{\bar{\theta'}}((1,\ell)) =
\begin{cases}
h_{\bar\theta}((1,\ell))+1 & \text{if }\ell = j';\\
h_{\bar\theta}((1,\ell))-1 & \text{if }\ell = j;\\
h_{\bar\theta}((1,\ell)) & \text{otherwise}.\\
\end{cases}
\]
Since $h_{\bar{\theta}}((1,j'))\ge h_{\bar{\theta}}((1,j))+1$, we have
\[
\frac{h_1(\bar{\theta'})}{h_1(\bar\theta)}
=
\frac{(h_{\bar\theta}((1,j'))+1)(h_{\bar\theta}((1,j))-1)}{h_{\bar\theta}((1,j'))h_{\bar\theta}((1,j))}
\le
1-
\frac{2}{h_{\bar\theta}((1,j'))h_{\bar\theta}((1,j))}
< 1.
\]
The aforementioned procedure for obtaining $\theta'$ from $\theta$ moves, essentially, boxes from right to left, with $(k)$ and $(1^k)$ being the initial and final extremes of such a procedure.
Thus the result follows by the induction.
\end{proof}

\subsection{Proof of Claim~\ref{clm:LargeOlap1}}

Let $\theta\vdash k$ and let we decompose $\cH_{\bar\theta}$ as
\[
\cH_{\bar\theta}=\cH^{\bar\theta_\zero}_{\bar\theta}
\oplus\bigoplus_{\rho\yunder\theta}\cH^{\bar\rho_\zero}_{\bar\theta},
\]
and all these spaces are subspaces of the $\mathrm{eval}_{\bar\theta}(k)$-eigenspace of $\MOp_k$.
Hence, similarly as before, we have
\[
(N-k)!\sum_{\alpha\colon|\alpha|=k}\Pi^{\bar\theta_\zero}_{\bar\theta}|\vn_\alpha\>\<\vn_\alpha|
\Pi^{\bar\theta_\zero}_{\bar\theta}
= \mathrm{eval}_{\bar\theta}(k) \Pi^{\bar\theta_\zero}_{\bar\theta},
\]
whose trace is
\[
(N-k)!
\sum_{\alpha\colon|\alpha|=k}
\|\Pi^{\bar\theta_\zero}_{\bar\theta}|\vn_\alpha\>\|^2
= \mathrm{eval}_{\bar\theta}(k) d_{\bar\theta}d_{\bar\theta_*}
\]
However, unlike before, now we have to consider two types of assignments $\alpha$.
The norm $\|\Pi^{\bar\theta_\zero}_{\bar\theta}|\vn_\alpha\>\|$ is $0$ if $\zero$ belongs to the image of $\alpha$ and it is the same for all other $\alpha$, of which there are $\binom{N-1}{k}\frac{N!}{(N-k)!}$.
Therefore
\[
\|\Pi^{\bar\theta_\zero}_{\bar\theta}|\vn_\alpha\>\|^2
= \frac{\mathrm{eval}_{\bar\theta}(k) d_{\bar\theta}d_{\bar\theta_*}}{\binom{N-1}{k}N!}
= \frac{d_{\bar\theta_*}}{\binom{N-1}{k}}
\cdot \frac{h_1(\bar\theta) d_{\bar\theta}}{N!}
= \frac{d_{\bar\theta_*}}{\binom{N-1}{k}}
\cdot \frac{d_{\theta}}{k!}.
\]
Note that this quantity is obtained if $N$ in (\ref{eqn:AnyOlap0}) is replaced by $N-1$ (and, thus, $\bar\theta$ by $\bar\theta_*$), which is why bounds in Claim~\ref{clm:LargeOlap0} are those in Claim~\ref{clm:LargeOlap1} with $N$ replaced by $N-1$.
The rest of the proof proceeds the same way as for Claim~\ref{clm:LargeOlap0}.

\subsection{Proof of Lemma \ref{lem:IJ_perm_evals}}
\label{sec:IJ_perm_evals}

The case $|\lambda_{\geq2}|>k$ holds because the support of $\MOp_k$ is $\spA{k}$.
 Thus, let us assume $|\lambda_{\geq2}|\leq k$.
 Without loss of generality, in the proof let $\Dom=\Ran=[N]$.

Let $\alpha$ be an assignment of weight $k$. 
We say that a permutation $f\colon\Dom\rightarrow\Dom$ agree with $\alpha$ if $f(x)=\alpha(x)$ for all $x$ in the domain of $\alpha$.
Let us represent operators acting on $\cF$ as $N!\times N!$ matrices written in the standard basis $\{|f\>\colon f\in\Func\}$ and let $A[\![f,h]\!]=\<f|A|h\>$ be the $(f,h)$-element of such a matrix $A$.
We have
\[
\big(|\vn_\alpha\>\<\vn_\alpha|\big)[\![f,h]\!] =
\begin{cases}
1/(N-k)! &\text{if both $f$ and $h$ agree with $\alpha$};\\
0&\text{otherwise}.
\end{cases}
\]
The number of assignments of weight $k$ agreeing with both $f$ and $h$ is
\[
\binom{|\{x\colon f(x)=h(x)\}|}{k} = \binom{\fxd(fh^{-1})}{k},
\]
where $\fxd(f)$ denotes the number of fixed points of $f$. Then
\[
\MOp_k[\![f,h]\!] = \binom{\fxd(fh^{-1})}{k}.
\]

 The eigenvalue of our interest is
\[
\mathrm{eval}_\lambda(k)
=\frac{\Tr\left[\MOp_k\Pi_\lambda\right]}{\Tr\left[\Pi_\lambda\right]}
= \frac{\Tr\left[\MOp_k\Pi_\lambda\right]}{\dm\lambda^2}.
\]
Note that $\Pi_\lambda$ projects on the $\Spe{\lambda}$-isotypic subspace of the (both, left and right) regular representation of $\Sym_N$, therefore we have
\[
\Pi_\lambda[\![f,h]\!] = \frac{\dm\lambda}{N!}\chi_\lambda(fh^{-1}),
\]
where $\chi_\lambda$ is the character corresponding to $\Spe{\lambda}$. Hence,
\begin{multline*}
\Tr\left[\MOp_k\Pi_\lambda\right]
= \sum_{f,h\in\mathbb{S}_N} \MOp_k[\![h,f]\!]\,\Pi_\lambda[\![f,h]\!]
= \sum_{f,h\in\mathbb{S}_N} \binom{\fxd(fh^{-1})}{k}\frac{\dm\lambda}{N!}\chi_\lambda(fh^{-1})
\\
= N!\dm\lambda \sum_{f\in\Sym_N} \frac{1}{N!}\binom{\fxd(f)}{k}\chi_\lambda(f)
= \frac{N!}{k!}\dm\lambda \left\langle X_k , \chi_\lambda \right\rangle,
\end{multline*}
where $X_k$ is the character of the representation corresponding to the action of $\Sym_N$ on $k$-permutations of $[N]$. Indeed, there are exactly $\binom{\fxd(f)}{k}k!$ $k$-permutations of $[N]$ that are fixed by $f\in\Sym_N$.

\begin{clm}\label{clm:kn_perm_char}
We have
\[
\left\langle X_k , \chi_\lambda \right\rangle
=
\binom{k}{|\lambda_{\geq2}|}\dm{\lambda_{\geq2}},
\]
where the dimension of the empty partition $\epsilon\vdash0$ is $1$.
\end{clm}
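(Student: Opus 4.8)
### Proof proposal for Claim~\ref{clm:kn_perm_char}

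The plan is to compute the inner product $\langle X_k, \chi_\lambda\rangle$ by recognizing $X_k$ as an induced character and applying Frobenius reciprocity together with the Littlewood--Richardson rule already stated in the excerpt. The representation of $\Sym_N$ on ordered $k$-permutations of $[N]$ is exactly the permutation representation on the cosets of the subgroup that fixes $k$ chosen points pointwise, i.e., on $\Sym_N / \Sym_{[N]\setminus S}$ for a fixed $S$ with $|S|=k$. Equivalently, $X_k$ is the character of $\mathrm{Ind}_{\Sym_{[N]\setminus S}}^{\Sym_N} \mathbf{1}$, where $\mathbf{1}$ is the trivial representation of $\Sym_{[N]\setminus S}$. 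Thus, by Frobenius reciprocity,
\[
\langle X_k, \chi_\lambda\rangle
= \big\langle \mathbf{1},\, \mathrm{Res}_{\Sym_{[N]\setminus S}}^{\Sym_N}\Spe{\lambda}\big\rangle
= \text{(multiplicity of the trivial rep of $\Sym_{N-k}$ in $\mathrm{Res}\,\Spe{\lambda}$)}.
\]

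First I would make the induced-representation identification precise: the trivial representation of $\Sym_{[N]\setminus S}$ is $\Spe{(N-k)}$ of that subgroup, and $\mathrm{Ind}_{\Sym_{[N]\setminus S}}^{\Sym_N}\Spe{(N-k)}$ decomposes by the Young rule (the special case of Littlewood--Richardson in which one induces a single-row diagram), which counts $\Spe{\lambda}$ with multiplicity equal to the number of ways to fill $\lambda$ so that the complement of the first $N-k$ boxes forms a horizontal strip — but for the present purpose it is cleaner to stay on the restriction side. On the restriction side, I would apply the branching rule $k$ times (or invoke Frobenius reciprocity against \eqref{eqn:IndLR} directly): the multiplicity of $\Spe{(N-k)}$ of $\Sym_{N-k}$ inside $\mathrm{Res}^{\Sym_N}_{\Sym_{N-k}}\Spe\lambda$ equals, dually, the multiplicity of $\Spe\lambda$ in $\mathrm{Ind}_{\Sym_{N-k}\times\Sym_k}^{\Sym_N}\Spe{(N-k)}\otimes\Spe{\sigma}$ summed appropriately — and the single-row Littlewood--Richardson rule $\Lambda^+_N$ from \eqref{eqn:IndLR} tells us precisely which $\lambda$ appear: those obtained from a $k$-box diagram $\sigma$ by adding at most one box per column, i.e., $\sigma$ must be obtainable from $\lambda$ by deleting a horizontal strip of size $N-k$ touching the first row. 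Since $N-k > k \ge |\lambda_{\ge 2}|$ and $\lambda_1 = N - |\lambda_{\ge 2}| \ge N-k$, the only way to delete a size-$(N-k)$ horizontal strip from $\lambda$ is to delete $N-k$ boxes from the first row; the residual diagram is then $\lambda$ with its first row shortened to length $|\lambda_1| - (N-k) = k - |\lambda_{\ge 2}|$, i.e., the diagram $(k-|\lambda_{\ge2}|,\lambda_2,\lambda_3,\ldots)$, which has $k$ boxes.

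So $\langle X_k,\chi_\lambda\rangle$ equals the dimension of the irrep of $\Sym_k$ indexed by $(k-|\lambda_{\ge 2}|,\lambda_2,\lambda_3,\ldots)$. To finish I would evaluate that dimension via the branching rule once more: removing the $|\lambda_{\ge2}|$ boxes sitting below the first row of $(k-|\lambda_{\ge2}|,\lambda_{\ge2})$ one at a time, in all possible orders consistent with a standard filling, shows its dimension is $\binom{k}{|\lambda_{\ge2}|}\dm{\lambda_{\ge2}}$ — the binomial coefficient counts the interleavings of the $|\lambda_{\ge 2}|$ ``lower'' boxes among the $k$ total removal steps (equivalently, it is the number of ways to choose which positions in a standard Young tableau of shape $(k-|\lambda_{\ge2}|,\lambda_{\ge2})$ receive the entries forming the sub-tableau of shape $\lambda_{\ge2}$, the first row being forced to be increasing). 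The main obstacle is bookkeeping: one must carefully justify that no other $\lambda$-compatible horizontal strip exists (which is where the hypothesis $k < N/2$, hence $N-k > k$, is essential) and that the final dimension identity is exactly a clean binomial times $\dm{\lambda_{\ge2}}$ rather than a more complicated Littlewood--Richardson sum; once the single-row rule is invoked correctly, both become short combinatorial checks. The degenerate case $\lambda_{\ge 2} = \epsilon$ gives $\langle X_k,\chi_\lambda\rangle = \binom{k}{0}\cdot 1 = 1$, consistent with $\lambda = (N)$ contributing the trivial constituent of $X_k$ with multiplicity one.
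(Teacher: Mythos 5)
Your opening move is fine and is essentially dual to the paper's: recognizing $X_k$ as $\mathrm{Ind}_{\Sym_{N-k}}^{\Sym_N}\mathbf{1}$ and applying Frobenius reciprocity gives
\[
\langle X_k,\chi_\lambda\rangle
= \sum_{\sigma\vdash k}\dm\sigma\cdot c^\lambda_{(N-k),\sigma}
= \sum_{\substack{\sigma\vdash k\\ \lambda/\sigma\ \text{horiz.\ strip}}}\dm\sigma,
\]
which, for $k<N/2$, one can rewrite as $\sum_{\sigma\in\Lambda^+_k(\lambda_{\geq2})}\dm\sigma$ — exactly the paper's intermediate expression. However, the two steps you describe as ``short combinatorial checks'' are both false, and this is a genuine gap rather than a bookkeeping nuisance.

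First, it is not true that ``the only way to delete a size-$(N-k)$ horizontal strip from $\lambda$ is to delete $N-k$ boxes from the first row.'' The hypothesis $k<N/2$ guarantees that one \emph{can} remove the entire strip from the first row (since $\lambda_1\ge N-k$), but it does not prevent removing boxes from lower rows as well. For instance, with $\lambda=(N-2,2)$ and $k=5$, removing a horizontal strip of size $N-5$ can leave any of $(3,2)$, $(4,1)$, or $(5)$. So the residual diagram is not unique, and $\langle X_k,\chi_\lambda\rangle$ is genuinely a sum over many $\sigma$, not the dimension of a single irrep. Second, the dimension identity you lean on, $\dim\Spe{(k-|\lambda_{\geq2}|,\lambda_{\geq2})}=\binom{k}{|\lambda_{\geq2}|}\dm{\lambda_{\geq2}}$, is false: $\dm{(3,2)}=5$ while $\binom{5}{2}\dm{(2)}=10$. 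Your interleaving argument fails because the column-increasing constraint in a standard tableau couples the first row to the rows below it, so not every choice of which $|\lambda_{\geq2}|$ positions go to the lower rows yields a valid standard tableau. Your two errors happen to cancel (because $(k-|\lambda_{\geq2}|,\lambda_{\geq2})$ is one term of a sum whose total is $\binom{k}{|\lambda_{\geq2}|}\dm{\lambda_{\geq2}}$), which is why you still land on the stated formula, but neither step survives scrutiny.

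The correct way to finish from the intermediate sum, and what the paper does, is to recognize $\sum_{\sigma\in\Lambda^+_k(\theta)}\dm\sigma$ (with $\theta=\lambda_{\geq2}$) as the dimension of $\mathrm{Ind}_{\Sym_{|\theta|}\times\Sym_{k-|\theta|}}^{\Sym_k}\bigl(\Spe\theta\otimes\Spe{(k-|\theta|)}\bigr)$, which by the single-row Littlewood--Richardson rule~\eqref{eqn:IndLR} decomposes as $\bigoplus_{\sigma\in\Lambda^+_k(\theta)}\Spe\sigma$, and whose dimension is the index $[\Sym_k:\Sym_{|\theta|}\times\Sym_{k-|\theta|}]=\binom{k}{|\theta|}$ times $\dm\theta\cdot 1$. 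Substituting this dimension count in place of your last two steps repairs the argument.
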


Putting everything together, we get
\begin{multline*}
\mathrm{eval}_\lambda(k)
= \frac{N!\dm{\lambda}\,\binom{k}{|\lambda_{\geq2}|}\dm{\lambda_{\geq2}}}{\dm{\lambda}^2k!}
= \frac{N!}{\dm{\lambda}}\,\frac{\dm{\lambda_{\geq2}}}{|\lambda_{\geq2}|!\,(k-|\lambda_{\geq2}|)!} 
\\
= h(\lambda)\,\frac{1}{h(\lambda_{\geq2})\,(k-|\lambda_{\geq2}|)!}
= \frac{h_1(\lambda)}{(k-|\lambda_{\geq2}|)!}.
\end{multline*}

\begin{proof}[Proof of Claim \ref{clm:kn_perm_char}] 
The group $\S_k\times\S_N$ acts on $k$-permutations of $[N]$ in the natural way. By inducing from $\Sym_k\times(\Sym_k\times\Sym_{N-k})$ using the special case (\ref{eqn:IndLR}) of the Littlewood--Richardson rule, we get that the corresponding representation decomposes into $\S_k\times\S_N$ irreps as
\[
\bigoplus_{\xi\vdash k}\bigoplus_{\theta\in\Lambda^{-}(\xi)} \Spe{\xi} \otimes \Spe{(N-|\theta|,\theta)},
\]
where $\Lambda^{-}(\xi)$ is the set of all Young diagrams that can be obtained from $\xi$ by removing at most one box per each column. Think of $\theta$ as $\lambda_{\geq2}$. Hence, $\langle X_k,\chi_{(N-\theta,\theta)}\rangle$ equals the sum of $\dm\xi$ over all $\xi\vdash k$ such that $\theta\in\Lambda^{-}(\xi)$. Put in another way,
\[
\langle X_k,\chi_{(N-\theta,\theta)}\rangle = \sum_{\xi\in\Lambda_k^+(\theta)}\dm{\xi},
\]
where recall that $\Lambda_k^+(\theta)$ is the set of all $k$-box Young diagrams that can be obtained from $\theta$ by adding at most one box per each column.
The Littlewood--Richardson rules shows that $\Spe{\theta}\otimes\Spe{(k-|\theta|)}$ induced from $\S_{|\theta|}\times\S_{k-|\theta|}$ to $\S_k$ becomes
\[
\bigoplus_{\xi\in\Lambda_k^+(\theta)}\Spe{\xi},
\]
which implies
\[
\frac{\sum_{\xi\in\Lambda_k^+(\theta)}\dm\xi}{\dm\theta\cdot1}
= \frac{|S_k|}{|\S_{|\theta|}\times\S_{k-|\theta|}|}
= \binom{k}{|\theta|},
\]
as needed.
\end{proof}

\end{document}